\documentclass[acmsmall]{acmart}

\usepackage{quiver}
\usepackage{mathpartir}
\usepackage{enumitem}
\usepackage{wrapfig}
\usepackage{fancyvrb}
\usepackage{comment}
\usepackage{array}

\theoremstyle{remark}
\newtheorem{remark}{Remark}
\newenvironment{DIFnomarkup}{}{}

\begin{CCSXML}
  <ccs2012>
     <concept>
         <concept_id>10003752.10010124.10010131.10010133</concept_id>
         <concept_desc>Theory of computation~Denotational semantics</concept_desc>
         <concept_significance>500</concept_significance>
         </concept>
     <concept>
         <concept_id>10011007.10011006.10011039.10011311</concept_id>
         <concept_desc>Software and its engineering~Semantics</concept_desc>
         <concept_significance>500</concept_significance>
         </concept>
     <concept>
         <concept_id>10011007.10011006.10011008.10011009.10011012</concept_id>
         <concept_desc>Software and its engineering~Functional languages</concept_desc>
         <concept_significance>300</concept_significance>
         </concept>
   </ccs2012>
\end{CCSXML}
  
\ccsdesc[500]{Theory of computation~Denotational semantics}
\ccsdesc[500]{Software and its engineering~Semantics}
\ccsdesc[300]{Software and its engineering~Functional languages}

\keywords{Gradual Typing, Denotational Semantics, Synthetic Guarded Domain Theory, Guarded Type Theory}

\setcopyright{rightsretained}
\acmDOI{10.1145/3704863}
\acmYear{2025}
\acmJournal{PACMPL}
\acmVolume{9}
\acmNumber{POPL}
\acmArticle{27}
\acmMonth{1}
\received{2024-07-11}
\received[accepted]{2024-11-07}




\newcommand{\To}{\Rightarrow}
\newcommand{\inl}{\mathsf{inl}}
\newcommand{\inr}{\mathsf{inr}}
\newcommand{\alt}{\mathrel{\bf \,\mid\,}}


\newcommand{\dyn}{{?}}
\newcommand{\nat}{\text{Nat}}

\newcommand{\ra}{\rightharpoonup}

\newcommand{\dyntodyn}{\dyn \ra\, \dyn}

\newcommand{\upc}[1]{\text{up}\,{#1}\,}
\newcommand{\dnc}[1]{\text{dn}\,{#1}\,}

\newcommand{\err}{\mho}
\newcommand{\zro}{\textsf{zro}}
\newcommand{\suc}{\textsf{suc}}
\newcommand{\lda}[2]{\lambda {#1} . {#2}}

\newcommand{\rel}{\mathrel{{\relbar}\hspace{-2px}{\relbar}\hspace{-2px}{\shortmid}\hspace{-2px}{\relbar}\hspace{-2px}{\relbar}}}

\newcommand{\ltdyn}{\sqsubseteq}
\newcommand{\gtdyn}{\sqsupseteq}



\newcommand{\later}{{\vartriangleright}}
\newcommand{\laterhs}{{\later}}
\newcommand{\type}{\texttt{Type}}
\newcommand{\lob}{\text{L\"{o}b}}
\newcommand{\tick}{\mathsf{tick}}
\newcommand{\nxt}{\mathsf{next}}
\newcommand{\fix}{\mathsf{fix}}

\newcommand{\calV}{\mathcal{V}}
\newcommand{\calE}{\mathcal{E}}

\newcommand{\tok}{\mathrel{\mathop{\to}\limits^{\textrm{\footnotesize k}}}}
\newcommand{\timesk}{\mathrel{\mathop{\times}\limits^{\textrm{k}}}}

\newcommand{\op}[1]{{#1}^{\textrm{op}}}

\newcommand{\Set}{\mathsf{Set}}
\newcommand{\ErrDom}{\mathsf{ErrDom}}
\newcommand{\PreDom}{\mathsf{PreDom}}

\newcommand{\qfix}{\texttt{qfix}}

\newcommand{\Clock}{\mathsf{Clock}}

\newcommand{\Nat}{\mathsf{Nat}}

\newcommand{\li}{L_\mho}

\newcommand{\ltls}{\ltdyn}
\newcommand{\bisim}{\approx}

\newcommand{\semltbad}{\lesssim}

\newcommand{\id}{\mathsf{id}}


\newcommand{\sem}[1]{\llbracket {#1} \rrbracket}


\newcommand{\push}{\text{push}}
\newcommand{\pull}{\text{pull}}

\newcommand{\pv}{P^{\mathcal{V}}}
\newcommand{\pe}{P^{\mathcal{E}}}
\newcommand{\ptbv}{\text{ptb}^\mathcal{V}}
\newcommand{\ptbe}{\text{ptb}^\mathcal{E}}

\newcommand{\arr}{\to}
\newcommand{\comp}{}

\newcommand{\ltsq}[2]{\mathrel{\ltdyn^{#1}_{#2}}}
\newcommand{\bisimsq}[2]{\mathrel{\bisim^{#1}_{#2}}}
\newcommand{\ltsqbisim}[2]{\mathrel{{\widetilde{\ltdyn}}^{#1}_{#2}}}
\newcommand{\ltbisim}{\mathrel{\widetilde{\ltdyn}}}
\newcommand{\vf}{\mathcal{V}_f}

\newcommand{\ef}{\mathcal{E}_f}

\newcommand{\morbisimid}[1]{\text{Endo}_\bisim({#1})}

\newcommand{\binrel}[1]{\mathbin{#1}}

\newcommand{\da}{\downarrow}

\newcommand{\piv}{\Pi^\mathcal{V}}

\newcommand{\upl}{\textsc{UpL}}
\newcommand{\upr}{\textsc{UpR}}
\newcommand{\dnl}{\textsc{DnL}}
\newcommand{\dnr}{\textsc{DnR}}

\newcommand{\delre}{\delta^{r,e}}
\newcommand{\delle}{\delta^{l,e}}
\newcommand{\delrp}{\delta^{r,p}}
\newcommand{\dellp}{\delta^{l,p}}

\newcommand{\qordeq}{\bisim}

\newcommand{\inat}{\text{Inj}_\mathbb{N}}
\newcommand{\itimes}{\text{Inj}_\times}
\newcommand{\iarr}{\text{Inj}_\to}

\newcommand{\delay}{\text{Delay}}
\newcommand{\ledelay}{\le^{\text{Del}}}
\newcommand{\bisimdelay}{\bisim^{\text{Del}}}
\newcommand{\tnow}{\mathsf{now}}
\newcommand{\tlater}{\mathsf{later}}
\newcommand{\Da}{\Downarrow}






\newcommand{\ptb}{i}


\begin{document}

\title{Denotational Semantics of Gradual Typing using Synthetic Guarded Domain Theory (Extended Version)}
\author{Eric Giovannini}
\affiliation{
  \department{Electrical Engineering and Computer Science}
  \institution{University of Michigan}
  \country{USA}
}
\email{ericgio@umich.edu}
\orcid{0009-0003-6871-1714}

\author{Tingting Ding}
\affiliation{
  \department{Electrical Engineering and Computer Science}
  \institution{University of Michigan}
  \country{USA}
}
\email{tingtind@umich.edu}
\orcid{0009-0000-5676-1886}

\author{Max S. New}
\affiliation{
  \department{Electrical Engineering and Computer Science}
  \institution{University of Michigan}
  \country{USA}
}
\email{maxsnew@umich.edu}
\orcid{0000-0001-8141-195X}

\begin{abstract}
  Gradually typed programming languages, which allow for soundly
  mixing static and dynamically typed programming styles, present a
  strong challenge for metatheorists. Even the simplest sound
  gradually typed languages feature at least recursion and errors,
  with realistic languages featuring furthermore runtime allocation of
  memory locations and dynamic type tags. Further, the desired
  metatheoretic properties of gradually typed languages have become
  increasingly sophisticated: validity of type-based equational
  reasoning as well as the relational property known as
  graduality. Many recent works have tackled verifying these
  properties, but the resulting mathematical developments are highly
  repetitive and tedious, with few reusable theorems persisting across
  different developments.

  In this work, we present a new denotational semantics for gradual
  typing developed using guarded domain theory. Guarded domain theory
  combines the generality of step-indexed logical relations for
  modeling advanced programming features with the modularity and
  reusability of denotational semantics. We demonstrate the
  feasibility of this approach with a model of a simple gradually
  typed lambda calculus and prove the validity of beta-eta equality
  and the graduality theorem for the denotational model. This model
  should provide the basis for a reusable mathematical theory of
  gradually typed program semantics. Finally, we have mechanized most
  of the core theorems of our development in Guarded Cubical Agda, a
  recent extension of Agda with support for the guarded recursive
  constructions we use.
\end{abstract}

\maketitle







\newif\ifdraft
\drafttrue
\renewcommand{\max}[1]{\ifdraft{\color{blue}[{\bf Max}: #1]}\fi}
\newcommand{\eric}[1]{\ifdraft{\color{orange}[{\bf Eric}: #1]}\fi}
\newcommand{\tingting}[1]{\ifdraft{\color{red}[{\bf Tingting}: #1]}\fi}

\section{Introduction}
  
\subsection{Gradual Typing and Graduality}
In programming language design, there is a tension between \emph{static} typing
and \emph{dynamic} typing disciplines. With static typing, the code is
type-checked at compile time, while in dynamic typing, the type checking is
deferred to run-time. Both approaches have benefits and excel in different
scenarios, with static typing offering compile-time assurance of a program's
type safety and type-based reasoning principles that justify program
optimizations, and dynamic typing allowing for rapid prototyping of a codebase
without committing to fixed type signatures.
Most languages choose between static or dynamic typing and as a result,
programmers that initially write their code in a dynamically typed language need
to rewrite some or all of their codebase in a static language if they would like
to receive the benefits of static typing once their codebase has matured.

\emph{Gradually typed languages} \cite{siek-taha06, tobin-hochstadt06} seek to
resolve this tension by allowing for both static and dynamic typing disciplines
to be used in the same codebase. These languages support smooth interoperability
between statically-typed and dynamically-typed styles, allowing the programmer to
begin with fully dynamically-typed code and \emph{gradually} migrate portions of the
codebase to a statically typed style without needing to rewrite the project in a
completely different language.

One of the fundamental theorems for gradually typed languages is
\emph{graduality}, also known as the \emph{dynamic gradual guarantee},
originally defined by Siek, Vitousek, Cimini, and Boyland
\cite{siek_et_al:LIPIcs:2015:5031, new-ahmed2018}.
Informally, graduality says that migrating code from dynamic to
static typing should only allow for the introduction of static or
dynamic type errors, and not otherwise change the behavior of the
program.
This is a way to capture programmer intuition that increasing type
precision corresponds to a generalized form of runtime assertions in
that there are no observable behavioral changes up to the point of the
first dynamic type error\footnote{once a dynamic type error is raised,
in languages where the type error can be caught, program behavior may
then further diverge, but this is typically not modeled in gradual
calculi.}.
Fundamentally, this property comes down to the behavior of
\emph{runtime type casts}, which implement these generalized runtime
assertions.

Additionally, gradually typed languages should offer some of the
benefits of static typing. While standard type soundness, that
well-typed programs are free from runtime errors, is not compatible
with runtime type errors, it is possible instead to prove that
gradually typed languages validate \emph{type-based reasoning}. For
instance, while dynamically typed $\lambda$ calculi only satisfy
$\beta$ equality for their type formers, statically typed $\lambda$
calculi additionally satisfy type-dependent $\eta$ properties that
ensure that functions are determined by their behavior under
application and that pattern matching on data types is safe and
exhaustive. A gradually typed calculus that validates these
type-dependent $\eta$ laws then provides some of the type-based
reasoning that dynamic languages lack.


\subsection{Denotational Semantics in Guarded Domain Theory}

Our goal in this work is to provide an \emph{expressive},
\emph{reusable}, \emph{compositional} semantic framework for defining
such well-behaved semantics of gradually typed programs.
Our approach to achieving this goal is to provide a compositional
\emph{denotational semantics}, mapping types to a kind of semantic
domain, terms to functions and relations such as term precision to
proofs of semantic relations between the denoted functions.
Since the denotational constructions are all syntax-independent, the
constructions we provide may be reused for similar languages. Since it
is compositional, components can be mixed and matched depending on
what source language features are present.

Providing a semantics for gradual typing is inherently complicated in
that it involves: (1) recursion and recursive types through the
presence of dynamic types, (2) effects in the form of divergence and
errors (3) relational models in capturing the graduality
property. Recursion and recursive types must be handled using some
flavor of domain theory. Effects can be modeled using monads in the
style of Moggi, or adjunctions in the style of
Levy \cite{MOGGI199155,39155,levy99,levystacks}. Relational properties and their verification
lead naturally to the use of reflexive graph categories or double
categories \cite{dunphythesis,new-licata18}.

The only prior denotational semantics for gradual typing was given by
New and Licata and is based on a classical Scott-style \emph{domain
theory} \cite{new-licata18}. The fundamental idea is to equip
$\omega$-CPOs with an additional ``error ordering'' $\ltdyn$ which
models the graduality ordering, and for casts to arise from
\emph{embedding-projection pairs}. Then the graduality property
follows as long as all language constructs can be interpreted using
constructions that are monotone with respect to the error ordering.
This framework has the benefit of being compositional, and was
expressive enough to be extended to model dependently typed gradual
typing \cite{gradualizing-cic}.
However, an approach based on classical domain theory has fundamental
limitations: domain theory appears incapable of modeling certain perversely
recursive features of programming languages such as dynamic type tag
generation and higher-order references \cite{Birkedal-Stovring-Thamsborg-2009}, which are commonplace in
real-world gradually typed systems as well as gradual calculi.
Our long-term goal is to develop a denotational approach that can
scale up to these advanced features, and so we must abandon classical
domain theory as the foundation in order to make progress. In this
work, we provide first steps towards this goal by adapting work on a
simple gradually typed lambda calculus to the setting of \emph{guarded domain
theory}, a task that already involves nontrivial issues.
Our goal is for this model to scale to dynamic type tag
generation and higher-order references in future work.

Guarded domain theory is the main denotational alternative to
classical domain theory that can successfully model the advanced
features we aim to support. While classical domain theory is based on modeling types as
ordered sets with certain joins, guarded domain theory is based on an
entirely different foundations, sometimes (ultra)metric spaces but
more commonly as ``step-indexed sets'', i.e., objects in the
\emph{topos of trees}
\cite{birkedal-mogelberg-schwinghammer-stovring2011}.  Such an object
consists of a family $\{X_n\}_{n \in \mathbb{N}}$ of sets along with
restriction functions $r_n : X_{n+1} \to X_n$ for all $n$.  (in
category theoretic terminology, these are presheaves on the poset of
natural numbers.)  We think of a $\mathbb{N}$-indexed set as an
infinite sequence of increasingly precise approximations to the true
type being modeled.
Key to guarded domain theory is that there is an operator
$\triangleright$ on step-indexed sets called ``later''. In terms of
sequences of approximations, the later operator delays the
approximation by one step. Then the crucial axiom of guarded domain
theory is that any guarded domain equation $X \cong F(\triangleright
X)$ has a unique solution. This allows guarded domain theory to model
essentially \emph{any} recursive concept, with the caveat that the
recursion is \emph{guarded} by a use of the later operator.


While guarded domain theory can be presented analytically using
ultrametric spaces or the topos of trees, in practice it is
considerably simpler to work \emph{synthetically} by working in a
non-standard foundational system such as \emph{guarded type
theory}. In guarded type theory later is taken as a primitive
operation on types, and we take as an axiom that guarded domain
equations have a (necessarily unique) solution. The benefit of this
synthetic approach is that when working in the non-standard
foundation, we don't need to model an object language type as a
step-indexed set, but instead simply as a set, and object-language
terms can be modeled in the Kleisli category of a simple monad defined
using guarded recursion. Not only does this make on-paper reasoning
about guarded domain theory easier, it also enables a simpler avenue
to verification in a proof assistant. Whereas formalizing analytic
guarded domain theory would require a significant theory of presheaves
and making sure that all constructions are functors on categories of
presheaves, formalizing synthetic guarded domain theory can be done by
directly adding the later modality and the guarded fixed point
property axiomatically.
%
%
In this paper, we work informally in a guarded type theory which is described in
more detail in Section \ref{sec:guarded-type-theory}.

%


\subsection{Adapting the New-Licata Model to the Guarded Setting}


Since guarded domain theory only provides solutions to guarded domain equations,
there is no systematic way to convert a classical domain-theoretic semantics to
a guarded one.  Classical domain theory has limitations in what it can model,
but it provides \emph{exact} solutions to domain equations when it applies. When
adapting the New-Licata approach to guarded domain theory, the presence of later
in the semantics makes it \emph{intensional}: unfolding the dynamic type
requires an observable computational step. For example, a function that pattern
matches on an element of the dynamic type and then returns it unchanged is
\emph{not} equal to the identity function, because it ``costs'' a step to
perform the pattern match.


The intensional nature of guarded domain theory leads to the main departure of
our semantics from the New-Licata approach. Because casts involve computational
steps, the graduality property must be insensitive to the steps taken by terms.
This means that the model must allow for reasoning \emph{up to weak
bisimilarity}, where two terms are weakly bisimilar if they differ only in their
number of computational steps. However, the weak bisimilarity relation is not
transitive in the guarded setting, which follows from a no-go theorem we
establish (Theorem \ref{thm:no-go}). The New-Licata proofs freely use
transitivity, and we argue in Section \ref{sec:towards-relational-model} that
some amount of transitive reasoning is necessary for defining a
syntax-independent model of gradual typing. As a result, the lack of
transitivity of weak bisimilarity presents a challenge in adapting the
New-Licata model to the guarded setting. Our solution is to work with a
combination of weak bisimilarity and step-sensitive reasoning, where the latter
notion \emph{is} transitive. To deal with the fact that casts take computational
steps, we introduce the novel concept of \emph{syntactic perturbations}, which
are explicit synchronizations that we manipulate syntactically. The combination
of step-sensitive reasoning with perturbations is transitive and is able to account
for the step-insensitive nature of the graduality property.


\subsection{Adequacy}


Having defined the model of gradual typing in guarded type theory, we need to
show that the guarded model induces a sensible set-theoretic semantics in
ordinary mathematics. We cannot hope to derive such a semantics for \emph{all}
types (e.g., the dynamic type), but for the subset of closed terms of base type,
we should be able to extract a well-behaved semantics for which the graduality
property holds.

More concretely, consider a gradually typed language whose only effects are
gradual type errors and divergence (errors arise from failing casts; divergence
arises because we can use the dynamic type to encode untyped lambda calculus;
see Section \ref{sec:GTLC}). If we fix a result type of natural numbers, a
``big-step'' semantics is a partial function from closed programs to either natural
numbers or errors:
\[ -\Downarrow : \{M \,|\, \cdot \vdash M : \nat \} \rightharpoonup \mathbb{N}
\cup \{\mho\} \] 
where $\mho$ is notation for a runtime type error. We write $M \Downarrow n$ and
$M\Downarrow \mho$ to mean this semantics is defined as a number or error, and
$M\Uparrow$ to mean the semantics is undefined, representing divergence.
A well-behaved semantics should then satisfy several properties. First, it
should be \emph{adequate}: natural number constants should evaluate to
themselves: $n \Downarrow n$. Second, it should validate type based reasoning.
To formalize type based reasoning, languages typically have an equational theory
$M \equiv N$ specifying when two terms should be considered equivalent.
Then we want to verify that the big step semantics respects this equational
theory: if closed programs $M \equiv N$ are equivalent in the equational theory
then they have the same semantics, $M \Downarrow n \iff N \Downarrow n$ and
$M\Uparrow \iff N \Uparrow$ and $M \Downarrow \mho \iff N \Downarrow \mho$.

We have to take care when defining such a big-step semantics in
guarded type theory, as a naive definition of termination is
incorrect: we can prove using guarded recursion that an infinite loop
terminates! This is because the definition of termination inside
guarded logic has a different meaning, essentially meaning that for
every finite number $n$, either the program terminates in $< n$ steps
or it takes $n+1$ steps, which is equivalent to a statement of
progress rather than termination. For this reason, we need a kind of
``escape hatch'' in our ambient type theory to leave the guarded
setting and make definitions that are interpreted as constructions in
ordinary mathematics. We incorporate this by using
\emph{clock-quantification} \cite{atkey-mcbride2013, kristensen-mogelberg-vezzosi2022}
in our type theory. In particular, the guarded type theory in which we work has
a notion of clocks that index all of our guarded constructions. This is
discussed in more detail in Section \ref{sec:big-step-term-semantics}.

To define the big-step semantics, we apply clock quantification to obtain
an interpretation of closed programs of base type as partial functions.
We describe the process in more detail in Section
\ref{sec:big-step-term-semantics}. The resulting term semantics is adequate, and
furthermore it validates the equational theory, since equivalent terms in the
equational theory denote equal terms in the big-step semantics.

Lastly, the semantics should be adequate for the graduality property. Graduality
is typically axiomatized by giving an \emph{inequational} theory called term
precision, where $M \ltdyn N$ roughly means that $M$ and $N$ have the same type
erasure and $M$ has at each point in the program a more precise/static type than
$N$. Then adequacy says that if $M$ and $N$ are whole programs returning type
$\nat$ and $M \ltdyn N$, then the denotations of $M$ and $N$ as big-step terms
should be related in the expected way: either $M$ errors, or $M$ and $N$ have
the same extensional behavior. That is, either $M\Downarrow \mho$ or $M
\Downarrow n $ and $N \Downarrow n$ or $M \Uparrow $ and $N
\Uparrow$\footnote{we use a slightly more complex definition of this relation in
our technical development below that is classically equivalent but
constructively weaker}.
To prove that the semantics is adequate for graduality, we again apply clock
quantification, this time to the relation that denotes the term precision
ordering. We show that a term precision ordering $M \ltdyn N$ implies the
corresponding ordering on the partial functions denoted by $M$ and $N$. The
details of the proof are given in Section \ref{sec:adequacy}.



\subsection{Contributions and Outline}

The main contribution of this work is a compositional denotational semantics in
guarded type theory for a simple gradually typed language that validates
$\beta\eta$ equality and satisfies a graduality theorem.
%
%
Within our semantics, the notion of syntactic perturbation is our most
significant technical contribution, allowing us to successfully adapt the
denotational approach of New and Licata to the guarded setting.
Most of our work has further been verified in Guarded Cubical Agda
\cite{veltri-vezzosi2020}, demonstrating that the semantics is readily
mechanizable. The Agda formalization is available online \cite{artifact}.
We provide an overview of the mechanization effort, and what
remains to be formalized, in Section \ref{sec:mechanization}.
%

%

The paper is laid out as follows:
\begin{enumerate}
\item In Section \ref{sec:GTLC} we fix our input language, a fairly
  typical gradually typed cast calculus.
\item In Section \ref{sec:concrete-term-model} we develop a
  denotational semantics in synthetic guarded domain theory for the
  \emph{terms} of the gradual lambda calculus.  The model is adequate
  and validates the equational theory, but it does not satisfy
  graduality. We use this to introduce some of our main technical
  tools: modeling recursive types in guarded type theory and modeling
  effects using call-by-push-value.
\item In Section \ref{sec:towards-relational-model} we show where the New-Licata
  classical domain theoretic approach fails to adapt cleanly to the guarded
  setting and explore the difficulties of proving graduality in an intensional
  model. We prove a no-go theorem about extensional, transitive relations,
  introduce the lock-step error ordering and weak bisimilarity relation, and
  motivate the need for perturbations.
\item In Section \ref{sec:concrete-relational-model} we describe the
  construction of the model in detail, and discuss the
  translation of the syntax and axioms of the gradually typed cast calculus into the model.
  Lastly, we prove that the model is adequate for the graduality property.
\item In Section \ref{sec:discussion} we discuss prior work on proving
  graduality, the partial mechanization of our results in Agda, and
  future directions for denotational semantics of gradual typing.
\end{enumerate}



\section{Syntactic Theory of Gradually Typed Lambda Calculus}\label{sec:GTLC}

Here we give an overview of a fairly standard cast calculus for
gradual typing along with its (in-)equational theory that capture our
desired notion of type-based reasoning and graduality. The main
departure from prior work is our explicit treatment of type precision
derivations and an equational theory of those derivations.

We give the basic syntax and select typing rules in
Figure~\ref{fig:gtlc-syntax}. We include a dynamic type, a type of
numbers, the call-by-value function type $A \ra A'$ and products.
We include a syntax for \emph{type precision} derivations $c : A
\ltdyn A'$; the typing is given in Figure~\ref{fig:gtlc-syntax}.
Any type precision derivation $c : A \ltdyn A'$ induces a pair of
casts, the upcast $\upc c : A \ra A'$ and the downcast $\dnc c : A' \ra
A$.
The syntactic intuition is that $c$ is a proof that $A$ is ``less dynamic'' than
$A'$. Semantically, $c$ denotes a relation between the denotations of $A$ and
$A'$ along with coercions back and forth; the upcast is (to a first-order)
a pure function while the downcast can fail.
These casts are inserted automatically in an elaboration from a
surface language. In this work, we are focused on semantic aspects and
so elide these standard details.
The syntax of precision derivations includes reflexivity $r(A)$ and
transitivity $cc'$ as well as monotonicity $c \ra c'$ and $c \times
c'$ that are \emph{covariant} in all arguments and finally generators
$\inat,\iarr,\itimes$ that correspond to the type tags of our dynamic
type.
We additionally impose an equational theory $c \equiv c'$ on the
derivations that implies that the corresponding casts are weakly
bisimilar in the semantics.
We impose category axioms for the reflexivity and
transitivity and functoriality for the monotonicity rules.
We note the following two admissible principles: any two derivations
$c,c' : A \ltdyn A'$ of the same fact are equivalent $c \equiv c'$ and
for any $A$, there is a derivation $\textrm{dyn}(A): A \ltdyn\dyn$. That is, $\dyn$ is the ``most dynamic'' type.

Our presentation of type precision is non-standard. In the usual
formulation, reflexivity and transitivity are admissible, and whenever
$A \ltdyn A'$, there is a unique precision derivation witnessing
this. 
%
These usual rules are given in Section \ref{sec:appendix-gtlc-syntax} in the
Appendix.
%
%
The reason for our presentation choice has to do with our
semantics, in which type precision derivations denote relations
between semantic elements of the types. In our semantics,
equivalent type precision derivations do not denote \emph{equal}
relations, but instead they denote relations that are only
\emph{quasi-equivalent}, i.e., if two terms are related by one then
they are related also by the other up to insertion of delays (see
Definition \ref{def:quasi-equivalent} for the details).
However, because all type precision derivations in our system are
equivalent, it is straightforward to define a translation from the
more standard system of type and term precision into ours, so
ultimately our graduality proofs still apply to the standard
formulation.

\begin{figure}
  \begin{mathpar}
    \begin{array}{rcl}
    \text{Types } A &::=& \nat \alt \,\dyn \alt A \ra A' \alt A \times A'\\
    \text{Type Precision } c &::=& r(A) \alt c c' \alt \iarr \alt \inat \alt \itimes \alt c \ra c' \alt c \times c'\\
    \text{Values } V &::=& x \alt \upc c V \alt \zro \alt \suc\, V \alt \lda{x}{M} \alt (V,V') \\ 
    \text{Terms } M,N &::=& \err\alt \upc c M \alt \dnc c M \alt \zro \alt \suc\, M \alt \lda{x}{M} \\ 
     &&\alt M\, N \alt (M,N) \alt \textrm{let } (x,y) = M \textrm{ in } N\\
    \text{Contexts } \Gamma &::= &\cdot \alt \Gamma, x : A \\
    \text{Ctx Precision } \Delta &::=& \cdot\alt \Delta,x:c
  \end{array}

  \inferrule
  {\Gamma \vdash M : A \and c : A \ltdyn A'}
  {\Gamma \vdash \upc c M : A'}

  \inferrule
  {\Gamma \vdash N : A' \and c : A \ltdyn A'}
  {\Gamma \vdash \dnc c N : A}

  \inferrule{}{\Gamma \vdash \mho : A}
  \end{mathpar}
  \begin{mathpar}
    \inferrule{}{r(A) : A \ltdyn A}\and
    \inferrule{c : A \ltdyn A' \and c' : A' \ltdyn A''}{cc' : A \ltdyn A''}\and
    \inferrule{}{\iarr \colon \dyn \ra \dyn \ltdyn \dyn}\and
    \inferrule{}{\inat \colon \nat \ltdyn \dyn}\and
    \inferrule{}{\itimes \colon \dyn \times \dyn \ltdyn \dyn}\and
    \inferrule{c_i : A_i \ltdyn A_i' \and c_o : A_o \ltdyn A_o'}{c_i \ra c_o : (A_i \ra A_o) \ltdyn (A_i' \ra A_o')}\and
    \inferrule{c_1 : A_1 \ltdyn A_1' \and c_2 : A_2 \ltdyn A_2'}{c_1 \times c_2 : (A_1 \times A_2) \ltdyn (A_1' \times A_2')}\and
     r(A)c \equiv c\and
     c \equiv cr(A')\and
     c(c'c'') \equiv (cc')c''\and
     r(A_i \ra A_o) \equiv r(A_i) \ra r(A_o)\and
     r(A_1\times A_2) \equiv r(A_1) \times r(A_2)\and
     (c_i \ra c_o)(c_i' \ra c_o')\equiv (c_ic_i' \ra c_oc_o') \and
     (c_1\times c_2)(c_1'\times c_2')\equiv (c_1c_1' \times c_2c_2')
  \end{mathpar}
  \caption{GTLC Cast Calculus Syntax, Type Precision Derivations and Precision Equivalence}
  \label{fig:gtlc-syntax}
\end{figure}

Next, we consider the axiomatic (in)equational reasoning principles
for terms: $\beta\eta$ equality and term precision in
Figure~\ref{fig:term-prec}.
We include standard CBV $\beta\eta$ rules for function and product
types. 
Next, we have \emph{term} precision, an extension of
type precision to terms.
The form of the term precision rule is $\Delta \vdash M \ltdyn M' : c$
where $\Delta$ is a context where variables are assigned to type
precision derivations.
The judgment is only well formed when every use of $x : c$ for $c : A
\ltdyn A'$ is used with type $A$ in $M$ and $A'$ in $M'$ and similarly
the output types match $c$.
We elide the congruence rules for every type constructor, e.g., if
$M \ltdyn M'$ and $N \ltdyn N'$ then $M\,N \ltdyn M'\,N'$.
With such congruence rules, reflexivity $M \ltdyn M$ is
admissible. Transitivity, on the other hand, is intentionally not
taken as a primitive rule, matching the original formulation of the
dynamic gradual guarantee \cite{siek_et_al:LIPIcs:2015:5031}.
We include a rule that says that equivalent type precision derivations
$c \equiv c'$ are equivalent for the purposes of term precision.
%
%
%
Finally, we include 4 rules for reasoning about casts. Intuitively
these say that the upcast is a kind of \emph{least upper bound} and
dually that the downcast is a \emph{greatest lower bound}.

As a higher-order gradually typed language, we inherently have to deal
with two effects: errors and divergence. Errors arise from failing
casts, e.g. casting a number to dynamic to a function
$\dnc{\iarr}\upc{\inat} x$. Divergence arises because our dynamic type
allows us to encode untyped lambda calculus, and so we can encode the
$\Omega$ term with the help of casts:
\[ \Omega = (\lambda (x:\dyn). (\dnc{\iarr} x)\, x)\, (\upc{\iarr}(\lambda (x:\dyn). (\dnc{\iarr} x)\, x)). \]

\begin{figure}
  \begin{mathpar}
  (\lambda x. M)(V) = M[V/x] \and (V : A \ra A') = \lambda x. V\,x\\

   \textrm{let } (x,y) = (V,V') \textrm{ in } N = N[V/x,V'/y] \and
   M[V:A\times A'/p] = \textrm{let } (x,y) = V \textrm{ in } M[(x,y)/p]


  \inferrule*[right=EquivTyPrec]
  {\Delta\vdash M \ltdyn M' : c \and c \equiv c'}
  {\Delta\vdash M \ltdyn M' : c'}

  \inferrule*[right=ErrBot]
  {}
  {\Delta \vdash \mho \ltdyn M : c}


  \inferrule*[right=UpL]
  {M \ltdyn M' : cc_r}
  {\upc {c} M \ltdyn M' : c_r}

  \inferrule*[right=UpR]
  {M \ltdyn M' : c_l}
  {M \ltdyn \upc {c} M' : c_lc}

  \inferrule*[right=DnL]
  {M \ltdyn M' : c_r}
  {\dnc {c} M \ltdyn M' : cc_r}

  \inferrule*[right=DnR]
  {M \ltdyn M' : c_lc}
  {M \ltdyn \dnc {c} M' : c_l}
  \end{mathpar}
  \caption{Equality and Term Precision Rules (Selected)}
  \label{fig:term-prec}
\end{figure}

Our goal in the remainder of this work is to develop compositional
denotational semantics of types, terms, type and term precision from
which we can easily extract a big step semantics that satisfies
graduality and respects the equational theory of the calculus.

\section{A Denotational Semantics for Types and Terms}\label{sec:concrete-term-model}

As a first step to giving a well-behaved semantics of GTLC, in this
section we define a simpler semantics that validates all of our
desired properties except for graduality: that is, it is adequate and
validates $\beta\eta$ equality. Our relational semantics will then be
a refinement of this simpler design. In the process we will introduce
some of the technical tools we will use in the refined semantics:
guarded type theory and call-by-push-value. While these tools can be
somewhat technical we emphasize that due to the use of synthetic
guarded domain theory, this initial semantics is essentially just a
``na\"\i ve'' denotational semantics of an effectful language arising
from a monad on sets. But since we are working in the non-standard
foundation of guarded type theory our notion of sets allows for the
construction of more sophisticated sets and monads than classical
logic allows.

%

\subsection{Guarded Type Theory}\label{sec:guarded-type-theory}

We begin with a brief overview of the background theory that we use in this work.
\emph{Synthetic guarded domain theory}, or SGDT for
short, is an axiomatic framework in which we can reason about non-well-founded
recursive constructions while abstracting away the specific details of
step-indexing that we would need to track if we were working analytically. This
allows us to avoid the tedious reasoning associated with traditional
step-indexing techniques. We provide a brief overview here; more details can be
found in \cite{birkedal-mogelberg-schwinghammer-stovring2011}.

SGDT offers a synthetic approach to domain theory that allows for guarded
recursion to be expressed syntactically via a type constructor $\later \colon
\type \to \type$ (pronounced ``later''). The use of a modality to express
guarded recursion was introduced by Nakano \cite{Nakano2000}.
Given a type $A$, the type $\later A$ represents an element of type $A$ that is
available one time step later. There is an operator $\nxt : A \to\, \later A$
that ``delays'' an element available now to make it available later. We will use
a tilde to denote a term of type $\later A$, e.g., $\tilde{M}$.


There is a \emph{guarded fixpoint} operator
\[ \fix : \forall T, (\later T \to T) \to T. \]
That is, to construct a term of type $T$, it suffices to assume that we have
access to such a term ``later'' and use that to help us build a term ``now''.
This operator satisfies the axiom that $\fix f = f (\nxt (\fix f))$.
In particular, $\fix$ applies when type $T$ is instantiated to a proposition 
$P : \texttt{Prop}$; in that case, it corresponds to $\lob$-induction.

\subsubsection{Ticked Cubical Type Theory}

Ticked Cubical Type Theory \cite{mogelberg-veltri2019} is an extension of
Cubical Type Theory \cite{CohenCoquandHuberMortberg2017} that has an additional
sort called \emph{ticks}. Ticks were originally introduced in
\cite{bahr-grathwohl-bugge-mogelberg2017}. A tick $t : \tick$ serves as evidence
that one unit of time has passed. In Ticked Cubical Type Theory, the type
$\later A$ is the type of dependent functions from ticks to $A$. The type $A$ is
allowed to depend on $t$, in which case we write $\later_t A$ to emphasize the
dependence.



The rules for tick abstraction and application are similar to those of ordinary
$\Pi$ types. A context now consists of ordinary variables $x : A$ as well as
tick variables $t : \tick$. The presence of the tick variable $t$ in context
$\Gamma, (t : \tick), \Gamma'$ intuitively means that the values of the
variables in $\Gamma$ arrive ``first'', then one time step occurs, and then the
values of the variables in $\Gamma'$ arrive.
The abstraction rule for ticks states that if in context $\Gamma, t : \tick$
the term $M$ has type $A$, then in context $\Gamma$ the term $\lambda t.M$ has
type $\later_t A$.
Conversely, if we have a term $M$ of type $\later A$, and we have available in
the context a tick $t' : \tick$, then we can apply the tick to $M$ to get a
term $M[t'] : A[t'/t]$. However, there is an important restriction on when we
are allowed to apply ticks: To apply $M$ to tick $t$, we require $M$ to be
well-typed in the prefix of the context occurring before the tick $t$. That is,
all variables mentioned in $M$ must be available before $t$. This ensures that
we cannot, for example, define a term of type $\later \laterhs A \to\, \laterhs
A$ via repeated tick application.
%
For the sake of brevity, we will also write tick application as $M_t$.
The constructions we carry out in this paper will take place in a guarded type
theory with ticks.
\subsection{A Call-by-Push-Value Model}


Our denotational model of GTLC must account for the fact that the
language is call-by-value with non-trivial effects (divergence and
errors).  To do this, we formulate the model using the categorical
structures of Levy's \emph{call-by-push-value}
\cite{levy99}. Call-by-push-value is a categorical model of effects
that is a refinement of Moggi's monadic semantics
\cite{MOGGI199155}. While Moggi works with a (strong) monad $T$ on a
category $\mathcal V$ of pure morphisms, call-by-push-value works
instead with a decomposition of such a monad into an adjunction
between a category $\mathcal V$ of ``value types'' and pure morphisms
and a category $\mathcal E$ of ``computation types'' and homomorphisms.
The monad $T$ on $\calV$ is decomposed into a left adjoint
$F : \calV \to \calE$ constructing the type $F A$ of computations that
return $A$ values and right adjoint $U : \calE \to \calV$
constructing the type $U B$ of values that are first-class suspended
computations of type $B$. The monad $T$ can then be reconstituted as
$T = UF$, but the main advantage is that many constructions on
effectful programs naturally decompose into constructions on
value/computation types. Most strikingly, the CBV function type $A
\rightharpoonup A'$ decomposes into the composition of three
constructions $U(A \to F A')$ where $\arr : \calV^{op} \times \calE
\to \calE$ is a kind of mixed kinded function type. This decomposition
considerably simplifies treatment of the CBV function
type. Additionally, prior work on gradual typing semantics has shown
that the casts of gradual typing are naturally formulated in terms of
pairs of a pure morphism and a homomorphism, as we explain below
\cite{new-licata-ahmed2019}.

For our simple denotational semantics, the category $\calV$ is simply
the category $\Set$ of sets and functions. Then because GTLC is
call-by-value, all types will be interpreted as objects of this
category, i.e., sets. More interesting are the computation
types. These will be interpreted in a category of sets equipped with
algebraic\footnote{The $\theta$ structure is not algebraic in the
strictest sense since it does not have finite arity.} structure with
morphisms being functions that are homomorphic in this structure.
This algebraic structure needs to model the effects present in our
language, in this case these are errors as well as taking
computational steps (potentially diverging by taking computational
steps forever).
We call these algebraic structures simple error domains to distinguish
them from the error domains we define later to model graduality:
\begin{definition}
  A (simple) error domain $B$ consists of
  \begin{enumerate}
  \item A carrier set $UB$
  \item An element $\mho_{B} : UB$ representing error
  \item A function $\theta_B : \laterhs UB \to UB$ modeling a computational step
  \end{enumerate}
  A homomorphism of error domains $\phi : B \multimap B'$ is a
  function $\phi : UB \to UB'$ that preserves $\mho$ and $\theta$, i.e.,
  $\phi(\mho_B) = \mho_{B'}$ and $\phi(\theta_{B}(\tilde{x})) =
  \theta_{B'}(\later (\phi(\tilde{x})))$.  Simple error domains and
  homomorphisms assemble into a category $\ErrDom$ with a forgetful
  functor $U : \ErrDom \to \Set$ taking the underlying set and
  function.
\end{definition}
Here is our first point where we utilize guarded type theory: rather
than simply being a function $UB \to UB$, the ``think'' map $\theta$
takes an element \emph{later}. This makes a major difference, because
the structure of a think map combined with the guarded fixpoint
operator allows us to define recursive elements of $UB$ in that any
function $f : UB \to UB$ has a ``quasi-fixed point'' $\textrm{qfix}(f)
= \fix(f \circ \theta_B)$ satisfying the quasi-fixed point property:
\[ \qfix(f) = f(\delta_B(\qfix f)) \]
where $\delta_B = \theta_B \circ \nxt$ is a map we call the ``delay''
map which trivially delays an element now to be available later.  As
an example, we can define $\Omega_B = \qfix(\id) : UB$, the
``diverging element'' that ``thinks forever'' in that $\Omega_B =
\delta_B(\Omega_B)$. We call this a ``quasi'' fixed point because it is
\emph{nearly} a fixed point except for the presence of the delay map
$\delta_B$, which is irrelevant from an extensional point of view
where we would prefer to ignore differences in the number of steps
that computations take.


Then to model effectful programs, we need to construct a left adjoint
$\li : \Set \to \ErrDom$ to $U$ which constructs the ``free error
domain'' on a set, which models an effectful CBV computation.
We base the construction on the \emph{guarded
lift monad} described in \cite{mogelberg-paviotti2016}. Here, we augment the
guarded lift monad to accommodate the additional effect of failure.
\begin{definition}
  For a set $A$, we define the (carrier of) \emph{free error domain} $U(\li A)$ as the unique solution to the guarded domain equation:
  \[ U(\li A) \cong A + 1\, + \laterhs U(\li A). \]
  For which we use the following notation for the three constructors:
  \begin{enumerate}
  \item $\eta \colon A \to U(\li A)$
  \item $\mho \colon U(\li A)$
  \item $\theta \colon \laterhs U(\li A) \to U(\li A)$
  \end{enumerate}
  Here $\mho, \theta$ provide the error domain structure for $\li A$,
  and this has the universal property of being free in that any
  function $f : U(\li A) \to UB$ uniquely extends to a homomorphism
  $f^\dagger : \li A \multimap B$ satisfying $Uf^\dagger \circ \eta =
  f$. In other words, the free error monad is left adjoint $\li \dashv
  U$ to the forgetful functor, with $\eta$ the unit of the adjunction.
  We will write $\theta_t(\dots)$ to mean $\theta (\lambda t. \dots)$.
\end{definition}
%
An element of $\li A$ should be viewed as a computation that can either (1)
return a value (via $\eta$), (2) raise an error and stop (via $\mho$), or (3)
take a computational step (via $\theta$).
The operation $f^\dagger$ can be defined by guarded recursion in the
apparent way, and proven to satisfy the freeness property by L\"ob
induction.
%
%




The free error domain is essential to modeling CBV computation: a term
$\Gamma \vdash M : A$ in GTLC will be modeled as a function $M :
\sem{\Gamma} \to U\li \sem{A}$ from the set denoted by $\Gamma$ to the
underlying set of the free error domain on the set denoted by $A$,
just as in a monadic semantics with monad $U\li$.

\subsection{Modeling the Type Structure}\label{sec:dynamic-type}
Much of the type structure of GTLC is simple to model: $\nat$ denotes
the set of natural numbers and $\times$ denotes the Cartesian product
of sets. To model the CBV function type $A \rightharpoonup A'$ we use
the CBPV decomposition $U(\sem{A} \to \li \sem{A'})$ where $A \to B$
is the obvious point-wise algebraic structure on the set of functions
from $A$ to $UB$.

The final type to model is the the dynamic type $\dyn$.  In the style
of Scott, a classical domain theoretic model for $\dyn$ would be given
by solving a domain equation
\[ D \cong \Nat + (D \times D) + U(D \to \li D), \]
representing the fact that dynamically typed values can be numbers,
pairs or CBV functions. This equation does not have inductive or
coinductive solutions due to the negative occurrence of $D$ in the
domain of the function type. In classical domain theory this is
avoided by instead taking the initial algebra in a category of
embedding-projection pairs, by which we obtain an exact solution to
this equation. However, in guarded domain theory we instead replace this domain equation with a guarded domain equation.
Consider the following similar-looking equation:
\begin{equation}\label{eq:dyn}
D \cong \Nat + (D \times D)\, + \laterhs U(D \to \li D).
\end{equation}
Since the negative occurrence of $D$ is guarded under a later, we can
solve this equation by a mixture of inductive types and guarded fixed
points. Specifically, consider the parameterized inductive type
\[ D'\, X := \mu T. \Nat + (T \times T) + X. \]
Then we can construct $D$ as the unique solution to the guarded domain equation
\[ D \cong D'(\laterhs U(D \to \li D)) \]
which is a guarded equation in that all occurrences of $D$ on the right hand side are guarded by the $\laterhs$.
Expanding out the guarded fixed point and least fixed point property
gives us that $D$ satisfies Equation $\ref{eq:dyn}$. 
%
We then refer to the three injections for numbers, pairs and functions
as $\inat, \itimes, \iarr$ respectively.

\subsection{Term Semantics}\label{sec:term-interpretation}

We now extend our semantics of types to a semantics of terms of
GTLC. As mentioned above, terms $x:A_1,\ldots \vdash M : A$ are
modeled as effectful functions $\sem{M}: \sem{A_1}\times \cdots \to U\li\sem{A}$
and values $x:A_1,\ldots \vdash V : A$ are also modeled as pure
functions $\sem{V} : \sem{A_1}\times\cdots \to \sem{A}$.
The interpretation of the typical CBV features of GTLC into a model of CBPV is
standard \cite{levy99}; the only additional part we need to account for
is the semantics of casts, shown in Figure \ref{fig:term-semantics}.
%
%
Following the CBPV treatment of gradual typing given by
New-Licata-Ahmed, an upcast $\upc c$ where $c : A \ltdyn A'$ is
interpreted as a pure function $\sem{\upc c} :\sem{A} \to \sem{A'}$ whereas a
downcast $\dnc c$ is interpreted as a homomorphism $\sem{\dnc c} :
\li\sem{A'} \multimap \li\sem{A}$.
We define the semantics of upcasts and downcasts simultaneously by
recursion over type precision derivations.
The reflexivity casts are given by identities, and the transitivity
casts by composition. The upcasts for products, and the downcasts for
functions are given by the functorial actions of the type constructors
on the casts for the subformulae.
The \emph{up}casts for functions and similarly the \emph{down}casts
for products, on the other hand, are more complex.
For these, we use not the ordinary functorial action of type
constructors, but an action we call the \emph{Kleisli} action.

The Kleisli action of type constructors can be defined in any CBPV
model. First, we can define the Kleisli category of value types
$\calV_k$ to have value types as objects, but have morphisms
$\calV_k(A,A') = \calE(FA,FA')$. Similarly, we can define a Kleisli
category of computation types $\calE_k$ to have computation types as
objects, but have morphisms $\calE_k(B,B') = \calV(UB,UB')$. The
intuition is that these are the ``effectful'' morphisms between
value/computation types, whereas the morphisms of $\calV$ and $\calE$ are
pure morphisms and homomorphisms, respectively. Then every CBPV type constructor $F,U,\times,\to$
has in addition to its usual functorial action on pure morphisms/homomorphisms
an action in each argument on Kleisli morphisms. So for
instance, the function construction is functorial in pure morphisms/homomorphisms
$\to : \op{\calV} \times \calE \to \calE$ but on Kleisli
morphisms what we have is for each computation type $B$ a functorial
action $-\to B : \calV_k^{op} \to \calE_k$ given by $-\tok B$ on
objects and for each value type a functorial action $A \tok - :
\calE_k \to \calE_k$. Similarly we get two actions $A \timesk -$ and
$-\timesk A'$ on Kleisli morphisms between value types. In these two
cases we do not get a bifunctorial action because the two Kleisli
actions do not commute past each other, i.e., in general $(A_l'
\timesk f_r \circ f_l \timesk A_r) \neq (f_l \timesk A_r' \circ A_l
\timesk f_r)$.
%
The full definitions of the Kleisli actions are included in Appendix
\ref{sec:kleisli-actions}.
%
%

Lastly, we have the upcasts and downcasts for the injections into the dynamic type, which
are the core primitive casts. The upcasts are simply the injections
themselves, except the function case which must include a $\nxt$ to
account for the fact that the functions are under a later in the
dynamic type. The downcasts are similar in that on values
they pattern match on the input and return it if it is of the correct
type, otherwise erroring. Again, the function case is slightly
different in that if the input is in the function case, then it is
actually only a function available later, and so we must insert a
``think'' in order to return it.

\begin{figure*}
  \begin{minipage}{0.3\textwidth}
    \begin{footnotesize}
  \begin{align*}
    \sem{\upc{r(A)}} &= \id_{\sem{A}} \\
    \sem{\upc{(c \comp c')}} &= \sem{\upc{c'}} \circ \sem{\upc{c}} \\
    \sem{\upc{(c_i \ra c_o)}} &= (\sem{\dnc {c_i}} \tok \li\sem{A_o'})\circ (\sem{A_i} \tok U\li(\sem{\upc {c_o}}))\\
    \sem{\upc{(c_1 \times c_2)}} &= \sem{\upc{c_1}} \times \sem{\upc{c_2}}\\
    \sem{\upc{\inat}} &= \inat\\
    \sem{\upc{\itimes}} &= \itimes\\
    \sem{\upc{\iarr}} &= \iarr \circ \nxt\\
    %
    \sem{\dnc{r(A)}} &= \id_{\li \sem{A}} \\
    \sem{\dnc{(c \comp c')}} &= \sem{\dnc{c}} \circ \sem{\dnc{c'}} \\
    \sem{\dnc{(c_i \ra c_o)}} &= U(\sem{\upc {c_i}} \to \sem{\dnc {c_o}})\\
    \sem{\dnc{(c_1 \times c_2)}} &= (\sem{\dnc{c_1}} \timesk \sem{A_2}) \circ (\sem{A_1'} \timesk \sem{\dnc{c_2}}) \\
    \sem{\dnc{\inat}} &= (
      \lambda V_d . \text{case $({V_d})$ of }
      \{ \inat\,n \to \eta\, n
         \alt \text{otherwise} \to \mho \})^\dagger \\
    \sem{\dnc{\itimes}} &= (
      \lambda V_d . \text{case $({V_d})$ of }
      \{ \itimes(d_1, d_2) \to \eta\, (d_1, d_2)
         \alt \text{otherwise} \to \mho \})^\dagger \\
    \sem{\dnc{\iarr}} &= (
      \lambda V_d . \text{case $({V_d})$ of }
      \{ \iarr \tilde{f} \to \theta_t\, (\eta (\tilde{f}_t))
         \alt \text{otherwise} \to \mho \})^\dagger \\
  \end{align*}
    \end{footnotesize}
  \end{minipage}
  \caption{Semantics of casts}
  \label{fig:term-semantics}
\end{figure*}

\subsection{Extracting a Well-Behaved Big-Step Semantics}\label{sec:big-step-term-semantics}


We now define from the above term model a ``big-step'' semantics. More
concretely, our goal is to define a partial function 
$-\Downarrow : \{M \,|\, \cdot \vdash M : \nat \} \rightharpoonup \mathbb{N} + {\mho}$.

To begin, we must first discuss another aspect of guarded type theory. The
constructions of guarded type theory ($\later$, $\nxt$, and $\fix$) we have been
using may be indexed by objects called \emph{clocks} \cite{atkey-mcbride2013}. A clock intuitively serves
as a reference relative to which steps are counted. For instance, given a clock
$k$ and type $X$, the type $\later^k X$ represents a value of type $X$ one unit
of time in the future according to clock $k$.
Up to this point, all uses of guarded constructs have been indexed by a single
clock $k$. In particular, we have for each clock $k$ a type $\li^k X$. The
notion of \emph{clock quantification} allows us to pass from the guarded setting
to the usual set-theoretic world. Clock quantification enables us to encode
coinductive types using guarded recursion. That is, given a type $A$ with a free
variable $k : \Clock$, we can form the type $A^{gl} := \forall k. A$ (here, $gl$ stands for ``global'').

A functor $F \colon \mathsf{Type} \to \mathsf{Type}$ is said to \emph{commute
with clock quantification} if for all $X : \Clock \to \mathsf{Type}$, the
canonical map $F(\forall k. X_k) \to \forall k. F(X_k)$ is an equivalence \cite{kristensen-mogelberg-vezzosi2022},
where we write $X_k$ for the application of clock $k$ to $X$.
Given a functor $F$ that commutes with clock quantification, if $A$ is a guarded
recursive type satisfying $A \cong F(\later^k A)$, then the type $A^{gl}$ has a
final $F$-coalgebra structure (see Theorem 4.3 of Kristensen et al. \cite{kristensen-mogelberg-vezzosi2022}). 

Consider the functor $F(X) = \mathbb{N} + {\mho} + X$. By the definition of the
guarded lift, we have that
$\li^k \mathbb{N} 
  \cong \mathbb{N} + {\mho} + \later^k (\li \mathbb{N}) 
  \cong F(\later \li^k \mathbb{N})$.
Furthermore, it can be shown that the functor $F$ commutes with clock
quantification. Thus, the type $\li^{gl} \mathbb{N} := \forall k. \li^k
\mathbb{N}$ is a final $F$-coalgebra. In fact, $\li^{gl}$ is isomorphic to a
more familiar coinductive type.
Given a type $Y$, Capretta's \emph{delay monad} \cite{lmcs:2265},
which we write as $\text{Delay}(Y)$, is defined as the coinductive type
generated by $\tnow : Y \to \delay(Y)$ and $\tlater : \delay(Y) \to \delay(Y)$.
We have that $\delay (\mathbb{N} + {\mho})$ is a final coalgebra for the functor
$F$ defined above. This means $\li^{gl} \mathbb{N}$ is isomorphic to
$\delay(\mathbb{N} + {\mho})$. We write $\alpha$ to denote this isomorphism.

Given $d : \delay(\mathbb{N} + {\mho})$, we define a notion of termination in
$i$ steps with result $n_? \in \mathbb{N} + {\mho}$, written $d \da^i n_?$.
We define $d \da^i n_?$ inductively by the rules
$(\tnow\, n_?) \da^0 n_?$ and $(\tlater\, d) \da^{i+1} n_?$ if $d \da^i n_?$.
From this definition, it is clear that if there is an $i$
such that $d \da^i n_?$, then this $i$ is unique. 
Using this definition of termination, we define a partial function 
$-\Downarrow^\text{Delay} \colon \delay (\mathbb{N} + {\mho}) \rightharpoonup \mathbb{N} + {\mho}$
by existentially quantifying over $i$ for which $d \da^i n_?$. That is, we
define $d \Downarrow^\text{Delay} \,= n_?$ if $d \da^i n_?$ for some (necessarily
unique) $i$, and $d \Downarrow^\text{Delay}$ is undefined otherwise.

Now consider the global version of the term semantics, $\sem{\cdot}^{gl} \colon
\{M \,|\, \cdot \vdash M : \nat \} \to \li^{gl} \mathbb{N}$, defined by
$\sem{M}^{gl} := \lambda (k : \Clock) .\sem{M}$. Composing this with the
isomorphism $\alpha : \li^{gl} \mathbb{N} \cong \delay(\mathbb{N} + {\mho})$ and
the partial function $-\Downarrow^\text{Delay}$ yields the desired partial
function
$-\Downarrow \colon \{M \,|\, \cdot \vdash M : \nat \} \rightharpoonup \mathbb{N} + {\mho}$.
It is then straightforward to see that this big-step semantics respects the
equational theory and that the denotations of the diverging term $\Omega$, the
error term $\mho$, and a syntactic natural number value $n$ are all distinct.

\section{Challenges to a Guarded Model of Graduality}\label{sec:towards-relational-model}


Now that we have seen how to model the terms without regards to
graduality, we next turn to how to enhance this model to provide a
compositional semantics that additionally satisfies graduality. By a
compositional semantics, we mean that we want to provide a
compositional interpretation of \emph{type} and \emph{term precision}
such that we can extract the graduality relation from our model of
term precision.

\subsection{A First Attempt: Modeling Term Precision with Posets}

In prior work, New and Licata model the term precision relation by
equipping $\omega$-CPOs with a second ordering $\ltdyn$ they call the
``error ordering'' such that a term precision relationship at a fixed
type $M \ltdyn M' : r(A)$ implies the error ordering $\sem{M} \ltdyn
\sem{M'}$ in the semantics. Then the error ordering on the free error
domain $\li A$ is based on the graduality property: essentially either
$\sem{M}$ errors or both terms diverge or both terms terminate with
values related by the ordering relation on $A$.
The obvious first try at a guarded semantics that models graduality
would be to similarly enhance our value types and computation types to
additionally carry a poset structure. For most type formers this
ordering can be defined as lifting poset structures on the
subformulae; the key design choice is in the ordering on the free
error domain $\li A$.

The ordering on $\li A$ should model the graduality property, but how
to interpret this is not so straightforward. The graduality property
as stated mentions divergence, but we only work indirectly with
diverging computations in the form of the think maps. The closest
direct encoding of the graduality property is an ordering we call the
\emph{step-insensitive error ordering}, defined by guarded recursion
in Figure~\ref{fig:step-insensitive-error-ordering}.
Firstly, to model graduality, $\mho$ is the least element. Next, two
returned values are related just when they are in the ordering $\ltdyn$
on $A$.
Two thinking elements are related just when they are related later.
The interesting cases are then those where one side is thinking and
the other has completed to either an error or a value.
Since the graduality property is \emph{extensional}, i.e., oblivious
to the number of steps taken, it is sensible to say that if one side
has terminated and the other side is thinking, that we must require
the thinking side to eventually terminate with a related behavior,
which is the content of the final three cases of the definition.
\begin{figure*}
      \begin{align*}
        \mho \semltbad l &\text{ iff } \top \\
        \eta\, x \semltbad \eta\, y &\text{ iff } 
            x \ltdyn y \\
        \theta\, \tilde{l} \semltbad \theta\, \tilde{l'} &\text{ iff } 
            \later_t (\tilde{l}_t \semltbad \tilde{l'}_t) \\
        \theta\, \tilde{l} \semltbad \mho &\text{ iff } \exists n. \theta\, \tilde{l} = \delta^n(\mho) \\
        \theta\, \tilde{l} \semltbad \eta\, y &\text{ iff } \exists n. \exists x \ltdyn y.
            (\theta\, \tilde{l} = \delta^n(\eta\, x)) \\
        \eta\, x \semltbad \theta\, \tilde{l'} &\text { iff }
            \exists n. \exists y \gtdyn x. (\theta\, \tilde{l'} = \delta^n (\eta\, y))
    \end{align*}
    \caption{Step-insensitive error ordering}
    \label{fig:step-insensitive-error-ordering}
\end{figure*}

This definition of $\semltbad$ is sufficient to model the graduality
property in that if $l \semltbad l'$ for $l,l' : \li \mathbb{N}$ where
$\mathbb{N}$ is the flat poset whose ordering is equality, then the
big-step semantics we derive for $l, l'$ does in fact satisfy the
intended graduality relation.
However, there is a major issue with this definition: it's not a poset
at all: it is not anti-symmetric, but more importantly it is
\emph{not} transitive!
To see why, we observe the following undesirable property of any
relation that is transitive and ``step-insensitive'' on one side:
\begin{theorem}\label{thm:no-go}
  Let $R$ be a binary relation on the free error domain $U(\li A)$. If
  $R$ satisfies the following properties
  \begin{enumerate}
  \item Transitivity
  \item $\theta$-congruence: If $\later_t (\tilde{x}_t \binrel{R} \tilde{y}_t)$, then $\theta(\tilde{x}) \binrel{R} \theta(\tilde{y})$.
  \item Right step-insensitivity: If $x \binrel{R} y$ then $x \binrel{R} \delta y$.
  \end{enumerate}
  Then for any $l : U(\li A)$, we have $l \binrel{R} \Omega$. If $R$ is left
  step-insensitive instead then $\Omega \binrel{R} x$.
\end{theorem}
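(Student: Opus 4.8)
The plan is to prove the statement $\forall l.\ l \binrel{R} \Omega$ by $\lob$-induction (guarded recursion on the proposition), exploiting the fact that $\Omega$ is the fixed point of the delay map, $\Omega = \delta(\Omega) = \theta(\nxt\,\Omega)$. So I would set $P$ to be the proposition $\forall l.\ l \binrel{R} \Omega$, assume the $\lob$ hypothesis $\later P$, and from it establish $P$. Fixing an arbitrary $l : U(\li A)$, the argument splits into two halves that are glued by transitivity.

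First I would produce a related pair whose right component is exactly $\delta l$: starting from reflexivity $l \binrel{R} l$ and applying right step-insensitivity once gives $l \binrel{R} \delta l$. (This is the one place reflexivity of $R$ is used, and it is what lets the conclusion reach $\eta$- and $\mho$-elements rather than only elements already of the form $\theta(\tilde{l})$.) Second, I would show $\delta l \binrel{R} \Omega$. Since $\delta l = \theta(\nxt\,l)$ and $\Omega = \theta(\nxt\,\Omega)$, $\theta$-congruence reduces this to exhibiting $\later_t\,(l \binrel{R} \Omega)$, which is precisely what $\later P$ delivers after applying the tick $t$ and instantiating the quantifier at $l$. Transitivity applied to $l \binrel{R} \delta l$ and $\delta l \binrel{R} \Omega$ then yields $l \binrel{R} \Omega$, closing the induction.

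The conceptual crux of the no-go result lives in the second half: right step-insensitivity silently inserts a delay on the right, and transitivity absorbs that delay into $\Omega$, which is nothing but an infinite stack of delays. It is exactly this one-sided insensitivity combined with transitivity that forces every element below $\Omega$, and this is the interaction that a genuine antisymmetric ordering cannot tolerate. The main obstacle I anticipate is not conceptual but bookkeeping: in the $\theta$-congruence step one must check carefully that applying the tick $t$ to $\nxt\,l$ and $\nxt\,\Omega$ recovers $l$ and $\Omega$ on the nose, so that the premise of $\theta$-congruence matches the instantiated later-hypothesis, and that the quantifier instantiation under $\later$ is performed in the correct tick context.

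Finally, the dual claim — that left step-insensitivity forces $\Omega \binrel{R} x$ for all $x$ — follows by the mirror-image argument: prove $\forall x.\ \Omega \binrel{R} x$ by $\lob$-induction, use reflexivity together with left step-insensitivity to obtain $\delta x \binrel{R} x$, use $\theta$-congruence with the later-hypothesis to obtain $\Omega \binrel{R} \delta x$, and conclude by transitivity. No new ideas are required beyond swapping the two arguments throughout.
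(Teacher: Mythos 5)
Your proof is correct and essentially identical to the paper's: L\"ob induction, obtaining $l \binrel{R} \delta l$ from right step-insensitivity, obtaining $\delta l \binrel{R} \delta\Omega = \Omega$ by applying $\theta$-congruence to the L\"ob hypothesis, and gluing the two with transitivity, with the left-insensitive case handled by the mirror-image argument. The only point worth flagging is that your explicit appeal to reflexivity (to get $l \binrel{R} l$ before applying right step-insensitivity) invokes a property not listed among the theorem's three hypotheses; the paper's own proof makes exactly the same silent use of it, so this is a shared gap in the \emph{statement} of the theorem rather than a defect of your argument.
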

\begin{proof}
  By L\"ob induction: we assume that $l$ is related to $\Omega$ later, which implies that
  $\theta (\nxt\, l) \binrel{R} \theta (\nxt\, \Omega)$. Then observe that 
  $l \binrel{R} \theta (\nxt\, l) \binrel{R} \theta (\nxt\, \Omega) = \Omega$.
  See the appendix for more details.
\end{proof}
\begin{corollary}
  Let $R$ be a binary relation on $U(\li A)$. If $R$ satisfies
  transitivity, $\theta$-congruence and left and right
  step-insensitivity, then $R$ is the total relation: $\forall x, y. x
  \binrel{R} y$.
\end{corollary}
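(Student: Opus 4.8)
The plan is to obtain the total relation directly from Theorem~\ref{thm:no-go} by routing every pair of elements through the diverging element $\Omega$, which will act as a universal intermediary. The key observation is that the hypotheses of the corollary—transitivity, $\theta$-congruence, and \emph{both} left and right step-insensitivity—are precisely what is needed to invoke \emph{both} halves of Theorem~\ref{thm:no-go} for the same relation $R$.

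Concretely, I would proceed in three steps. First, since $R$ is transitive, $\theta$-congruent, and right step-insensitive, the first conclusion of Theorem~\ref{thm:no-go} applies and yields $x \binrel{R} \Omega$ for every $x : U(\li A)$. Second, since $R$ additionally satisfies left step-insensitivity (while remaining transitive and $\theta$-congruent), the second conclusion of the same theorem applies and yields $\Omega \binrel{R} y$ for every $y : U(\li A)$. Third, fixing arbitrary $x, y : U(\li A)$, I would chain these two facts through $\Omega$ and appeal once more to transitivity:
\[
  x \binrel{R} \Omega \binrel{R} y \quad\Longrightarrow\quad x \binrel{R} y.
\]
Since $x$ and $y$ were arbitrary, this establishes $\forall x, y.\, x \binrel{R} y$, i.e., $R$ is the total relation.

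There is no real obstacle here beyond recognizing that the two directions of Theorem~\ref{thm:no-go} can be combined: the only subtlety is that the corollary's hypothesis supplies \emph{simultaneous} left and right step-insensitivity, so the single relation $R$ qualifies for both conclusions of the no-go theorem at once. The diverging element $\Omega$ then functions as a point that is both above everything (on the right) and below everything (on the left) in $R$, and transitivity collapses the relation to be total. The conceptual weight of the argument therefore lies entirely in Theorem~\ref{thm:no-go}; the corollary is a short deduction from it.
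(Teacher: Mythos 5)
Your proposal is correct and matches the paper's own proof exactly: both halves of Theorem~\ref{thm:no-go} give $x \binrel{R} \Omega$ and $\Omega \binrel{R} y$, and transitivity closes the chain. No differences to note.
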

\begin{proof}
  $x \binrel{R} \Omega$ and $\Omega \binrel{R} y$ by the previous
  theorem. Then by transitivity $x \binrel R y$.
\end{proof}

This in turn implies $\semltbad$ is not transitive, since it is a
$\theta$-congruence and step-insensitive on both sides, but not
trivial: e.g., for the flat poset on $\mathbb N$, $\eta 0 \semltbad
\eta 1$ is false.
This shows definitively that we cannot provide a compositional
semantic graduality proof based on types as posets if we use this
step-insensitive ordering on $\li A$.

\subsection{Why Transitivity is Essential}

At this point we might try to weaken our initial attempt at
providing a poset semantics to a semantics where types are equipped
with merely a reflexive relation and continue using the
step-insensitive ordering. However, some level of transitivity is
absolutely essential to providing compositional reasoning, and
transitivity is used pervasively in the prior work by New and Licata.
The reason that it is essential becomes clear by examining the details of the
graduality proof, so first we expand on how we prove graduality compositionally.
First, we need to extend our compositional semantics of types and terms to a
compositional semantics of \emph{type precision derivations} and \emph{term
precision}.
Sticking to a poset-based semantics, if our value types are interpreted as
posets, then we want our precision derivations $c : A \ltdyn A'$ to be
interpreted as a \emph{relation} between $A$ and $A'$. It is also natural to
ask that the relation $c$ interact with the orderings on $A$ and $A'$. More
specifically, if $x' \ltdyn_{A} x$ and $x \mathrel{c} y$, then $x' \mathrel{c}
y$. Similarly, if $x \mathrel{c} y$ and $y \ltdyn_{A'} y'$, then $x \mathrel{c}
y'$. We summarize this requirement by saying $c$ is \emph{downward-closed} in
$A$ and $\emph{upward-closed}$ in $A'$. We will see below why this requirement
is necessary for the graduality proof.


We call such a relation between $A$ and $A'$ a \emph{poset relation},
and we write $c : A \rel A'$ to mean $c$ is a poset relation downward
closed in $A$ and upward closed in A'. The paradigmatic example of a
poset relation is given by the ordering $\ltdyn_A$. We call this the
reflexive poset relation and denote it by $r(A) : A \rel A$. This is a
poset relation because the ordering $\ltdyn_A$ is transitive.
%
Given two poset relations $c : A_1 \rel A_2$ and $c' : A_2 \rel A_3$, we can
form their \emph{composition} in the usual manner, i.e., $x \binrel{c \comp c'} y$ if and
only if there is an element $z : A_2$ with $x \binrel{c} z$ and $z \binrel{c'} y$.



Poset relations are then used in the semantics of \emph{term}
precision. For closed programs $M \ltdyn M' : c$ where $c : A \ltdyn
A'$, the terms denote elements of the free error domain
$\sem{M},\sem{M'} : U\li \sem{A}$, and we model the term precision
semantically as the relation holding between those denoted elements
$\sem{M} \binrel{(U\li\sem{c})} \sem{M'}$. Here $U\li\sem{c}$ is a
relational lifting of $U\circ\li$, which extends $\sem{c}$ to account
for error and computational steps.
More generally, $M$ and $M'$ can be \emph{open} terms, in which case
they denote functions, not just elements. We model that as not simply
elements being related but in a typical logical-relations style as
\emph{preserving} relatedness: if they are passed in inputs related by
the domain relations then their output is related by the codomain
relation.
To capture this preservation of relations property, we use the notion of a
\emph{square}, which specifies a relation between two monotone functions. The
definition of square is as follows. Let $A_i, A_o, A_i', A_o'$ be
partially-ordered sets. Let $c_i : A_i \rel A_i'$ and $c_o : A_o \rel A_o'$ be
poset relations, and let $f : A_i \to A_o$ and $g : A_i' \to A_o'$ be monotone
functions. We say that $f \ltsq{c_i}{c_o} g$ if for all $x : A_i$ and $y : A_i'$
with $x \binrel{c_i} y$, we have $f(x) \binrel{c_o} g(y)$. We call this a
\emph{square} because we visualize this situation as follows:
%
\[\begin{tikzcd}[ampersand replacement=\&]
  {A_i} \& {A_i'} \\
  {A_o} \& {A_o'}
  \arrow["{c_i}", "\shortmid"{marking}, no head, from=1-1, to=1-2]
  \arrow["f"', from=1-1, to=2-1]
  \arrow["g", from=1-2, to=2-2]
  \arrow["{c_o}"', "\shortmid"{marking}, no head, from=2-1, to=2-2]
\end{tikzcd}\]
Squares are then used to model the term precision ordering in that
$\Delta \vdash M \ltdyn N : c$ implies $\sem{M}
\ltsq{\sem{\Delta}}{U\li\sem{c}} \sem{N}$ where here $\sem{\Delta =
  x_1:c_1,\ldots}$ is the action of the product on the relations
$c_1,\ldots$.


We note that for any $c$, there is a ``vertical identity'' square $\id
\ltsq{c}{c} \id$, as can be seen by unfolding the
definition. Similarly, for any monotone function $f : A_i \to A_o$,
there is a ``horizontal identity'' square $f
\ltsq{r(A_i)}{r(A_o)} f$ arising from the fact that $f$ is
monotone.
We can \emph{compose} squares vertically, i.e., if we have $f
\ltsq{c_1}{c_2} f'$ and $g \ltsq{c_2}{c_3} g'$ then we obtain a square
$g \circ f \ltsq{c_1}{c_3} g' \circ f'$. Likewise, we can compose
squares horizontally: If $f \ltsq{c_i}{c_o} g$ and $g
\ltsq{c_i'}{c_o'} h$, then we obtain a square $f \ltsq{c_i \comp
  c_i'}{c_o \comp c_o'} h$. Horizontal composition of squares
corresponds to transitivity of term precision. While we do not dwell
on categorical abstractions in this work, we note that this structure
of functions, relations and squares forms a locally thin \emph{double
category} as used in previous work \cite{new-licata18}.

Now that we have an intended model of term precision in the form of
squares, we can see what is required to give a compositional proof of
graduality. Most of the cases of term precision are just congruence
and follow easily. The core of the proof of the graduality property is
in proving the validity of the \emph{cast rules} $\upl$, $\upr$, $\dnl$, and
$\dnr$.
These rules specify a relationship between the semantics of type
precision derivations $c$ and the corresponding casts $\upc c, \dnc c$ (see Section
\ref{sec:GTLC}).
To prove graduality compositionally then, we will need to codify in an
extra property on $\sem{c}$ that the relation $\sem{c}$ has some
correspondence with the casts $\sem{\upc c}, \sem{\dnc c}$.
What property should we require?

Consider the $\upl$ rule. It states that if
$c : A_1 \ltdyn A_2$ and $c' : A_2 \ltdyn A_3$ and $M : A_1$ is
related to $N : A_3$ via the composition $c \comp c'$, then the upcast
$\upc{c} M$ is related to $N$ via $c'$. Since the upcasts are pure
functions in the semantics this can be deduced from the existence of
the following square, where the relation on the top is $\sem{c} \comp \sem{c'}$:
%
\[\begin{tikzcd}[ampersand replacement=\&]
  {\sem{A_1}} \& {\sem{A_2}} \& {\sem{A_3}} \\
  {\sem{A_2}} \&\& {\sem{A_3}}
  \arrow["\sem{c}", "\shortmid"{marking}, no head, from=1-1, to=1-2]
  \arrow["{\sem{\upc{c}}}"', from=1-1, to=2-1]
  \arrow["{\sem{c'}}", "\shortmid"{marking}, no head, from=1-2, to=1-3]
  \arrow["\id", from=1-3, to=2-3]
  \arrow["{\sem{c'}}"', "\shortmid"{marking}, no head, from=2-1, to=2-3]
\end{tikzcd}\]
While at first look this seems to be specifying a relationship between
$\sem{c}$ and $\sem{\upc c}$, there is a problem: it also quantifies
over an arbitrary other relation $\sem{c'}$! This means we cannot
require the existence of this square as part of the definition of a
relation between value types, as it is self-referential. This would
seem to imply that we cannot give a compositional model for
graduality. However, New and Licata observed that in the presence of
transitivity, we can \emph{derive} the above squares from simpler ones
that do not involve composition of relations. Below are the simpler squares
corresponding to $\upl$, $\upr$, $\dnl$, and $\dnr$:
\begin{center}
  \begin{tabular}{ c | c | c | c} 
    \begin{tikzcd}[ampersand replacement=\&]
      {A_1} \& {A_2} \\
      {A_2} \& {A_2}
      \arrow["c", "\shortmid"{marking}, no head, from=1-1, to=1-2]
      \arrow[""{name=0, anchor=center, inner sep=0}, "{\upc{c}}"', from=1-1, to=2-1]
      \arrow[""{name=1, anchor=center, inner sep=0}, "\id", from=1-2, to=2-2]
      \arrow["{r(A_2)}"', "\shortmid"{marking}, no head, from=2-1, to=2-2]
      \arrow["{\text{UpL}}"{marking, allow upside down}, draw=none, from=0, to=1]
  \end{tikzcd} &
    %
    \begin{tikzcd}[ampersand replacement=\&]
      {A_1} \& {A_1} \\
      {A_1} \& {A_2}
      \arrow["{r(A_1)}", "\shortmid"{marking}, no head, from=1-1, to=1-2]
      \arrow[""{name=0, anchor=center, inner sep=0}, "\id"', from=1-1, to=2-1]
      \arrow[""{name=1, anchor=center, inner sep=0}, "{\upc c}", from=1-2, to=2-2]
      \arrow["c"', "\shortmid"{marking}, no head, from=2-1, to=2-2]
      \arrow["{\text{UpR}}"{marking, allow upside down}, draw=none, from=0, to=1]
    \end{tikzcd} &
    %
    %
    %
    \begin{tikzcd}[ampersand replacement=\&]
      {A_2} \& {A_2} \\
      {A_1} \& {A_2}
      \arrow["{r(A_2)}", "\shortmid"{marking}, no head, from=1-1, to=1-2]
      \arrow[""{name=0, anchor=center, inner sep=0}, "{\dnc c}"', from=1-1, to=2-1]
      \arrow[""{name=1, anchor=center, inner sep=0}, "\id", from=1-2, to=2-2]
      \arrow["c"', "\shortmid"{marking}, no head, from=2-1, to=2-2]
      \arrow["{\text{DnL}}"{marking, allow upside down}, draw=none, from=0, to=1]
    \end{tikzcd} &
    %
    \begin{tikzcd}[ampersand replacement=\&]
      {A_1} \& {A_2} \\
      {A_1} \& {A_1}
      \arrow["c", "\shortmid"{marking}, no head, from=1-1, to=1-2]
      \arrow[""{name=0, anchor=center, inner sep=0}, "\id"', from=1-1, to=2-1]
      \arrow[""{name=1, anchor=center, inner sep=0}, "{\dnc c}", from=1-2, to=2-2]
      \arrow["{r(A_1)}"', "\shortmid"{marking}, no head, from=2-1, to=2-2]
      \arrow["{\text{DnR}}"{marking, allow upside down}, draw=none, from=0, to=1]
    \end{tikzcd}
  \end{tabular}
\end{center}
These squares say that the relation $c$ is \emph{representable} by the upcast
morphism $\upc{c}$, essentially that the relation is a kind of \emph{graph} of
the function in that $x \binrel{\sem c} y$ if and only if $\sem{\upc c}(x)
\leq_{A_2} y$ \cite{shulman2007}.

We can use the simpler $\upl$ square to derive the original $\upl$ square by
horizontally composing with the identity square for $c'$:
%
\[\begin{tikzcd}[ampersand replacement=\&]
	{A_1} \& {A_2} \& {A_3} \\
	{A_2} \& {A_2} \& {A_3}
	\arrow["c", "\shortmid"{marking}, no head, from=1-1, to=1-2]
	\arrow["{\upc{c}}"', from=1-1, to=2-1]
	\arrow["{c'}", "\shortmid"{marking}, no head, from=1-2, to=1-3]
	\arrow["\id"', from=1-2, to=2-2]
	\arrow["\id", from=1-3, to=2-3]
	\arrow["{\ltdyn_{A_2}}"', "\shortmid"{marking}, no head, from=2-1, to=2-2]
	\arrow["{c'}"', "\shortmid"{marking}, no head, from=2-2, to=2-3]
\end{tikzcd}\]
Then to complete the proof, we observe that the composition of $\ltdyn_{A_2}$ with $c'$ is equal to
$c'$, because $c'$ is \emph{downward-closed} under the relation on $A_2$.
An analogous argument shows that we can derive the original $\upr$ square from the simpler version of $\upr$,
this time using the fact that poset relations are \emph{upward-closed}.

To recap, we have shown that in order to carry out the graduality proof
compositionally, we need that the relations are closed under the ordering on
each side. In particular, this implies that the ordering relations $\ltdyn_{A}$
must be transitive, so we cannot entirely drop posets from our semantics.



\subsection{Resolution: Splitting Error Ordering and Bisimilarity}\label{sec:lock-step-and-weak-bisim}

Given that some amount of transitive reasoning is essential for
modeling type and term precision compositionally, we need to revisit
our ordering relation on $\li A$ to get one that is
transitive. Theorem \ref{thm:no-go} tells us, however, that any non-trivial such
relation must either not be a congruence with respect to $\theta$, or
must not be left- and right-step-insensitive.
\footnote{Technically, the theorem only implies that we need to drop
one of left- or right-step-insensitivity. We might then attempt
to work with two separate ``step-semi-sensitive'' relations on $\li
A$, each closed under delays on the left and right respectively. This
has some similarities to prior work on logical relations models
\cite{new-licata-ahmed2019,new-giovannini-licata-2022}, but we do not take this
approach because the relations still do not seem to be provably
transitive, though we have not been able to prove that they are not
transitive either.}
%
We cannot forego the property of being a $\theta$-congruence, as
without this we would not be able to prove basic properties of the
relation using \lob-induction, e.g., that the extension $f^\dagger$ is
\emph{monotone}, which is used pervasively in the proof of graduality.
Thus, we choose to sacrifice left and right step-insensitivity. That
is, we will define an ordering relation that requires terms to be in
``lock-step''. In order for two computations to be related in this
ordering, they must have the exact same stepping behavior
(unless/until the left-hand side results in an error).

More formally, we define the \emph{lock-step error-ordering}, with the idea
being that two computations $l$ and $l'$ are related if they are in lock-step
with regard to their intensional behavior, up to $l$ erroring. Figure
\ref{fig:lock-step-error-ordering} gives the definition of this relation.

\begin{figure}
  \begin{minipage}{0.5\textwidth}
    \fbox{$l_1 \ltls l_2$}
    \begin{align*}
        &\eta\, x \ltls \eta\, y \text{ if } 
            x \mathbin{\ltdyn_A} y \\
        &\mho \ltls l' \\
        &\theta\, \tilde{l} \ltls \theta\, \tilde{l'} \text{ if } 
            \later_t (\tilde{l}_t \ltls \tilde{l'}_t)
  \end{align*}\end{minipage}\begin{minipage}{0.5\textwidth}
\fbox{$l_1 \bisim l_2$}
    \begin{align*}
        \mho \bisim \mho &\text{ iff } \top \\
        \eta\, x \bisim \eta\, y &\text{ iff } x \bisim_A y \\
        \theta\, \tilde{x} \bisim \theta\, \tilde{y} &\text{ iff } \later_t (\tilde{x}_t \bisim \tilde{y}_t) \\
        \theta\, \tilde{x} \bisim \mho &\text{ iff } \exists n. \theta\, \tilde{x} = \delta^n(\mho)\\
        \theta\, \tilde{x} \bisim \eta\, y &\text{ iff } \exists n. \exists x \bisim_A y.
          (\theta\, \tilde{x} = \delta^n(\eta\, x))\\
        \mho \bisim \theta\, \tilde{y} &\text{ iff } \exists n. \theta\, \tilde{y} = \delta^n(\mho) \\
        \eta\, x \bisim \theta\, \tilde{y} &\text { iff } \exists n. \exists y \bisim_A x. (\theta\, \tilde{y} = \delta^n (\eta\, y))
      \end{align*}
  \end{minipage}
    \caption{lock-step error ordering and weak bisimilarity}
    \label{fig:lock-step-error-ordering}
\end{figure}

When both sides are $\eta$, then we check that the returned values are related
in $\ltdyn_A$. The error term $\mho$ is the least element. Lastly, if both sides step
(i.e., are a $\theta$) then we compare the resulting computations one time step
later.
It is straightforward to prove using \lob-induction that this relation is
reflexive, transitive and anti-symmetric given that the underlying relation $\ltdyn_A$
has those properties. The lock-step ordering is therefore the partial ordering
we will associate with $\li A$.
More generally we can define a heterogeneous version of this ordering that lifts
poset relation $c : A \rel A'$ to a poset relation $\li c : \li A \rel \li A'$.
The definition is nearly identical, except that in the case when both sides are $\eta$ we
check that the returned values are related in $c$.

However, we also cannot completely drop step-\emph{insensitivity} from
our model. The graduality property itself is not sensitive to steps,
and in fact there are terms related by term precision that take
differing numbers of steps.
The offending step arises precisely from the place where our
definition of $D$ uses a $\later$: the function case.
Consider the representability condition corresponding to the $\dnl$ rule
for $\iarr \colon \dyntodyn \ltdyn\, \dyn$. This is the square $\sem{\dnc{\iarr}} \ltsq{r({\li D})}{\li\sem{\iarr}} \id$.
For this square to be valid with the lock-step ordering the left and
right hand side would need to take the same number of steps, at least
when the left-hand side does not error.
However, the right hand side \emph{always} takes $0$ steps, as it is
the identity function, whereas on the left, our definition of
$\sem{\dnc{\iarr}}$ had to take an observable step when its input is a
function: this was inherent to the fact that the function case of the
dynamic type is guarded by a later.

Observe that we can remedy this particular situation by
replacing the $\id$ on the right hand side by an innocuous function
$\delta^* = (\delta \circ \eta)^\dagger$ that on an input value takes
a single computational step, but is otherwise the identity
function. If we were to ignore computational steps, then this function
\emph{would} be the identity function. We call such a function that is
the identity function except for the introduction of computational
steps a \emph{perturbation}.

We formalize this property of being equivalent ``except for steps''
with a second relation on the free error domain $\li A$: \emph{weak
bisimilarity}, defined in Figure~\ref{fig:lock-step-error-ordering}, which is
parameterized by a binary relation $\bisim_A$ on $A$
\cite{mogelberg-paviotti2016}.
Two errors are bisimilar, and when both sides are $\eta$, we ensure
that the underlying values are bisimilar in the underlying
bisimilarity relation on $A$. When both sides are thinking, we ensure
the terms are bisimilar later.  Most importantly, when one side is
thinking but the other terminates at $\eta x$ (i.e., one side steps),
we stipulate that the $\theta$-term runs to $\eta y$ where $x$ is
related to $y$. And similarly, if one side is thinking and the other
errors, we ensure the thinking side eventually errors.

It can be shown (by \lob-induction) that weak bisimilarity is
reflexive and symmetric. Since it is non-trivial, a $\theta$
congruence and step-insensitive on both sides, by
Theorem~\ref{thm:no-go}, we also know that it is \emph{not}
transitive.
We will then require our denotations of types to be not just posets,
but posets additionally equipped with a reflexive symmetric relation
$\bisim_A$.

Then we can refine our denotation of term precision $\Delta \vdash M
\ltdyn N : c$ to mean not that $\sem{M}$ and $\sem{N}$ are necessarily
in a lock-step error ordering directly, but that they can be
``synchronized'' to do so, that is that there exist $f \bisim \sem{M}$
and $g \bisim \sem{N}$ such that $f \ltsq{\sem{\Delta}}{U\li\sem{c}}
g$.
Note that this in turn implies that $\sem{M}$ and $\sem{N}$ are
related in the original \emph{step-insensitive} error ordering that we
sought to prove!
We additionally weaken our representability squares so that our
upcasts and downcasts are not required to be in a lock-step ordering
with an identity but instead to be in a lock-step ordering with some
perturbation, that is a function \emph{bisimilar to} the identity. We
call this weakened form of representability
\emph{quasi}-representability.
Then if our relations are quasi-representable we can in fact prove the
validity of our cast rules.
The remainder of the proof of graduality then is to show that this
property of being quasi-representable is itself compositional: that
all of the constructions we have on type precision preserve the
property of being quasi-representable.
While this is true for representability, it is not quite true for
quasi-representability. The reason is that perturbations are not quite
as well behaved as \emph{actual} identity functions. To solve this
final issue, we attach one final piece of information to our types
$A$: a type of \emph{syntactic} perturbations that represent functions
bisimilar to the identity, but are presented as data so that we can
perform operations such as composition and functorial actions on them.
With this notion of syntactic perturbation, and additional
requirements that the relations interact well with perturbations, we
finally achieve our desired result: a compositional,
syntax-independent proof of graduality.

\section{A Guarded Model of Graduality}\label{sec:concrete-relational-model}

Next we describe in detail our compositional model of GTLC that
satisfies graduality. The definitions proceed in two phases: in the
first phase we define our objects, functions, relations and squares to
model the error ordering and bisimilarity, and then we add the extra
structure of perturbations and quasi-representability to our objects
and relations to ensure that we can model the cast rules.
Once these definitions are complete we show how to interpret the
types, type precision, terms and term precision compositionally in our
model, and finally we conclude by extracting a big step semantics from
this model that satisfies the graduality theorem.

\subsection{Phase One: Predomains and Error Domains}

As discussed in the previous section our model of types must now come
with both an error ordering and a bisimilarity relation.
\begin{definition}
A \textbf{predomain} $A$ consists of a set $A$ along with two relations:
\begin{itemize}
    \item A partial order $\ltdyn_A$.
    \item A reflexive, symmetric ``bisimilarity'' relation $\bisim_A$.
\end{itemize}
A \textbf{morphism of predomains} $f : A \to A'$ is a function between
the underlying sets that is both \emph{monotone} (if
$x \ltdyn_A x'$, then $f(x) \ltdyn_{A'} f(x')$), and
\emph{bisimilarity-preserving} (if $x \bisim_A x'$, then $f(x) \bisim_{A'} f(x')$).
\end{definition}

We will generally write $A$ for both a predomain and its underlying set; if we
need to emphasize the difference we will write $|A|$ for the underlying set of
$A$.
Given a predomain $A$, we can form the predomain $\later A$. The
underlying set is $\later |A|$ and the ordering is given by $\tilde{x}
\ltdyn_{\later A} \tilde{x'}$ iff $\later_t(\tilde{x}_t \ltdyn_A
\tilde{x'}_t)$, and likewise for bisimilarity.
%

We similarly enhance our error domains with ordering and bisimilarity
relations, but we further need to impose conditions on the algebraic
structure: the error element should be the least element (so that it
models graduality), the think map should be monotone and bisimilarity-preserving,
and the bisimilarity relation should relate $\delta$ to the identity
(so that it models bisimilarity).
\begin{definition}
An \textbf{error domain} $B$ consists of a predomain $B$ along with the following data:
\begin{itemize}
    \item A distinguished ``error'' element $\mho_B \in B$ such that $\mho \ltdyn_B x$ for all $x$
    \item A morphism of predomains $\theta_B \colon \later B \to B$
    \item For all $x, y : B$ with $x \bisim_B y$, we have $\theta_B(\nxt\, x) \bisim_B y$
\end{itemize}
A \textbf{morphism of error domains} is a morphism of the underlying
predomains that preserves the error element and the $\theta$ map.
\end{definition}
For an error domain $B$, we define the predomain morphism $\delta_B := \theta_B
\circ \nxt$.

A \emph{predomain relation} $c : A \rel A'$ is a relation between $A$ and $A'$
that is upward closed in $A'$ and downward closed in $A$, as defined in the
previous section (no condition is required for bisimilarity). The ``reflexive''
predomain relation is written $r(A)$ and is given by the ordering $\ltdyn_A$.
The relational composition of two predomain relations is also a predomain
relation, and $r(A)$ is the identity for this composition.

Error domain relations are given by predomain relations that are
additionally a congruence with respect to the algebraic structures
$\theta$ and $\mho$.
\begin{definition}
  An \textbf{error domain relation} $d : B \rel B'$ is a monotone relation $d$ on the underlying predomains satisfying
  \begin{enumerate}
     \item ($\mho$ congruence): $\mho_B \binrel{d} \mho_{B'}$.
     \item ($\theta$ congruence): For all $\tilde{x}$ in $\laterhs B$ and
     $\tilde{y} \in \laterhs B'$, if $\laterhs_t (\tilde{x}_t \binrel{d} \tilde{y}_t)$, then 
     $(\theta_B (\tilde{x}) \binrel{d} \theta_{B'} (\tilde{y}))$. 
  \end{enumerate}
\end{definition}
Again we have a reflexive error domain relation $r(B)$ given by the
ordering. The relational composition of two error domain relations in
general is not an error domain relation ($\theta$ congruence in
particular is not preserved), but we instead use an alternative ``free
composition'' $dd' : B \rel B''$ that is inductively generated from the
relational composition. The rules are given in Figure \ref{fig:free-comp-ed-rel}.

\begin{figure}
  \begin{mathpar}
    \inferrule
    {x \mathbin{d} y \and y \mathbin{d'} z}
    {x \mathbin{(dd')} z}

    \inferrule
    {x' \ltdyn_{B_1} x \and x \mathbin{(dd')} z}
    {x' \mathbin{(dd')} z}

    \inferrule
    {x \mathbin{(dd')} z \and z \ltdyn_{B_3} z'}
    {x \mathbin{(dd')} z'}

    \inferrule
    { }
    {\mho_{B_1} \mathbin{(dd')} z}

    \inferrule
    {\later_t[ \tilde{x}_t \mathbin{(dd')} \tilde{z}_t ] }
    {(\theta_{B_1}\, \tilde{x}) \mathbin{(dd')} (\theta_{B_3}\, \tilde{z}) }
  \end{mathpar}
  \caption{The free composition of error domain relations.}
  \label{fig:free-comp-ed-rel}
\end{figure}

As in the previous section, we also have the notion of a square $f
\ltsq{c_i}{c_o} g$ between predomain morphisms, which is defined to mean that
for all $x \binrel{c_i} y$, we have $f(x) \binrel{c_o} g(y)$. Likewise, we have
squares $\phi \ltsq{d_i}{d_o} \psi$ between error domain morphisms, which are
defined similarly, with no extra interaction with the algebraic structure.
The free composition of error domain relations has the universal property that
$\phi \ltsq{dd'}{d''} \phi'$ if and only if for any $x \binrel{d} y$
and $y \binrel{d'} z$, we have $\phi(x) \binrel{d''} \phi'(z)$.

In our denotational model of GTLC, we map every type $A$ to a
predomain $\sem{A}$. These constructions are given by the same
constructions as in Section~\ref{sec:concrete-term-model}, but now enriched to
act also on orderings and bisimilarity. For products this is the
obvious lifting, and for functions we use the CBPV decomposition $U(A
\to \li A')$ where the only one that is not clear is $\li A'$.
\begin{definition}
  The free error domain $\li A$ has as the underlying set the free
  simple error domain with ordering given by the lock-step error
  ordering and bisimilarity by weak bisimilarity both generated by the
  ordering/bisimilarity on $A$. This is free in the same sense as the
  free simple error domain that it is left adjoint to $U$.
\end{definition}

Additionally, all of these core constructions $\times, \to, U, \li$
act on heterogeneous relations as well in an analogous way.

The most interesting type to interpret is the dynamic
type. The beauty of guarded domain theory is that once we have define
$\laterhs$ for predomains, we can solve the same domain equation as in
Section~\ref{sec:concrete-term-model} changed only in that we interpret it as a
domain equation on predomains rather than sets:
\[ D \cong \mathbb{N}\, + (D \times D)\, + \laterhs U(D \arr \li D) \]
%
The complete definition of the ordering $\ltdyn_D$ is provided in the appendix
for completeness.
%
%
We refer to the three injections as $e_\nat$, $e_\times$,
and $e_\to$, respectively.

To interpret type precision, we define three relations involving $D$
as follows. We define $\inat : \mathbb{N} \rel D$ by $(n, d) \in
\inat$ iff $e_\nat(n) \ltdyn_D d$. We similarly define $\itimes : D
\times D \rel D$ by $((d_1, d_2), d) \in \itimes$ iff $e_\times(d_1,
d_2) \ltdyn_D d$, and we define $\text{inj}_\to : U(D \arr \li D) \rel
D$ by $(f, d) \in \iarr$ iff $(e_\to(f) \circ \nxt) \ltdyn_D d$.

\subsection{Phase Two: Perturbations and Quasi-representable Relations}

We now introduce the additional definitions needed to complete the construction
of our model. Recall from Section \ref{sec:lock-step-and-weak-bisim} that in order
to establish the $\dnl$ rule for the downcast of $\iarr$, we needed to adjust the
rule by inserting a ``delay'' on the right-hand side. A similar adjustment is
needed for the $\dnr$ rule for $\iarr$. Moreover, the need to insert delays impacts
the semantics of the cast rules for \emph{all} relations, because of the
functorial nature of casts. That is, the upcast at a function type $c_i \ra c_o$
involves a downcast in the domain and an upcast in the codomain. This has two
consequences: first, the squares corresponding to the rules $\upl$ and $\upr$ may also
require the insertion of a delay. Second, we need to be able to insert
``higher-order'' delays in a way that follows the structures of the casts.

To accomplish this, we equip the predomains $A$ in our semantics with
a monoid $M_A$ of \emph{syntactic perturbations}, as well as a means
of \emph{interpreting} these syntactic perturbations as
\emph{semantic} perturbations on $A$, i.e., as endomorphisms that are
bisimilar to the identity. We call such an object a \emph{value
object}, and this will serve as the final definition of a denotation of
a type in our semantics. Similarly, we define computation objects to be
error domains equipped with a notion of syntactic perturbation:
\begin{definition}
  A \textbf{value object} consists of a predomain $A$, a monoid $M_A$ and a
  homomorphism of monoids $i_A \colon M_A \to \morbisimid{A} := \{ f : A \to A \mid f
  \bisim \id_A \}$.

  Likewise, a \textbf{computation object} consists of an error domain $B$, a
  monoid $M_B$ and a homomorphism of monoids $i_B \colon M_B \to \morbisimid{B}$.

  We define a \textbf{value morphism} to be simply a morphism of the underlying
  predomains, and a \textbf{computation morphism} to be a morphism of the
  underlying error domains.
\end{definition}

\begin{remark}
  Syntactic perturbations are equipped with a monoid structure in
  order to support identity and composition relations on value and
  computation types. That is, the identity relation requires an
  identity perturbation and composition of relations requires a
  composition operation on perturbations.
\end{remark}

\begin{remark}
  It might seem strange that we equip objects with a monoid of
  perturbations but we do not require morphisms to respect this
  structure in any way. A natural attempt would be defining a morphism
  between value objects $(A, M_A, i_A)$ and $(A', M_{A'}, i_{A'})$ to
  be a predomain morphism $f \colon A \to A'$ along with a monoid
  homomorphism $h \colon M_A \to M_{A'}$ such that $f$ and $h$ satisfy
  a compatibility condition with respect to the interpretations $i_A$
  and $i_{A'}$. However, this condition simply fails to hold for all
  terms, in particular the property is preserved by $\lambda$
  abstraction.
  %
  
\end{remark} 


Having defined syntactic perturbations and value and computation objects, we can
now make formal the final aspect of the model: the notion of
quasi-representability of relations.
\begin{definition}\label{def:quasi-left-representable}
Let $(A, M_A, i_A)$ and $(A', M_{A'}, i_{A'})$ be value objects, and let $c : A
\rel A'$ be a predomain relation. We say that $c$ is
\textbf{quasi-left-representable} by a predomain morphism $e_c : A \to A'$ if there
are perturbations $\delle_c \in M_A$ and $\delre_c \in M_{A'}$ such that the
following two squares exist: 
(1) $\upl$: $e_c \ltsq{c}{r(A')} i_{A'}(\delre_c)$, and
(2) $\upr$: $i_A(\delle_c) \ltsq{r(A)}{c} e_c$.

Quasi-left-representability for error domain relations is defined analogously.
\end{definition}
Observe that this weakens the previous notion of representability, since under
that definition the perturbations were required to be the identity.
%

Likewise, we define quasi-right-representability as follows:
\begin{definition}\label{def:quasi-right-representable}
Let $(B, M_B, i_B)$ and $(B', M_{B'}, i_{B'})$ be computation objects, and let
$d : B \rel B'$ be an error domain relation. We say that $d$ is
\textbf{quasi-right-representable} by an error domain morphism $p_d : B' \multimap B$ if
there are perturbations $\dellp_d \in M_B$ and $\delrp_d \in M_{B'}$ such that
the following two squares exist: 
(1) $\dnl$: $p_d \ltsq{r(B')}{d} i_{B'}(\delrp_d)$, and
(2) $\dnr$: $i_B(\dellp_d) \ltsq{d}{r(B)} p_d$.

Quasi-right-representability for predomain relations is defined similarly.
\end{definition}

For ordinary representability, the relation is \emph{uniquely
determined} by any morphism that left- or right-represents it: if two
relations are left-represented by the same morphism, then they are
equal. But for quasi-representability, this is no longer the case due
to the presence of perturbations. Instead we have the following weaker property:
\begin{definition}\label{def:quasi-equivalent}
  Let $A$ and $A'$ be value objects, and let $c, c' : A \rel A'$ be relations
  between the underlying predomains. We say that $c$ and $c'$ are
  \textbf{quasi-equivalent}, written $c \qordeq c'$, if there exist
  syntactic perturbations $\delta^l_1, \delta^l_2 \in M_A$ and $\delta^r_1,
  \delta^r_2 \in M_{A'}$ such that there is a square $i_A(\delta^l_1)
  \ltsq{c}{c'} i_{A'}(\delta^r_1)$ and a square $i_A(\delta^l_2) \ltsq{c'}{c}
  i_{A'}(\delta^r_2)$. Given computation objects $B$ and $B$, we make the
  analogous definition for relations $d, d' : B \rel B'$ between the underlying
  error domains.
\end{definition}

Given two quasi-representable relations $c$ and $c'$, it is not the case in
general that $cc'$ is quasi-representable. We need an additional condition that
specifies how relations interact with perturbations. In particular, we need to
be able to ``push'' and ``pull'' perturbations along relations $c$ and $d$.
The intuition for this requirement comes from the construction of the square
corresponding to the original $\upl$ rule from the square for the simplified
version as was shown in Section \ref{sec:towards-relational-model}. Recall that
we horizontally composed the simplified $\upl$ square for $c$ with the identity
square for $c'$ on the right. Now that the square for $\upl$ involves a
perturbation on the right instead of the identity, this construction will not
work unless we can ``push'' the perturbation on $A_2$ to one on $A_3$. That is,
given a relation $c : A \rel A'$ and a syntactic perturbation $m_A \in M_A$, we
need to be able to turn it into a syntactic perturbation $m_{A'} \in M_{A'}$
such that the resulting semantic perturbations obtained by applying the
respective homomorphisms $i_A$ and $i_{A'}$ form a square. Dually, we will need
to ``pull'' a perturbation from right to left. This notion is made formal by the
following definition:
\begin{definition}
    Let $(A, M_A, i_A)$ and $(A', M_{A'}, i_{A'})$ be value objects,
    and let $c : A \rel A'$ be a predomain relation. A
    \textbf{push-pull structure} for $c$ (with respect to $M_A$ and
    $M_{A'}$) consists of monoid homomorphisms $\push : M_A \to M_A'$
    and $\pull : M_A' \to M_A$ such that for all $m_A \in M_A$ there
    is a square $i_A(m_A) \ltsq{c}{c} i_{A'}(\push\, m_A)$ and for all
    $m_{A'} \in M_{A'}$ there is a square $i_A(\pull\, m_{A'})
    \ltsq{c}{c} i_{A'}(m_{A'})$. A push-pull structure for error
    domain relations is defined similarly.
\end{definition}

We can now give the final definition of value and computation relations for our
model, which serve as the denotation of our type precision derivations:
\begin{definition}
A \textbf{value relation} between value objects $(A, M_A, i_A)$ and $(A',
M_{A'}, i_{A'})$ consists of a predomain relation $c : A \rel A'$ that has a
push-pull structure, is quasi-left-representable by a morphism $e_c : A \to A'$,
and is such that $\li c$ is quasi-right-representable by an error domain morphism
$p_c : \li A' \to \li A$.

A \textbf{computation relation} between computation objects $(B, M_B, i_B)$ and
$(B', M_{B'}, i_{B'})$ consists of an error domain relation $d$ that has a
push-pull structure, is quasi-right-representable by a morphism $p_d : B' \to
B$, and is such that $Ud$ is quasi-left-representable by a morphism $e_d : UB
\to UB'$
\end{definition}

It is essential in the definition of a value relation that we have not
just that $c$ is quasi-left-representable but also that $\li c$ is
quasi-right-representable so that we can define the action of the
functor $\li$ on relations, taking a value relation to a computation
relation. Likewise, we require $Ud$ to be quasi-left-representable in
the definition of computation relations so that we can define the
action of $U$ on relations.


To model the term precision in a way that is oblivious to the stepping
behavior of terms, we need to weaken the notion of square to allow for the
morphisms on either side to be ``out of sync'' rather than in lock step.
In the below, $f' \bisimsq{A_i}{A_o} f$ is the natural extension of
bisimilarity to morphisms: given bisimilar inputs, the functions have
bisimilar outputs.
\begin{definition}
  Let $c_i : A_i \rel A_i'$, $c_o : A_o \rel A_o'$ $f : A_i \to A_o$
  and $g : A_i' \to A_o'$.  An \textbf{extensional square} $f
  \ltsqbisim{c_i}{c_o} g$ holds if there \emph{exist} $f'
  \bisimsq{A_i}{A_o} f$ and $g' \bisimsq{A_i'}{A_o'} g$ such that $f' \ltsq{c_i}{c_o} g'$.
\end{definition}

\subsection{Constructing the Denotational Semantics}\label{sec:relational-model-construction}

With these concepts now defined, we can give an overview of our
denotational model:
\begin{itemize}
  \item For every type $A$ we define a value type $\sem{A}$, and define $\sem{x_1:A_1,\ldots} = \sem{A_1}\times\cdots$
  \item For every term $\Gamma \vdash M : A$, we define a value morphism $\sem{M} : \sem{\Gamma} \to U\li\sem{A}$.
  \item If $M = M'$ are equal according to our equational theory, then $\sem{M} = \sem{M'}$
  \item For every type precision derivation $c : A \ltdyn A'$ we
    define a value relation $\sem{c} : \sem{A} \rel \sem{A'}$.
  \item For every type precision equivalence $c \equiv c'$ we have a
    quasi-equivalence of relations $\sem{c} \bisim \sem{c'}$.
  \item For every term precision $\Delta \vdash M \ltdyn M' : c$ we
    have an \emph{extensional} square
    $\sem{M}\ltsqbisim{\sem{\Delta}}{U\li\sem{c}} \sem{M'}$
\end{itemize}

The term semantics $\sem{M}$ is nearly identical to the term semantics in
Section~\ref{sec:concrete-term-model}. One difference is that the interpretation
of the function type $A_i \ra A_o$ in the term semantics of Section
\ref{sec:concrete-term-model} involves \emph{arbitrary functions}, while the
interpretation of the function type in the term semantics defined here involves
only those functions that are morphisms of predomains. Additionally, the
semantics of casts is slightly different. In the semantics of Section
\ref{sec:concrete-term-model}, the projection corresponding to the type
precision $cc'$ is simply the composition of the projections. In the new
semantics, the corresponding projection involves a perturbation.  
%
This is a consequence of the construction of the composition of
value/computation relations (the relevant projection is defined in the proof
of Lemma \ref{lem:representation-comp-F-U} in the Appendix).
%
%
However, the two projections are extensionally equivalent since they are the
same up to a perturbation.


The validity of the equational theory follows easily: $\beta\eta$ holds in our
model. In the remainder of this section, we elaborate on each of the remaining
components of the model.

\subsubsection{Interpreting Types}\label{sec:perturbation-constructions}

For our value type semantics, we already have an interpretation as predomains;
we describe how we enhance each of the constructions we use ($U,\li,\to,\times,D$)
with a monoid of syntactic perturbation and a means of interpreting them as semantic
perturbations.

First consider the free error domain $\li-$. Given a value object $(A, M_A,
i_A)$ we define the syntactic perturbations $M_{\li A}$ to be $\mathbb{N} \oplus
M_A$, where $\oplus$ denotes the free product of monoids (the coproduct in the
category of monoids). The intuition behind needing to include $\mathbb{N}$ comes
from the example of the downcast for $\iarr$. Recall that we adjusted the
squares corresponding to the $\dnl$ and $\dnr$ rules by adding $(\delta \circ
\eta)^\dagger$ on the side opposite the downcast. Thus, we need a syntactic
perturbation in $M_{\li A}$ that will be interpreted as $(\delta \circ
\eta)^\dagger$. We accomplish this by taking the coproduct with $\mathbb{N}$.
Since $\mathbb{N}$ is the free monoid on one generator, defining a homomorphism
from $\mathbb{N}$ to $\morbisimid{\li A}$ is equivalent to a choice of an
element in $\morbisimid{\li A}$.
The interpretation of the perturbations as endomorphisms is defined using the
universal property of the coproduct of monoids. Specifically, to define the
homomorphism $i_{\li A}$, by the universal property of the coproduct of monoids
it suffices to define two homomorphisms, one from $\mathbb{N} \to
\morbisimid{\li A}$ and one from $M_A \to \morbisimid{\li A}$. The first
homomorphism sends the generator $1 \in \mathbb{N}$ to the error domain
endomorphism $(\delta \circ \eta)^\dagger$, and the second sends $m_A \in M_A$ to
$\li(i_A(m_A))$. We observe in both cases that the resulting endomorphisms are
bisimilar to the identity, since in the first case, $-^\dagger$ preserves
bisimilarity, and in the second case, the action of $\li$ on morphisms preserves
bisimilarity.

Now we define the action of $U$ on objects. Given a computation object $(B, M_B,
i_B)$, we define $M_{UB}$ to be $\mathbb{N} \oplus M_B$. The reason for
requiring $\mathbb{N}$ is related to the Kleisli action of the arrow functor: to
establish quasi-representability of $U(c \arr d)$ given quasi-representability
of $c$ and $d$, we will need to turn a perturbation on $\li A$ into a
perturbation on $U(A \arr B)$. Since the perturbations for $\li A$ involve
$\mathbb{N}$, so must the perturbations for $UB$. The interpretation $i_{UB}$ of
the perturbations on $UB$ works in an analogous manner to that of $\li A$: we send
the generator $1 \in \mathbb{N}$ to the delay morphism $\delta_B \colon UB \to
UB$, and we send a perturbation $m_B \in M_B$ to $U(i_B(m_B))$.

The action of $\times$ on objects is as follows. Given $(A_1, M_{A_1}, i_{A_1})$
and $(A_2, M_{A_2}, i_{A_2})$, we define the monoid $M_{A_1 \times A_2} =
M_{A_1} \oplus M_{A_2}$. The interpretation $i_{A_1 \times A_2}$ is defined
using the universal property of the coproduct of monoids. That is, we define
homomorphisms $M_{A_1} \to \morbisimid{A_1 \times A_2}$ and $M_{A_2} \to
\morbisimid{A_1 \times A_2}$. The former takes $m_1 \in M_{A_1}$ to the morphism
$(i_{A_1}(m_1), \id)$, while the latter takes $m_2 \in M_{A_2}$ to the morphism
$(\id, i_{A_2}(m_2))$.

The action of $\arr$ on objects is as follows. Given $(A, M_A, iA)$ and $(B,
M_B, i_B)$ we define the monoid $M_{A \arr B} = M_A^{op} \oplus M_B$. (Here,
$M^{op}$ denotes the monoid with the same elements but with the order of
multiplication reversed, i.e., $x \cdot_{M^{op}} y := y \cdot_M x$.) To define
the interpretation $i_{A \arr B}$, it suffices to define a homomorphism
$M_A^{op} \to \morbisimid{A \arr B}$ and $M_B \to \morbisimid{A \arr B}$. The
former takes an element $m_A$ and a predomain morphism $f : A \to UB$ and
returns the composition $f \circ i_A(m_A)$, while the latter is defined
similarly by post-composition with $i_B$.

Finally, the most interesting type is of course the dynamic type. 
It may seem as though we need to define the perturbations for $D$ by
guarded recursion, but in fact we define them via least-fixpoint
in the category of monoids:
\( M_D \cong (M_{D \times D}) \oplus M_{U(D \to \li D)}. \)
\begin{remark}
  Here it is crucial to define the perturbations of the dynamic type
  \emph{inductively} rather than by guarded recursion. If we were to
  define the perturbations by guarded recursion, then it would be
  impossible to define the pull homomorphism for the $\iarr$ relation.
\end{remark}
We now explain how to interpret these perturbations as endomorphisms.
We define $i_D : M_D \to \{ f : D \to D \mid f \bisim \id \}$ by induction as follows:
\begin{align*}
  i_D(m_\times) &= \lambda d. \text{case $d$ of }
    \{ \itimes (d_1, d_2) \To \itimes (i_{D \times D}(m_\times)(d_1, d_2)) \alt x \To x \} \\
  i_D(m_\to) &= \lambda d. \text{case $d$ of }
    \{ \iarr (\tilde{f}) \To \iarr (\lambda t. i_{U(D \to \li D)}(m_\to)(\tilde{f}_t)) \alt x \To x \}
\end{align*}
In the case of a perturbation on $D \times D$, we use the interpretation for the
perturbations of $D \times D$ as defined earlier in this section, which in turn
will use $i_D$ inductively. Similarly, for a perturbation on $U(D \to \li D)$ we
use the interpretation for perturbations on $U(D \to \li D)$.
One can verify that this defines a homomorphism from $M_D \to \{ f : D \to D : f
\bisim \id \}$.


\subsubsection{Interpreting Type Precision}

Next we need to confirm that our interpretation of type precision
derivations as predomain relations indeed extends to value relations,
i.e., that the relations are quasi-representable and that they satisfy
the push-pull property for perturbations.
%
We include most of these technical verifications in the appendix (Definitions
\ref{def:value-computation-rel-comp} and
\ref{def:functorial-actions-on-relations}).
%

We go into more detail on the three relations involving the dynamic
type $\inat$, $\itimes$, and $\iarr$. As an illustrative case, we
establish the push-pull property for the relation $\iarr$. We define
$\pull_{\iarr} : M_D \to M_{U(D \arr \li D)}$ by giving a homomorphism
from $M_{D \times D} \to M_{U(D \arr \li D)}$ and from $M_{U(D \to \li
  D)} \to M_{U(D \arr \li D)}$. The former is the trivial homomorphism
sending everything to the identity element, while the latter is the
identity homomorphism.
Conversely, we define $\push_{\iarr} : M_{U(D \arr \li D)} \to M_D$ as the
inclusion into the coproduct.
Showing that the relevant squares exist involving push and pull is
straightforward.

It is easy to see that the relations $\inat$, $\itimes$, and $\iarr$
are quasi-left-representable, where all perturbations are taken to be the
identity elements of the respective monoids.
For quasi-right-representability, the only nontrivial case is $\li(\iarr)$.
Recall the definition of the downcast for $\iarr$ given in Section
\ref{sec:term-interpretation}. We know that in order for the $\dnr$ and $\dnl$ squares
for $\li \iarr$ to exist, we must insert a delay on the side opposite the
downcast. In terms of syntactic perturbations, this means that we take $\dellp$
and $\delrp$ in the definition of quasi-right-representability to both be
$\inl\, 1$, where $\inl$ is the left injection into the coproduct of monoids. We
recall that the definition of syntactic perturbations for $\li$ involves a
coproduct with $\mathbb{N}$, and that the interpretation homomorphism maps the
generator $1$ to the endomorphism of error domains $(\delta \circ
\eta)^\dagger$.
With this choice for the perturbations, it is straightforward to show that the
$\dnl$ and $\dnr$ squares exist using the definition of the downcast and the
definition of the relation $\iarr$.

Lastly, the interpretation of the type precision equivalence rules $c \equiv c'$
as quasi-equivalence of the denoted relations $\sem{c} \bisim \sem{c'}$ is
justified by Lemma \ref{lem:quasi-order-equiv-functors} in the Appendix.
%

\subsubsection{Interpreting Term Precision: Extensional Squares}

Finally, we establish the core result of the semantics: interpreting
term precision as extensional squares. Fortunately, the hard work is
now complete, and the term precision rules follow from our
compositional constructions.
In particular, the congruence rules of term precision are modeled easily: congruence
squares exist both for bisimilarity and for the error ordering, and so
we also have these for extensional squares.
The \textsc{ErrBot} rule holds because $\mho$ is the least element in the semantics of the . 
Next, the \textsc{EquivTyPrec} rule follows because $c \equiv c'$
implies quasi-equivalence of the denoted relations, and this in turn implies that
$\id \ltsqbisim{\sem{c}}{\sem{c'}} \id$.
Lastly, it remains to verify the existence of the squares for the cast rules. For $\upl$, the
relevant extensional square is obtained as follows:
%
\[\begin{tikzcd}[ampersand replacement=\&,column sep=large]
	{A_1} \& {A_2} \& {A_3} \\
	{A_2} \& {A_2} \& {A_3}
	\arrow["c", "\shortmid"{marking}, no head, from=1-1, to=1-2]
	\arrow[""{name=0, anchor=center, inner sep=0}, "{e_c}"', curve={height=6pt}, from=1-1, to=2-1]
	\arrow[""{name=1, anchor=center, inner sep=0}, "{e_c}", curve={height=-6pt}, from=1-1, to=2-1]
	\arrow["{c'}", "\shortmid"{marking}, no head, from=1-2, to=1-3]
	\arrow["{\delre_c}"', from=1-2, to=2-2]
	\arrow[""{name=2, anchor=center, inner sep=0}, "{\push_{c'}(\delre_c)}"', curve={height=6pt}, from=1-3, to=2-3]
	\arrow[""{name=3, anchor=center, inner sep=0}, "\id", curve={height=-6pt}, from=1-3, to=2-3]
	\arrow["{r(A_2)}"', "\shortmid"{marking}, no head, from=2-1, to=2-2]
	\arrow["{c'}"', "\shortmid"{marking}, no head, from=2-2, to=2-3]
	\arrow["\bisim"{description}, draw=none, from=0, to=1]
	\arrow["\bisim"{description}, draw=none, from=2, to=3]
\end{tikzcd}\]
%
We have used the push-pull structure on $c'$ to turn the perturbation on $A_2$
into a perturbation on $A_3$. Lastly we note that the bottom edge of the square
is equal to $c'$ by the fact that $c'$ is downward closed. The squares for $\upr$,
$\dnl$, and $\dnr$ are obtained analogously.


\subsection{Relational Adequacy}\label{sec:adequacy}

Now that we have established our denotational model, we can, as in
Section~\ref{sec:big-step-term-semantics}, extract a big step semantics. Our final
goal then is to prove that the extensional squares in our model imply
the graduality property for this final big step semantics.
The result we seek to show is the following:
\begin{theorem}[adequacy]
  If $\cdot \vdash M \ltdyn N : \nat$, then:
  \begin{itemize}
    \item If $M \Downarrow n$ then $N \Downarrow n$.
    \item If $N \Downarrow \mho$ then $M \Downarrow \mho$.
    \item If $N \Downarrow n$ then $M \Downarrow n$ or $M \Downarrow \mho$.
  \end{itemize}
\end{theorem}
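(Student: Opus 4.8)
The plan is to reduce the theorem to the behaviour of the two guarded relations on $\li \mathbb{N}$ and then transport everything along clock quantification. By the interpretation of term precision as extensional squares (Section~\ref{sec:relational-model-construction}), the hypothesis $\cdot \vdash M \ltdyn N : \nat$ yields an extensional square $\sem{M} \ltsqbisim{}{U\li\sem{\nat}} \sem{N}$. Since $M$ and $N$ are closed and $\sem{\nat} = \mathbb{N}$ carries the flat structure (so $\ltdyn_\nat$ and $\bisim_\nat$ are both equality), unfolding this square produces guarded elements $f, g : U\li\mathbb{N}$, defined uniformly in the ambient clock $k$, with $\sem{M} \bisim f$, $f \ltls g$, and $g \bisim \sem{N}$, where $\ltls$ is the lock-step error ordering and $\bisim$ is weak bisimilarity. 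I would first record the purely guarded consequences of the middle relation: because $\ltdyn_\nat$ is equality, $f \ltls g$ forces $f$ and $g$ to step in exact synchrony until $f$ possibly errors, so that if $f$ returns a value then $g$ returns the same value, if $g$ errors then $f$ errors no later, and if $g$ returns a value then $f$ either returns that same value or errors.

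The heart of the argument is passing from these guarded statements to facts about the big-step function $-\Downarrow$, which is defined only after clock quantification via the isomorphism $\alpha : \li^{gl}\mathbb{N} \cong \delay(\mathbb{N} + \mho)$ and the partial function $-\Downarrow^\text{Delay}$. I would form the global witnesses $f^{gl} := \lambda k.\, f$ and $g^{gl} := \lambda k.\, g$ and their images under $\alpha$. The key lemmas to establish are two globalization facts, each proved by quantifying over all clocks so that a terminating computation's result is no longer trapped under a $\later$. First, weak bisimilarity globalizes to sameness of termination behaviour: if $\forall k.\, x_k \bisim y_k$, then $\alpha(x) \da^i n$ for some $i$ iff $\alpha(y) \da^j n$ for some $j$, and the analogous equivalences hold for $\mho$ and for divergence. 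Second, the lock-step ordering globalizes to the relation $\ledelay$ on $\delay(\mathbb{N} + \mho)$ recording the synchronization property above: if $\forall k.\, f_k \ltls g_k$, then either $\alpha(f)$ runs to $\mho$, or $\alpha(f)$ and $\alpha(g)$ run to a common value, or both diverge. I would prove these by combining the inductive definition of $d \da^i n_?$ with the fact that $F(X) = \mathbb{N} + \mho + X$ commutes with clock quantification, so that $\li^{gl}\mathbb{N}$ is genuinely the coinductive $\delay(\mathbb{N}+\mho)$, exactly as in Section~\ref{sec:big-step-term-semantics}.

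Finally I would assemble the three bullets by chaining the globalized relations $\alpha(\sem{M}^{gl}) \bisimdelay \alpha(f^{gl}) \ledelay \alpha(g^{gl}) \bisimdelay \alpha(\sem{N}^{gl})$. For the first bullet, $M \Downarrow n$ means $\alpha(\sem{M}^{gl}) \da^i n$; this transfers across the leftmost $\bisimdelay$ to give $\alpha(f^{gl}) \da^{i'} n$, the $\ledelay$ step forces $\alpha(g^{gl})$ to the same value, and the rightmost $\bisimdelay$ yields $N \Downarrow n$. The second bullet follows from the error-propagation clause of $\ledelay$ together with the $\mho$-preservation of $\bisimdelay$, and the third from the value-or-error clause of $\ledelay$ together with the value-preservation of $\bisimdelay$.

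I expect the main obstacle to be the globalization step, and specifically justifying that the clock-quantified relations deliver honest natural-number and error outcomes. The difficulty, flagged already in Section~\ref{sec:big-step-term-semantics}, is that inside the guarded logic a terminating result sits under iterated $\later$s and cannot be extracted, so the desired statement is in fact false as a guarded proposition. The resolution is to phrase everything after $\forall k$ and to route through the coinductive $\delay$ type, where $d \da^i n_?$ is an ordinary inductive and hence global predicate. The subtle point that makes the final chaining legitimate is verifying that the guarded witnesses $f, g$ extracted from the extensional square are uniform in $k$, so that the existential quantifier commutes past $\forall k$ and genuinely supplies the global witnesses $f^{gl}, g^{gl}$ rather than merely a clock-indexed family.
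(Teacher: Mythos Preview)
Your proposal is correct and follows essentially the same route as the paper: obtain an extensional square from term precision, pass to the global setting via clock quantification, identify the globalized relations with their coinductive counterparts on $\delay(\mathbb{N}+\mho)$, and read off the three termination clauses. The paper packages the middle step slightly differently—rather than extracting and globalizing individual witnesses $f,g$, it globalizes each relation wholesale (defining $\ltls^{gl}$ and $\bisim^{gl}$ as $\forall k$-quantified relations) and invokes an adaptation of Kristensen et al.'s Theorem~4.3 to show these coincide with $\ledelay$ and $\bisimdelay$—but the underlying argument is the same, and the obstacle you correctly isolate (commuting the existential in the extensional square past $\forall k$) is exactly where both presentations lean on the type-theoretic axiom of choice for $\Sigma$-types.
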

%
%
It therefore remains to show that the denotation of term precision in our model
$f \ltsqbisim{}{} g$, where $f, g : \li \mathbb{N}$, implies a relationship
between the partial elements $\mathbb{N} + {\mho}$ corresponding to $f$ and $g$. Recall
that to define the big-step semantics we used the equivalence of the
globalization of the free error domain and Capretta's coinductive delay monad:
$\li^{gl} \mathbb{N} \cong \delay(\mathbb{N} + {\mho})$. We now extend this
correspondence to a relationship between a global version of our denotation of
term precision and the corresponding relation on the delay monad.


To that end, we define, using clock quantification, a version of our denotation of term precision
that relates terms of type $\li^{gl} \mathbb{N}$. More concretely, consider the
predomain $\mathbb{N}$ with ordering and bisimilarity both the equality
relation. We define a global version of the lock-step error ordering and the
weak bisimilarity relation on elements of $\li^{gl} \mathbb{N}$; the former is
defined by
\( x \ltls^{gl} y := \forall k. x[k] \ltls y[k], \)
and the latter is defined by
\( x \bisim^{gl} y := \forall k. x[k] \bisim y[k]. \)
In these two definitions, we apply the clock $k$ to the elements $x$ and $y$ of
type $\li^{gl} \mathbb{N} := \forall (k : \Clock). \li^k \mathbb{N}$.

Correspondingly, we define a version of the lock-step error ordering and weak
bisimilarity relation on $\delay(\mathbb{N} + {\mho})$ (see Section
\ref{sec:relations-on-delay-monad} in the appendix for the definitions).
%
%
%
%
By adapting Theorem 4.3 of Kristensen et al. \cite{kristensen-mogelberg-vezzosi2022} to the
setting of inductively-defined relations, we can show that both the global
lock-step error ordering and the global weak bisimilarity admit coinductive
definitions. In particular, modulo the above isomorphism between $\li^{gl} X$
and $\delay(\mathbb{N} + {\mho})$, the global version of the lock-step error
ordering is equivalent to the lock-step error ordering on $\delay(\mathbb{N} +
{\mho})$, and the global version of weak bisimilarity is equivalent to weak
bisimilarity on $\delay(\mathbb{N} + {\mho})$.
This implies that the global denotation of the term precision semantics for
$\li^{gl} \mathbb{N}$ agrees with the corresponding relation $\ltsqbisim{}{}^\text{Delay}$ for $\delay(\mathbb{N} + {\mho})$, i.e.,
the closure of its lock-step ordering under weak bisimilarity on both sides.
Then to obtain the desired result, it suffices to show that being related in
$\ltsqbisim{}{}^\text{Delay}$ implies that the corresponding partial elements of
$\mathbb{N} + {\mho}$ are related in the expected manner. Specifically, recall
from Section \ref{sec:big-step-term-semantics} that $-\Downarrow^\text{Delay}$
is a partial function from delays to $\mathbb{N} + {\mho}$ defined by
existential quantification over the number of steps in which the argument terminates.
We want to show that if $d \ltsqbisim{}{}^\text{Delay} d'$, then 
$(d \Downarrow^\text{Delay}) \ltdyn^\text{Partial}\, (d' \Downarrow^\text{Delay})$,
where the relation $\ltdyn^\text{Partial}$ is defined such that $\mho$ is the
least element, $\bot \ltdyn^\text{Partial} \bot$, and $n \ltdyn^\text{Partial}
n$. This implication follows from the definition of
$\ltsqbisim{}{}^\text{Delay}$.





\section{Discussion}\label{sec:discussion}

\subsection{Related Work}

We give an overview of current approaches to proving graduality and discuss
their limitations where appropriate.


The methods of Abstracting Gradual Typing \cite{garcia-clark-tanter2016} and the
formal tools of the Gradualizer \cite{cimini-siek2016} provide a way to derive
from a statically-typed language a language that satisfies graduality by
construction. The main downside to these approaches lies in their inflexibility:
since the process in entirely mechanical, the language designer must adhere to
the predefined framework.  Many gradually typed languages do not fit into either
framework, e.g., Typed Racket \cite{tobin-hochstadt06, tobin-hochstadt08}, and
the semantics produced is not always the desired one.
Furthermore, while these frameworks do prove graduality of the resulting
languages, they do not show the correctness of the equational theory, which is
equally important to sound gradual typing.


A line of work by New, Licata and Ahmed
\cite{new-ahmed2018,new-licata18,new-licata-ahmed2019} develops an axiomatic
account of gradual typing and proves graduality by interpreting type precision
$c : A \ltdyn A'$ as an \emph{embedding-projection pair}, that is, a pure
embedding function $e : A \to A'$ and a possibly erroring projection $p : \li A'
\multimap \li A$ such that $p \circ e = \id$ and $e \circ p \ltdyn \id$. At
first glance, our model of precision looks different: it is a \emph{relation}
$\sem{c} : A \rel A'$ with a quasi-representability structure that gives us
something close to $e$ and $p$ in an embedding-projection pair. The relationship
between these models is that \emph{if} the relation were \emph{truly}
representable we would have that $e$ and $p$ form a \emph{Galois connection} $e
\circ p \ltdyn \id$ and $\id \ltdyn p \circ e$. However, we have dropped the
stronger property of retraction from our analysis in this work. With true
representability, the relation $c$ is \emph{uniquely determined} by the
embedding $e$, but since we only have quasi-representability, we need to keep
the relation around explicitly. So the extra complexity of managing explicit
relations is a cost of the intensional reasoning that guarded type theory
introduces. 
The line of work by New, Licata, and Ahmed involved both denotational methods
and operational logical relations approaches.




On the denotational side, New and Licata \cite{new-licata18} showed that the
graduality proof could be modeled using semantics in certain kinds of double
categories, which are categories internal to the category of categories. A
double category extends a category with a second notion of morphism, often a
notion of ``relation'' to be paired with the notion of functional morphism, as
well as a notion of functional morphisms preserving relations. In gradual
typing, the notion of relation models type precision and the squares model the
term precision relation.
%
%
The double-categorical framework serves as a foundation on which we base our
model construction in this paper. However, our model is not actually a double
category: the lack of transitivity of bisimilarity means that we cannot compose
extensional squares horizontally.

With the notion of abstract categorical model for gradual typing in hand,
denotational semantics is then the work of constructing concrete models that
exhibit the categorical construction. New and Licata \cite{new-licata18} present
such a model using categories of $\omega$-CPOs, and this model was extended by
Lennon-Bertrand, Maillard, Tabareau and Tanter \cite{gradualizing-cic} to prove
graduality of a gradual dependently typed calculus $\textrm{CastCIC}^{\mathcal
G}$. This domain-theoretic approach meets our criterion of being a semantic
framework for proving graduality, but suffers from the limitations of classical
domain theory: the inability to model viciously self-referential structures such
as higher-order extensible state and similar features such as runtime-extensible
dynamic types.



On the operational side, a series of works
\cite{new-ahmed2018, new-licata-ahmed2019, new-jamner-ahmed19}
developed step-indexed logical relations models of gradually typed languages.
Unlike classical domain theory, such step-indexed techniques can scale to
essentially arbitrary self-referential definitions, which means they can model
higher-order store and runtime-extensible dynamic types
\cite{appelmcallester01,ahmed06,neis09,new-jamner-ahmed19}. However, as is
common with operational approaches, their proof developments are highly
repetitive and technical, with each development formulating a logical relation
from first-principles and proving many of the same tedious lemmas without
reusable mathematical abstractions.
%
%
%
This is addressed somewhat by Siek and Chen \cite{siek-chen2021}, who give a
proof in Agda of graduality for an operational semantics using a \emph{guarded
logic of propositions} shallowly embedded in Agda. The guarded logic simplifies
the treatment of step-indexed logical relations, but the approach is still
fundamentally operational, and so the main lemmas of the work are still tied to
the particular operational syntactic calculus being modeled.

Our work combines the benefits of both the denotational and step-indexed
approaches. We take as a starting point the denotational approach of New and
Licata, but we work with guarded type theory (a synthetic form of step-indexing)
rather than classical domain theory. As we have seen, in the guarded setting the compositional
theory of New-Licata breaks down. 
Our work provides a novel refinement to their theory that allows for some
compositional reasoning to be recovered even in the guarded setting.

Eremondi \cite{Eremondi_2023} uses guarded type theory to
define a syntactic model for a gradually-typed source
language with dependent types. By working in guarded type theory, they are
able to give an exact semantics to non-terminating computations in a language
which is logically consistent, allowing for metatheoretic results about the
source language to be proven.
Similarly to our approach, they define a guarded lift monad to model potentially-
nonterminating computations and use guarded recursion to model the dynamic type.
However, they do not give a denotational semantics to term precision and it is unclear
how to prove graduality in this setting.
Their work includes a formalization of the syntactic model in Guarded Cubical Agda.

\subsection{Comparison of Denotational and Operational Approaches}

Denotational methods have several benefits compared to operational approaches,
and this carries over to the setting of gradual typing. First, denotational
methods are compositional and reusable, and allow for the use of existing
mathematical constructs and theorems, e.g., partial orderings, monads, etc,
while operational methods tend to require a significant amount of boilerplate
work to be done from scratch in each new development.
As a specific example of the compositional nature of our approach, the treatment
of the cast rules in our work is more compositional than in previous work using
operational semantics. Recall from Section \ref{sec:towards-relational-model}
that the cast rules needed for the proof of graduality build in composition of
type precision derivations. Rather than proving these from scratch in the model,
we are able to take as primitive simpler versions of the cast rules whose
validity in the model is easier to establish. Then from these simpler rules, we
derive the original ones using compositional reasoning.

A further advantage of the denotational approach is that it is completely
independent of any particular syntax of gradual typing. This allows for reuse
across multiple languages and makes it particularly straightforward to
accommodate additions to a language: adding support for a new type amounts to
defining a new object and extending the dynamic type accordingly. It also may be
possible to model alternative cast semantics by changing the definition of the
dynamic type and some type casts.
In contrast, operational methods are not as readily extensible, generally
requiring adding cases to the logical relations and the inductive proofs. In a
mechanized metatheory, this manifests as a ``copy-paste'' reusability rather than
the true code reuse one obtains with a denotational semantics.
An additional benefit to the denotational approach is that it is trivial to
establish the validity of the $\beta$ and $\eta$ principles, because they are
equalities in the semantics, whereas they require tedious manual proof in the
operational approach.

Finally, while we advocate for a denotational approach, the overall structure of
our semantics could be applied to an operational logical relations proof. For
example, there could be a step-indexed logical relation for strong error
ordering and one for weak bisimilarity, corresponding to the fact that objects
in our denotational model have an error ordering and a bisimilarity relation.
Then the precision rules for casts could be similarly validated using
perturbations as we did in this work.

\subsection{Mechanization}\label{sec:mechanization}
In parallel with developing the theory discussed in this paper, we have
developed a partial formalization of our results in Guarded Cubical Agda
\cite{veltri-vezzosi2020}; the formalization is available online \cite{artifact}.
We formalized the major components of the definition of the concrete model
described in Section \ref{sec:concrete-relational-model}: predomains, error
domains, morphisms, relations, squares, using the guarded
features to define the free error domain, the lock-step error ordering, and weak
bisimilarity. We also formalized the no-go theorem from Section
\ref{sec:towards-relational-model}.

Building on these definitions, we formalized the notion of semantic
perturbations and push-pull structures as well as quasi-representable relations,
culminating in the definition of value and computation types and relations as
introduced in Section \ref{sec:concrete-relational-model}.
%
%
We constructed the predomain for the dynamic type via mixed induction and
guarded recursion, and defined its monoid of perturbations. We also defined the
relations $\inat$, $\itimes$, and $\iarr$, and proved that they are
quasi-representable.




Lastly, we have formalized the big-step term semantics discussed in Section
\ref{sec:big-step-term-semantics} and the adequacy of the relational model
discussed in Section \ref{sec:adequacy}. This required us to add axioms about
clock quantification as well as axioms asserting the \emph{clock-irrelevance} of
booleans and natural numbers since as of this writing these axioms are not
built-in to Guarded Cubical Agda. These axioms are discussed in prior work on
guarded type theory \cite{atkey-mcbride2013, kristensen-mogelberg-vezzosi2022}.
The mechanization of adequacy also required us to formalize some essential
lemmas involving clocks and clock-irrelevance; we are considering later
refactoring these as part of a ``standard library'' for Guarded Cubical
Agda.

As of this writing, the remaining formalization work is the following:
\begin{enumerate}
    \item Showing that the functors $\times$ and $\arr$ preserve
    quasi-representability, which requires tedious reasoning about the Kleisli actions of
    $\times$ and $\arr$. These results are proved in the Appendix.
    \item Verifying the rules in the model corresponding to the equations for
    type precision derivations (see the bottom of Figure \ref{fig:gtlc-syntax}).
    These are also included in the Appendix.
    \item Formalizing the syntax-to-semantics translation.
\end{enumerate}

\subsection{Future Work}


Our immediate next step is to apply our approach to give a denotational
semantics to gradually-typed languages with advanced type systems including
higher-order state or runtime-extensible dynamic typing
\cite{DBLP:journals/corr/abs-2210-02169}, as well as richer type disciplines
such as gradual dependent types and effect systems. This will involve adapting
prior operational models based on step-indexing (e.g.,
\cite{new-giovannini-licata-2022}) to the denotational setting and using guarded
type theory to obtain solutions to guarded domain equations as we did in this
work for the denotation of the dynamic type. The generality of our techniques
should allow for many of the constructions to be reused across different
languages. For instance, the free error domain construction can be easily
extended to model effects besides error and stepping
\cite{guarded-interaction-trees, recursion-probabilistic-choice-guarded}.
We also aim to complete our Agda formalization and evolve it into a
reusable framework for mechanized denotational semantics of gradually-typed
languages.

This work has focused on the ``Natural'' semantics of casts , which
validate the full call-by-value $\eta$ laws for types, but other cast
semantics have been proposed such as eager
(\cite{herman-tomb-flanagan-2010}) and transient (\cite{transient})
cast semantics which trade off certain $\eta$ equalities for other
benefits such as early error detection or reduced runtime
overhead \cite{deepandshallowtypes,new-licata-ahmed2019}. It would be a
good test of the generality of our semantic framework to see if it can
model these alternative semantics by adjusting the semantics of types
and casts, and providing proofs of weakened $\eta$ principles.

Additionally, while our focus in this work has been on reasoning up to
weak bisimilarity, the presence of explicit step counting in the model
could be viewed as a form of \emph{cost semantics}, where a runtime
cost is incurred from inspecting the dynamic type. This could possibly
be used to verify that cast optimizations such as space efficient
implementations \cite{herman-tomb-flanagan-2010} are not only
extensionally correct, but have a lower abstract cost.

\bibliographystyle{ACM-Reference-Format}
\begin{DIFnomarkup}
\bibliography{references}
\end{DIFnomarkup}

\newpage
\appendix
\section{Gradual Typing Syntax}\label{sec:appendix-gtlc-syntax}

\newcommand{\dynof}[1]{\textrm{dyn}(#1)}
\begin{theorem}
  For every $A$, there is a derivation $\dynof A : A \ltdyn D$
\end{theorem}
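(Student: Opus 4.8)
The plan is to prove the statement by structural induction on the type $A$, constructing the derivation $\dynof{A}$ explicitly in each case. Recall that $D$ here is the dynamic type $\dyn$, and that the generators $\inat$, $\itimes$, $\iarr$ witness precisely the three type tags of $\dyn$, i.e.\ that $\nat$, $\dyn \times \dyn$, and $\dyn \ra \dyn$ are each less dynamic than $\dyn$. These generators, together with reflexivity, transitivity, and the monotonicity rules for $\times$ and $\ra$, are exactly what the induction will consume.

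For the base cases, when $A = \nat$ I would take $\dynof{\nat} = \inat$, and when $A = \dyn$ I would take $\dynof{\dyn} = r(\dyn)$, the reflexivity derivation. For the product case $A = A_1 \times A_2$, the inductive hypotheses supply derivations $\dynof{A_1} : A_1 \ltdyn \dyn$ and $\dynof{A_2} : A_2 \ltdyn \dyn$. Applying the monotonicity rule for products yields $\dynof{A_1} \times \dynof{A_2} : A_1 \times A_2 \ltdyn \dyn \times \dyn$, and composing this with the generator $\itimes : \dyn \times \dyn \ltdyn \dyn$ via transitivity gives the required derivation $\dynof{A_1 \times A_2} := (\dynof{A_1} \times \dynof{A_2})\,\itimes$.

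The function case $A = A_i \ra A_o$ proceeds analogously: from $\dynof{A_i} : A_i \ltdyn \dyn$ and $\dynof{A_o} : A_o \ltdyn \dyn$ I would form $\dynof{A_i} \ra \dynof{A_o} : (A_i \ra A_o) \ltdyn (\dyn \ra \dyn)$ and then compose with $\iarr : \dyn \ra \dyn \ltdyn \dyn$, setting $\dynof{A_i \ra A_o} := (\dynof{A_i} \ra \dynof{A_o})\,\iarr$. The one point that deserves attention --- and essentially the only place the argument could go wrong --- is the domain position of the arrow. In the standard gradual-typing formulation, function precision is contravariant in its domain, so the inductive hypothesis $\dynof{A_i}$ would be oriented the wrong way and this step would fail. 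Here, however, the monotonicity rule $c_i \ra c_o$ is \emph{covariant in all arguments}, as emphasized in Section~\ref{sec:GTLC}; this is precisely what makes the domain hypothesis usable directly, so the construction goes through uniformly. Since each case produces a well-typed derivation ending in $\ltdyn \dyn$, this completes the induction, and no genuine obstacle arises beyond confirming that the covariant arrow rule licenses the function step.
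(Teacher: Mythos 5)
Your proposal is correct and matches the paper's proof exactly: the same induction on $A$ with $r(\dyn)$, $\inat$, $(\dynof{A_1}\times\dynof{A_2})\itimes$, and $(\dynof{A_i}\ra\dynof{A_o})\iarr$ in the respective cases. Your side remark about covariance of the arrow rule being what makes the domain step go through is accurate but not something the paper dwells on.
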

\begin{proof}
  By induction on $A$:
  \begin{enumerate}
  \item $A = \dyn$, then $r(\dyn) : \dyn \ltdyn \dyn$
  \item $A = \nat$, then $\inat : \nat \ltdyn \dyn$
  \item $A = A_1 \ra A_2$ then $(\dynof {A_1} \ra \dynof {A_2})\iarr : A_1 \ra A_2 \ltdyn \dyn$ 
  \item $A = A_1 \times A_2$ then $(\dynof {A_1} \times \dynof{A_2})\itimes : A_1 \times A_2 \ltdyn \dyn$ 
  \end{enumerate}
\end{proof}

\begin{theorem}
  \label{thm:thin}
  For any two derivations $c,c' : A \ltdyn A'$ of the same precision
  $c \equiv c'$
\end{theorem}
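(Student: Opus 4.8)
The plan is to prove a \emph{canonicalization} result: every derivation $c : A \ltdyn A'$ is equivalent to a single canonical derivation $d_{A,A'}$ determined only by the pair of types $(A,A')$. The theorem then follows immediately, since two derivations $c, c' : A \ltdyn A'$ satisfy $c \equiv d_{A,A'}$ and $c' \equiv d_{A,A'}$, so $c \equiv c'$ by symmetry and transitivity of $\equiv$. First I would define $d_{A,A'}$ by recursion on the types. This relies on an easy inversion lemma (proved by induction on derivations) stating that whenever $A' \neq \dyn$, a precision $A \ltdyn A'$ forces $A$ to share the head connective of $A'$ and to be related componentwise. The definition is: $d_{\nat,\nat} = r(\nat)$; $d_{A,\dyn} = \dynof{A}$, reusing the construction of the previous theorem; $d_{A_1 \ra A_2,\, A_1' \ra A_2'} = d_{A_1,A_1'} \ra d_{A_2,A_2'}$; and $d_{A_1 \times A_2,\, A_1' \times A_2'} = d_{A_1,A_1'} \times d_{A_2,A_2'}$ (prioritizing the $\dyn$ clause, so the recursion is unambiguous, and noting $\dynof{\dyn} = r(\dyn)$ makes the two readings of $d_{\dyn,\dyn}$ agree).

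Two auxiliary lemmas are then needed. \textbf{Lemma A:} $d_{A,A} \equiv r(A)$, proved by induction on $A$; the base cases are definitional, and the function and product cases use the reflexivity--functoriality axioms $r(A_i \ra A_o) \equiv r(A_i) \ra r(A_o)$ and $r(A_1 \times A_2) \equiv r(A_1) \times r(A_2)$ together with the inductive hypothesis. \textbf{Lemma B (composition):} whenever $A \ltdyn A' \ltdyn A''$, we have $d_{A,A'}\, d_{A',A''} \equiv d_{A,A''}$. This is proved by induction on the types with a case analysis on $A''$. When $A''$ shares a head connective with $A'$ (both functions, both products, or both $\nat$), the monotonicity--functoriality axioms $(c_i \ra c_o)(c_i' \ra c_o') \equiv (c_i c_i' \ra c_o c_o')$ and $(c_1 \times c_2)(c_1' \times c_2') \equiv (c_1 c_1' \times c_2 c_2')$ push the composition inside the connective, and the inductive hypothesis finishes the job.

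The interesting case of Lemma B, and the main obstacle overall, is $A'' = \dyn$, where $d_{A',\dyn} = \dynof{A'}$ is built from the injection generators and must be shown to absorb a congruence derivation on the left. For instance, if $A' = A_1' \times A_2'$ (so $A = A_1 \times A_2$), then $d_{A',\dyn} = (\dynof{A_1'} \times \dynof{A_2'})\itimes$, and using associativity, $\times$-functoriality, and the inductive hypothesis $d_{A_i,A_i'}\, \dynof{A_i'} \equiv \dynof{A_i}$ we compute $d_{A,A'}\, d_{A',\dyn} \equiv (\dynof{A_1} \times \dynof{A_2})\itimes = \dynof{A} = d_{A,\dyn}$; the $A' = A_1' \ra A_2'$ and $A' = \nat$ sub-cases are analogous, using $\iarr$ and the identity laws respectively. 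With Lemmas A and B in hand, the main induction on $c$ is routine: reflexivity uses Lemma A; composition $cc'$ uses the inductive hypotheses and Lemma B; the congruence rules $c_i \ra c_o$ and $c_1 \times c_2$ fold into $d_{A,A'}$ directly by the inductive hypotheses; and each generator equals its canonical form after a short calculation (e.g.\ $d_{\dyn \times \dyn,\, \dyn} = (r(\dyn) \times r(\dyn))\itimes \equiv r(\dyn \times \dyn)\itimes \equiv \itimes$, using reflexivity--functoriality and the identity axiom, and similarly $\iarr \equiv d_{\dyn \ra \dyn,\, \dyn}$ and $\inat \equiv d_{\nat,\dyn}$). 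I expect essentially all the difficulty to be concentrated in Lemma B, where the bookkeeping of how congruence derivations interact with the $\dynof{-}$ gadget must be carried out carefully using exactly the functoriality equations of the precision equivalence theory.
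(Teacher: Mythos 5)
Your proposal is correct and follows essentially the same route as the paper: the paper also argues via canonical forms, exhibiting a sub-presentation with unique derivations and asserting that it is closed under reflexivity, cut-elimination, and congruence, which are precisely your Lemma A, Lemma B, and the congruence cases of your main induction. Your write-up just makes explicit (via the $d_{A,A'}$ function and the absorption computation for $\textrm{dyn}(-)$) the cut-elimination step that the paper leaves as an assertion.
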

\begin{proof}
  \begin{enumerate}
  \item We show this by showing that derivations have a canonical
    form.

    The following presentation of precision derivations has unique derivations
    \begin{mathpar}
      \inferrule{}{\textrm{refl}(D) : D \ltdyn D}\and
      \inferrule{}{\textsf{Inj}_{\text{nat}} : \nat \ltdyn D}\and
      \inferrule{}{\textrm{refl}(\nat) : \nat \ltdyn \nat}\and
      \inferrule{c : A_i \ra A_o \ltdyn D\ra D}{c(\textsf{Inj}_{\text{arr}}) : A_i \ra A_o \ltdyn \nat}\and
      \inferrule{c : A_i \ltdyn A_i' \and d : A_o \ltdyn A_o'}{c \ra d : A_i \ra A_o \ltdyn A_i'\ra A_o'}
    \end{mathpar}
    Since it satisfies reflexivity, cut-elimination and congruence, it
    is a model of the original theory. Since it is a sub-theory of the
    original theory, it is equivalent.
  \end{enumerate}
\end{proof}

As mentioned in Section~\ref{sec:GTLC}, the cast calculus
separates gradual type casts into downcasts and upcasts, whereas many
gradual calculi are built out of a single notion of cast
\[ \inferrule{\Gamma \vdash M : A}{\Gamma \vdash (M :: A') : A' }\]
The term precision rules for this style of calculus are as follows\cite{siek_et_al:LIPIcs:2015:5031}:
\begin{mathpar}
  \inferrule*[Right=CastRight]
  {\Delta \vdash M \ltdyn N_1 : c_1 \and
    c_{1} : A \ltdyn A_1\and
    c_2 : A \ltdyn A_2
  }
  {\Delta \vdash M \ltdyn (N :: A_2) : c_2}

  \inferrule*[Right=CastLeft]
  {\Delta \vdash M_1 \ltdyn N : c_1 \and
    c_1 : A_1 \ltdyn A\and
    c_2 : A_2 \ltdyn A
  }
  {\Delta \vdash (M_1 :: A_2) \ltdyn N : c_2}
\end{mathpar}
New-Ahmed and New-Licata showed that such if the cast $M_1 :: A_2$ for $M_1 : A_1$ is interpreted as
\[ \dnc {(\dynof {A_2})}{\upc {(\dynof{A_1})} M} \]
then the CastRight and CastLeft rules are derivable from the UpL/UpR/DnL/DnR rules
and the \emph{retraction} property of casts: that a downcast after an upcast
is equal to the identity.
In the intensional setting, this retraction property needs to be
weakened: a downcast after an upcast is merely \emph{weakly bisimilar}
to the identity.
But this weak bisimilarity is no longer strong enough to show the validity
of CastRight and CastLeft, due to the lack of transitivity of weak
bisimilarity.

Fortunately there is an alternative argument that makes CastRight and
CastLeft valid if we change slightly the interpretation of casts. To
show this alternative translation we need to first introduce the
\emph{least upper bound} of types:
\begin{lemma}
  The preordered set of types with type precision as ordering has all
  binary least upper bounds defined as follows:
  \begin{align*}
    (A_1 \times A_2) \sqcup (A_1' \times A_2') &= (A_1 \sqcup A_1') \times (A_2 \sqcup A_2')\\
    (A_1 \ra A_2) \sqcup (A_1' \ra A_2') &= (A_1 \ra A_1') \times (A_2 \ra A_2')\\
    \nat \sqcup \nat &= \nat\\
    A \sqcup A' &= \dyn \text{ otherwise }
  \end{align*}
\end{lemma}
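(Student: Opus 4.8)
The plan is to verify the three defining conditions of a binary least upper bound directly for the proposed formula, by a simultaneous induction on the structure of the two argument types $A$ and $A'$. Recall that $C$ is a least upper bound of $A$ and $A'$ when (i) $A \ltdyn C$, (ii) $A' \ltdyn C$, and (iii) every common upper bound $B$ (with $A \ltdyn B$ and $A' \ltdyn B$) satisfies $C \ltdyn B$. Since the type-precision preorder is thin (Theorem~\ref{thm:thin}), a least upper bound is unique up to $\equidyn$, so it suffices to check these three conditions for the stated $A \sqcup A'$. I note in passing that the function clause should read $(A_1 \sqcup A_1') \ra (A_2 \sqcup A_2')$ and is handled exactly like the product clause, using covariance of $\ra$ in both arguments.

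The upper-bound conditions (i) and (ii) are routine. For the matching-constructor clauses (both products, both functions, or both $\nat$) I would invoke the inductive hypotheses $A_j \ltdyn (A_j \sqcup A_j')$ and $A_j' \ltdyn (A_j \sqcup A_j')$ and then apply the covariant monotonicity rules for $\times$ and $\ra$ to assemble the desired derivations, while the $\nat \ltdyn \nat$ case is just reflexivity. For the ``otherwise'' clause, where $C = \dyn$, both $A \ltdyn \dyn$ and $A' \ltdyn \dyn$ follow immediately from the fact that $\dyn$ is the top element, i.e. from the derivation $\dynof{A} : A \ltdyn \dyn$ established in the appendix.

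The real content is condition (iii), and it rests on an inversion (head-constructor preservation) principle for precision: if $\nat \ltdyn B$ then $B$ is $\nat$ or $\dyn$; if $(A_1 \ra A_2) \ltdyn B$ then either $B = \dyn$ or $B = B_1 \ra B_2$ with $A_1 \ltdyn B_1$ and $A_2 \ltdyn B_2$; the analogous statement holds for $\times$; and if $\dyn \ltdyn B$ then $B = \dyn$. Granting this, leastness is a short case analysis. In a matching-constructor clause, say products, inversion applied to $A_1 \times A_2 \ltdyn B$ yields $B = \dyn$ — in which case $C \ltdyn \dyn$ holds since $\dyn$ is top — or $B = B_1 \times B_2$ with $A_j \ltdyn B_j$; inversion on the primed upper bound then gives $A_j' \ltdyn B_j$ as well, so the inductive hypothesis for leastness yields $(A_j \sqcup A_j') \ltdyn B_j$, and monotonicity of $\times$ reassembles $C \ltdyn B$. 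In the ``otherwise'' clause, inversion on $A \ltdyn B$ forces $B$ to carry $A$'s head constructor or to be $\dyn$, while inversion on $A' \ltdyn B$ forces $B$ to carry $A'$'s distinct head constructor or to be $\dyn$; since the two constructors differ, the only possibility is $B = \dyn$, whence $C = \dyn \ltdyn \dyn$ trivially.

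The main obstacle is establishing the inversion principle, because the type-precision relation is generated with an explicit transitivity (cut) rule, so a naive induction on derivations must account for arbitrary intermediate compositions $cc'$. The cleanest route is to reuse the cut-free canonical-form presentation from Theorem~\ref{thm:thin}: that presentation is transitivity-free, equivalent to the original theory, and has derivations uniquely determined by their endpoints, so reading off the possible shapes of $B$ from the last rule of a canonical derivation of $A \ltdyn B$ gives exactly the inversion statements above. Once inversion is in hand, the remaining bookkeeping is mechanical and the induction closes.
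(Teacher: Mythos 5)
The paper states this lemma in the appendix but supplies no proof at all, so there is nothing to compare against; your argument stands on its own, and it is correct. You rightly flag the typo in the function clause (the right-hand side should be $(A_1 \sqcup A_1') \ra (A_2 \sqcup A_2')$, not a product), and your overall strategy --- structural induction on the pair of types, with the upper-bound conditions following from covariance of $\ra$ and $\times$ and from $A \ltdyn \dyn$, and leastness resting on a head-constructor inversion principle --- is the natural one. The inversion principle is indeed the only real content: it is where one uses that the generators $\inat, \itimes, \iarr$ all target $\dyn$ and that the congruence rules preserve the head constructor, so a common upper bound of two types with distinct head constructors can only be $\dyn$. Your plan to extract inversion from the cut-free canonical presentation in Theorem~\ref{thm:thin} works, though note that the presentation as printed in the paper omits the product rules and itself contains a typo ($A_i \ra A_o \ltdyn \nat$ should read $A_i \ra A_o \ltdyn \dyn$); one can equally well prove inversion directly by induction on derivations in the original system, discharging the transitivity case by applying the inductive hypothesis to the intermediate type. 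Either way the induction closes, and your case analysis for leastness (including the cases where one side is $\dyn$, which fall under the ``otherwise'' clause) is complete.
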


Then for $M_1 : A_1$ we translate $M_1 :: A_2$ to
\[ \dnc {c^\sqcup_2}\upc{c^\sqcup_1} M \]
where $c^\sqcup_1 : A_1 \ltdyn A_1\sqcup A_2$ and $c^\sqcup_2 : A_2 \ltdyn
A_1\sqcup A_2$. In the presence of the strong retraction property, New
and Ahmed showed that this is equal to the prior interpretation. With
a weak retraction property, it is still weakly bisimilar to the prior
interpretation, so for the extensional properties we care about
ultimately this semantics is equivalent. However our compositional
arguments do care about this difference, and so we use instead this
translation that avoids the need for any retraction:
\begin{theorem}
  If $M_1 :: A_2$ is interpreted as $\dnc {c^\sqcup_2}\upc{c^\sqcup_1} M$,
  then CastRight and CastLeft are admissible.
\end{theorem}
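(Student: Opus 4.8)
The plan is to derive both admissible rules directly from the primitive UpL/UpR/DnL/DnR rules, using Theorem~\ref{thm:thin} (thinness of precision derivations) together with \textsc{EquivTyPrec} to freely rewrite any precision derivation by any other of the same type. Throughout, write $c_1^\sqcup : A_1 \ltdyn A_1 \sqcup A_2$ and $c_2^\sqcup : A_2 \ltdyn A_1 \sqcup A_2$ for the two embeddings into the least upper bound, so that a cast from $A_1$ to $A_2$ translates to $\dnc{c_2^\sqcup}\upc{c_1^\sqcup}(-)$. The guiding idea is that thinness collapses all parallel derivations, so the composite cast always lands at the correct type no matter which witness is used; the rule derivations then become pure bookkeeping of source and target types.

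For CastRight I would start from the premise $M \ltdyn N : c_1$ with $c_1 : A \ltdyn A_1$ and first apply \textsc{UpR} with $\upc{c_1^\sqcup}$, obtaining $M \ltdyn \upc{c_1^\sqcup} N : c_1 c_1^\sqcup$ where $c_1 c_1^\sqcup : A \ltdyn A_1 \sqcup A_2$. Since $c_2 c_2^\sqcup$ derives the very same precision $A \ltdyn A_1 \sqcup A_2$, Theorem~\ref{thm:thin} gives $c_1 c_1^\sqcup \equiv c_2 c_2^\sqcup$, and \textsc{EquivTyPrec} retypes the judgment as $M \ltdyn \upc{c_1^\sqcup} N : c_2 c_2^\sqcup$. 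A single application of \textsc{DnR} with $\dnc{c_2^\sqcup}$ then yields $M \ltdyn \dnc{c_2^\sqcup}\upc{c_1^\sqcup} N : c_2$, which is exactly the conclusion once the translation of $(N :: A_2)$ is unfolded.

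CastLeft is the mirror image, but the chain must factor through a derivation of $A_1 \sqcup A_2 \ltdyn A$. From the hypotheses $c_1 : A_1 \ltdyn A$ and $c_2 : A_2 \ltdyn A$, the least upper bound lemma supplies a derivation $c^\sqcup : A_1 \sqcup A_2 \ltdyn A$, since $A$ is an upper bound of both $A_1$ and $A_2$. Starting from $M_1 \ltdyn N : c_1$, thinness gives $c_1 \equiv c_1^\sqcup c^\sqcup$ (both derive $A_1 \ltdyn A$), so \textsc{EquivTyPrec} retypes to $M_1 \ltdyn N : c_1^\sqcup c^\sqcup$. Applying \textsc{UpL} with $\upc{c_1^\sqcup}$ gives $\upc{c_1^\sqcup} M_1 \ltdyn N : c^\sqcup$, and \textsc{DnL} with $\dnc{c_2^\sqcup}$ gives $\dnc{c_2^\sqcup}\upc{c_1^\sqcup} M_1 \ltdyn N : c_2^\sqcup c^\sqcup$. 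A final appeal to thinness ($c_2^\sqcup c^\sqcup \equiv c_2$, both deriving $A_2 \ltdyn A$) and \textsc{EquivTyPrec} produces the desired $\dnc{c_2^\sqcup}\upc{c_1^\sqcup} M_1 \ltdyn N : c_2$.

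I do not expect a genuinely hard obstacle: the proof is essentially type bookkeeping once Theorem~\ref{thm:thin} is available. The one step requiring genuine insight rather than mechanical composition is the CastLeft case, where one must notice that the argument has to route through the least upper bound $A_1 \sqcup A_2$; this is precisely why the translation is phrased via $\sqcup$ rather than the naive $\dnc{\dynof{A_2}}\upc{\dynof{A_1}}$, whose intermediate type $\dyn$ would not compose cleanly with the ambient type $A$. The remaining risk is purely notational — keeping the endpoints of each precision derivation aligned so that UpL/UpR/DnL/DnR apply with the intended decomposition of each precision into a composite.
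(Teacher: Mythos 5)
Your proof is correct and follows essentially the same route as the paper: obtain $c^\sqcup : A_1 \sqcup A_2 \ltdyn A$ from the least-upper-bound property, use Theorem~\ref{thm:thin} with \textsc{EquivTyPrec} to exchange $c_1 \equiv c_1^\sqcup c^\sqcup$ and $c_2^\sqcup c^\sqcup \equiv c_2$, and sandwich \textsc{UpL}/\textsc{DnL} (resp.\ \textsc{UpR}/\textsc{DnR}) in between. The only differences are cosmetic — you run CastLeft forwards where the paper reasons backwards from the goal, and you spell out the CastRight case that the paper dismisses as ``similar.''
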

\begin{proof}
  We show CastLeft, as this is the one that required retraction in
  prior work. CastRight follows similarly.

  Assume $M_1 \ltdyn N : c_1$, we seek to prove that
  \[ \dnc {c^\sqcup_2}\upc{c^\sqcup_1} M \ltdyn N : c_2 \]
  Since $A \sqcup A_2$ is the
  least upper bound, we have there must be a derivation $c^{\sqcup} :
  A_1\sqcup A_2 \ltdyn A$.
  Further by Theorem~\ref{thm:thin}, we have
  $c^\sqcup_2c^{\sqcup} \equiv c_2$, so it suffices to show
  \[M_1 \ltdyn \dnc {c^\sqcup_2}\upc{c^\sqcup} N : c^\sqcup_2c^{\sqcup} \]
  Then applying DnL and UpL it is sufficient to show
  \[ M_1 \ltdyn N : c^\sqcup_1c^\sqcup \]
  But again by Theorem~\ref{thm:thin} we have $c^\sqcup_1c^\sqcup
  \equiv c_1$ so this follows by our assumption that $M_1 \ltdyn N
  :c_1$.
\end{proof}

\section{Kleisli Actions}\label{sec:kleisli-actions}

We now define the Kleisli actions of the type constructors. These are the following
\begin{enumerate}
\item $- \tok B : \op\PreDom_k \to \ErrDom_k$
\item $A \tok - : \ErrDom_k \to \ErrDom_k$
\item $- \timesk A_2 : \PreDom_k \to \PreDom_k$
\item $A_1 \timesk - : \PreDom_k \to \PreDom_k$
\end{enumerate}
While these can be defined in any CBPV model, for simplicity we just
give their definition explicitly for predomains and error domains:
\begin{definition}{~}
  Given $\phi : \li A_i \multimap \li A_o$, we define $\phi \tok B : U(A_o \to B) \to U(A_i \to B)$ as
  \[ (\phi \tok B)(f)(x) = f^\dagger(\phi(\eta(x))) \]

  This is functorial in that $\id \tok B = \id$ and $(\phi \circ
  \phi') \tok B = (\phi' \tok B) \circ (\phi \tok B)$.

  Further, this preserves squares in that if $\phi \ltsq{\li c_i}{\li c_o} \phi'$ then $\phi \tok B \ltsq{U(c_o \to d)}{U(c_i \to d)} \phi' \tok B'$.
\end{definition}
\begin{proof}
  {~}
  \begin{itemize}
  \item For identity
    \[ (\id \tok B)(f)(x) = f^\dagger(\eta(x)) = f(x) \]

  \item 
    For composition, expanding definitions it suffices to show
    \[ f^\dagger(\phi(\phi'(\eta(x)))) = ((\phi \tok B)(f))^\dagger(\phi'(\eta(x))) \]
    Which follows if we can show
    \[ ((\phi \tok B)(f))^\dagger = f^\dagger \circ \phi \]
    By the freeness of $\li -$ it suffices to show it for inputs of the form $\eta y$:
    \[ ((\phi \tok B)(f))^\dagger(\eta y) = ((\phi \tok B)(f))(y) = f^\dagger(\phi(\eta y))\]
  \item For squares, we assume $f \binrel{U(c_o\to d)} f'$ and $x
    \binrel {c_i} x'$ and we need to show $f^\dagger(\phi(x)) \binrel
    d (f')^\dagger(\phi'(x))$. This follows easily since $f^\dagger
    \ltsq{\li c_o}{d} f^{\prime\dagger}$.
  \end{itemize}
\end{proof}

\begin{definition}
  Given $f : UB \to UB'$, define $A \tok f : U(A \to B) \to U(A \to B')$ as
    \[ (A \tok f)(g)(x) = f(g(x)) \]

  This is functorial in that $A \tok \id = \id$ and $A \tok (f \circ
  f') = (A \tok f) \circ (A \tok f')$.

  Further this preserves squares in that if $f \ltsq{Ud_i}{Ud_o} f'$
  then $A \tok f \ltsq{U(c \to d_i)}{U(c \to d_o)} A' \tok f'$
\end{definition}
\begin{proof}
  \begin{itemize}
  \item For identity
    \[ (A \tok \id)(g)(x) = g(x) \]
  \item 
    for composition
    \[  (A \tok (f \circ f'))(g)(x) = f(f'(g(x))) = (A\tok f)((A \tok f')(g))(x) \]
  \item 
   for squares, given $g \binrel{U(c \to d_i)} g'$ and $x \binrel{c}
   x'$ we have $f(g(x)) \binrel{d_o} f'(g'(x'))$.
  \end{itemize}
\end{proof}

\begin{definition}[Kleisli Actions of $\times$]
  Given $\phi_1 : \li A_1 \multimap \li A_1'$ we define $\phi_1 \timesk A_2 : \li (A_1 \times A_2) \multimap (A_1' \times A_2)$ as the unique extension of the following to a homomorphism:
  \[ (\phi_1 \timesk A_2)(\eta (x_1,x_2)) = (\lambda x_1'. \eta(x_1',x_2))^\dagger(\phi_1(\eta(x_1))) \]
  and similarly define
  \[ (A_1 \timesk \phi_2)(\eta (x_1,x_2)) = (\lambda x_2'. \eta(x_1,x_2'))^\dagger(\phi_2(\eta(x_2)))\]
\end{definition}
\begin{proof}
  We show cases only for $-\timesk A_2$ as the other is symmetric.
  \begin{itemize}
  \item For identity it is sufficient to consider inputs of the form $\eta(x_1,x_2)$, then indeed
    \[ (\id \timesk A_2)(\eta(x_1,x_2)) = (\lambda x_1'. \eta(x_1',x_2))^\dagger(\eta(x_1)) = \eta(x_1,x_2) \]
  \item For composition, again for pure inputs this reduces to
    \[ (\lambda x_1'. \eta(x_1',x_2))^\dagger(\phi_1'(\phi_1(\eta(x_1))))
    = (\phi_1'\timesk A_2)((\lambda x_1'. \eta(x_1',x_2))^\dagger(\phi_1(\eta(x_1)))) \]
    So it suffices to show
    \[ (\lambda x_1'. \eta(x_1',x_2))^\dagger \circ \phi_1' = (\phi_1'\timesk A_2) \circ (\lambda x_1'. \eta(x_1',x_2))^\dagger \]
    Since both sides are homomorphisms out of the free error domain by freeness it suffices to show they are equal for inputs of the form $\eta(x_1')$:
    \[  (\lambda x_1'. \eta(x_1',x_2))^\dagger(\phi_1'(\eta(x_1')))
    = (\phi_1'\timesk A_2)(\eta(x_1',x_2))
    = (\phi_1'\timesk A_2)((\lambda x_1'. \eta(x_1',x_2))^\dagger(\eta(x_1')))
    \]

  \item Finally, for squares, we need to show if $\phi_l \ltsq{\li
    c_1}{\li c_1'} \phi_r$ then $\phi_l \timesk A_{2l} \ltsq{\li (c_1
    \times c_2)}{\li (c_1' \times c_2)} \phi_r \timesk A_{2r}$.

    By the universal property of $\li-$ on relations, it suffices to
    assume $x_{1l} \binrel{c_1} x_{1r}$ and $x_{2l} \binrel{c_2} x_{2r}$ and prove
    \[ (\phi_l \timesk A_{2l})(\eta(x_{1l},x_{2l})) \binrel{(\li (c_1' \times c_2))} (\phi_r \timesk A_{2r})(\eta(x_{1r},x_{2r})) \]
    expanding definitions this is
    \[ (\lambda x_{1l}'. \eta(x_{1l}',x_{2l}))^\dagger\phi_l(\eta(x_{1l}))
    \binrel{(\li (c_1' \times c_2))}
    (\lambda x_{1r}'. \eta(x_{1r}',x_{2r}))^\dagger\phi_r(\eta(x_{1r}))
    \]
    By our assumptions we have $\phi_l(\eta(x_{1l}) \binrel {\li c_1'} \phi_r(\eta(x_{1r})$ so since $-^\dagger$ preserves squares it suffices to show
    \[ (\lambda x_{1l}'. \eta(x_{1l}',x_{2l})) \ltsq{\li c_1}{\li (c_1' \times c_2)} (\lambda x_{1r}'. \eta(x_{1r}',x_{2r})) \]
    To show this assume $x_{1l}' \binrel{c_1'} x_{1r}'$. Then we have
    \[ \eta(x_{1l}',x_{2l}) \binrel{\li (c_1' \times c_2)} \eta(x_{1r'},x_{2l})  \]
    holds.
  \end{itemize}
\end{proof}

\section{Omitted Proofs Section \ref{sec:towards-relational-model}}

\begin{theorem}
  Let $R$ be a binary relation on the free error domain $U(\li A)$. If
  $R$ satisfies the following properties
  \begin{enumerate}
  \item Transitivity
  \item $\theta$-congruence: If $\later_t (\tilde{x}_t \binrel{R} \tilde{y}_t)$, then $\theta(\tilde{x}) \binrel{R} \theta(\tilde{y})$.
  \item Right step-insensitivity: If $x \binrel{R} y$ then $x \binrel{R} \delta y$.
  \end{enumerate}
  Then for any $l : U(\li A)$, we have $l \binrel{R} \Omega$. If $R$ is left
  step-insensitive instead then $\Omega \binrel{R} x$.
\end{theorem}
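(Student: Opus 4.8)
The plan is to prove the universally quantified statement $\forall l.\, l \binrel{R} \Omega$ by \lob-induction, making essential use of the fact that the diverging element $\Omega = \qfix(\id)$ is its own delay, i.e. $\Omega = \delta\,\Omega = \theta(\nxt\,\Omega)$. The idea is that ``later'' the result is available by the induction hypothesis, $\theta$-congruence lets us push this fact under a $\theta$, and right step-insensitivity bridges the single step that separates $l$ from its own delay.

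Concretely, by \lob-induction it suffices to prove $\forall l.\, l \binrel{R} \Omega$ under the guarded hypothesis $\later(\forall l.\, l \binrel{R} \Omega)$. Fix $l$. Instantiating the hypothesis at $l$ gives $\later(l \binrel{R} \Omega)$, which, writing $\tilde x = \nxt\,l$ and $\tilde y = \nxt\,\Omega$ and using that tick application undoes $\nxt$ (so $\tilde x_t = l$ and $\tilde y_t = \Omega$), is exactly the premise $\later_t(\tilde x_t \binrel{R} \tilde y_t)$ of $\theta$-congruence. Applying $\theta$-congruence therefore yields $\theta(\nxt\,l) \binrel{R} \theta(\nxt\,\Omega)$, i.e. $\delta\,l \binrel{R} \delta\,\Omega$; since $\delta\,\Omega = \Omega$ this is the link $\delta\,l \binrel{R} \Omega$. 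For the remaining link $l \binrel{R} \delta\,l$ I would invoke right step-insensitivity, instantiated at $y = l$, seeded from $l \binrel{R} l$. Transitivity then chains $l \binrel{R} \delta\,l \binrel{R} \Omega$ into $l \binrel{R} \Omega$, closing the induction. The dual statement is entirely symmetric: under left step-insensitivity, the same \lob/$\theta$-congruence step gives $\Omega = \delta\,\Omega \binrel{R} \delta\,x$, left step-insensitivity gives $\delta\,x \binrel{R} x$ from $x \binrel{R} x$, and transitivity yields $\Omega \binrel{R} x$.

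The one genuinely delicate point is the link $l \binrel{R} \delta\,l$: right step-insensitivity is conditional, so it needs a seed, namely reflexivity $l \binrel{R} l$, and some such ingredient is truly necessary. Indeed the relation defined by $x \binrel{R} y \iff x = \Omega$ satisfies transitivity, $\theta$-congruence (since $\later_t(\tilde x_t = \Omega)$ forces $\tilde x = \nxt\,\Omega$, whence $\theta\,\tilde x = \Omega$) and right step-insensitivity, yet does not relate $\eta\,0$ to $\Omega$; so the conclusion fails without reflexivity. This is harmless in practice, as every relation to which the theorem is applied — the step-insensitive error ordering and weak bisimilarity — is reflexive. Beyond this, the only care required is the tick bookkeeping in the \lob step, i.e. transporting the induction hypothesis under $\later$ and matching $(\nxt\,l)_t = l$ against the premise of $\theta$-congruence, after which every remaining step is a one-line calculation.
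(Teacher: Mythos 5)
Your proof is correct and is essentially the paper's own argument: \lob-induction, $\theta$-congruence applied to the guarded hypothesis to obtain $\delta\,l \binrel{R} \delta\,\Omega = \Omega$, right step-insensitivity to obtain $l \binrel{R} \delta\,l$, and transitivity to chain the two links together.

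The one place you go beyond the paper is your observation that the link $l \binrel{R} \delta\,l$ needs a seed $l \binrel{R} l$, and this is a legitimate catch rather than pedantry: the paper's appendix proof asserts ``$l \binrel{R} \delta l$ by right step-insensitivity'' without comment, but that hypothesis is conditional and reflexivity is not among the listed assumptions. Your counterexample ($x \binrel{R} y$ iff $x = \Omega$) does satisfy transitivity, $\theta$-congruence (by tick extensionality, $\later_t(\tilde{x}_t = \Omega)$ forces $\tilde{x} = \nxt\,\Omega$ and hence $\theta\,\tilde{x} = \Omega$) and right step-insensitivity, yet fails the conclusion at $\eta\,0$. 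So the theorem as literally stated requires reflexivity, or some other means of relating $l$ to $\delta\,l$, to be added to its hypotheses. As you note, this is harmless for the paper's applications, since the step-insensitive error ordering and weak bisimilarity are both reflexive, but the statement and its proof ought to record the extra assumption explicitly.
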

\begin{proof}
  By \lob-induction: we assume that $\laterhs (l \binrel{R} \Omega)$, and we
  show $(l \binrel{R} \Omega)$. We have $l \binrel{R} \delta l$ by right
  step-insensitivity.
  By transitivity, it will therefore suffice to show that $\delta l \binrel{R}
  \Omega$.
  Then since $\Omega = \delta \Omega$, it suffices to show $\delta l \binrel{R}
  \delta\Omega$.
  By $\theta$-congruence, it suffices to show that
  $\later_t [(\nxt\, l)_t \binrel{R} (\nxt\, \Omega)_t]$, i.e., $\later (l \binrel{R} \Omega)$,
  which is our induction hypothesis.

  The proof when $R$ is left step-insensitive is symmetric.
\end{proof}

\section{Details of the Relational Model Construction}

In Section \ref{sec:concrete-relational-model} we described our model of gradual
typing. We omitted several technical proofs involving the well-definedness of
composition of relations and the actions of functors. We provide those proofs in
this section.







\subsection{The Dynamic Type}
The ordering $\le_D$ on the predomain $D$ is defined by
\begin{align*}
  \inat(n) \le \inat(n') 
      &\iff n = n' \\
  \itimes (d_1, d_2) \le \itimes (d_1', d_2')
      &\iff d_1 \le_D d_2 \text{ and } d_1' \le_D d_2'\\
  \iarr(\tilde{f}) \le \iarr(\tilde{f'}) 
      &\iff \later_t(\tilde{f}_t \le \tilde{f'}_t),
\end{align*}

and the bisimilarity relation is analogous.







\subsection{Lemmas about Perturbations}

The goal of this section is to prove the following lemmas:

\begin{lemma}\label{lem:push-pull-comp}
  Let $(A_1, M_{A_1}, i_{A_1})$, $(A_2, M_{A_2}, i_{A_2})$, and $(A_3, M_{A_3},
  i_{A_3})$ be value objects, and let $c : A_1 \rel A_2$ and $c' : A_2 \rel A_3$
  be predomain relations.

  Given a push-pull structure $\Pi_c$ for $c$ and $\Pi_{c'}$ for $c'$, we can
  define a push-pull structure $\Pi_{c \comp c'}$ for $c \comp c'$.

  Likewise, we can define a push-pull structure for the composition of error
  domain relations.
\end{lemma}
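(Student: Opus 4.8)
The plan is to define the push and pull homomorphisms for $c \comp c'$ by composing those of $c$ and $c'$, and then to verify the two required square families by horizontal composition of squares. Writing $\push_c, \pull_c$ for the components of $\Pi_c$ and $\push_{c'}, \pull_{c'}$ for the components of $\Pi_{c'}$, I set $\push_{c \comp c'} := \push_{c'} \circ \push_c \colon M_{A_1} \to M_{A_3}$ and $\pull_{c \comp c'} := \pull_c \circ \pull_{c'} \colon M_{A_3} \to M_{A_1}$. Each is a composite of monoid homomorphisms, hence itself a monoid homomorphism, so the only genuine content is establishing the squares.

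For the push squares, fix $m \in M_{A_1}$. The push square of $\Pi_c$ gives $i_{A_1}(m) \ltsq{c}{c} i_{A_2}(\push_c\, m)$, and the push square of $\Pi_{c'}$ applied to $\push_c\, m \in M_{A_2}$ gives $i_{A_2}(\push_c\, m) \ltsq{c'}{c'} i_{A_3}(\push_{c'}(\push_c\, m))$. Composing these two squares horizontally (as described in Section~\ref{sec:towards-relational-model}) yields $i_{A_1}(m) \ltsq{c \comp c'}{c \comp c'} i_{A_3}(\push_{c \comp c'}\, m)$. The pull squares are entirely symmetric: for $m \in M_{A_3}$, the pull square of $\Pi_{c'}$ gives $i_{A_2}(\pull_{c'}\, m) \ltsq{c'}{c'} i_{A_3}(m)$, the pull square of $\Pi_c$ applied to $\pull_{c'}\, m$ gives $i_{A_1}(\pull_c(\pull_{c'}\, m)) \ltsq{c}{c} i_{A_2}(\pull_{c'}\, m)$, and horizontal composition produces $i_{A_1}(\pull_{c \comp c'}\, m) \ltsq{c \comp c'}{c \comp c'} i_{A_3}(m)$. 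Concretely, ``horizontal composition'' here just means that given a witness $y : A_2$ for $x \binrel{c \comp c'} z$ (so $x \binrel{c} y$ and $y \binrel{c'} z$), the image of $y$ under the middle morphism is the witness exhibiting the two outputs as $(c \comp c')$-related.

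For the error-domain case the argument has the same shape, with one point requiring care: composition of error domain relations is not ordinary relational composition but the \emph{free} composition $dd'$ of Figure~\ref{fig:free-comp-ed-rel}, since $\theta$-congruence is not preserved by plain relational composition. Here I would instead invoke the universal property of the free composition, namely that $\phi \ltsq{dd'}{d''} \phi'$ holds iff $\phi(x) \binrel{d''} \phi'(z)$ whenever $x \binrel{d} y$ and $y \binrel{d'} z$. Instantiating $d'' := dd'$ and chaining the push (respectively pull) squares for $d$ and $d'$ exactly as above, the intermediate element is related to the two endpoints by $d$ and $d'$ respectively, so the first generating rule of $dd'$ closes it into a $dd'$-relationship. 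Since each $i_B(m)$ is a genuine error domain morphism, the universal property applies and delivers both square families.

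I do not anticipate a substantive obstacle: the lemma is essentially bookkeeping showing that push-pull data transports across composition by composing in the evident way. The only spot demanding attention is the error-domain case, where one must route the argument through the universal property of the free composition rather than composing relations naively.
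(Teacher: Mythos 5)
Your proposal is correct and matches the paper's proof: the paper likewise defines $\push_{c \comp c'} = \push_{c'} \circ \push_c$ and $\pull_{c \comp c'} = \pull_c \circ \pull_{c'}$ and obtains the required squares by horizontal composition of the push (resp.\ pull) squares for $c$ and $c'$. Your extra care in routing the error-domain case through the universal property of the free composition is a correct filling-in of a step the paper dismisses with ``analogously.''
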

\begin{proof}
  We define $\piv_{c \comp c'}$ as follows: We first define
  $\push_{c \comp c'} = \push_{c'} \circ \push_{c}$ and 
  $\pull_{c \comp c'} = \pull_{c} \circ \pull_{c'}$.
  We then observe that the required squares exist for both push and pull. In
  particular, for push we have that $i_{A_1}(m_{A_1}) \ltdyn_c^c
  i_{A_2}(\push_c(m_{A_1}))$ using the push property for $c$, and then using the
  push property for $c'$ we have $i_{A_2}(\push_c(m_{A_1})) \ltdyn_{c'}^{c'}
  i_{A_3}(\push_{c'}(\push_c(m_{A_1})))$. We can then compose these squares
  horizontally to obtain the desired square. The pull property follows
  similarly.

  The push-pull structure for the composition of computation relations is
  defined analogously.
\end{proof}

\begin{lemma}\label{lem:push-pull-U-F} 
  
  Let $A$ and $A'$ be value objects and let $c : A \rel A'$ be a relation on the
  underlying predomains. Given a push-pull structure $\Pi_c$ for $c$ we can
  define a push-pull structure $\Pi_{\li c}$ for $\li c$ with respect to $\li A$
  and $\li A'$.

  Likewise, let $B$ and $B'$ be computation objects and let $d : B \rel B'$ be
  an error domain relation. Given a push-pull structure for $d$ with respect to
  $B$ and $B'$, we can define a push-pull structure for $Ud$ with respect to
  $UB$ and $UB'$.
\end{lemma}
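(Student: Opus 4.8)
The plan is to define the push and pull homomorphisms for $\li c$ by exploiting the description $M_{\li A} = \mathbb{N} \oplus M_A$ together with the universal property of the coproduct of monoids, and then to reduce the verification of the required squares to the two families of generators. Concretely, I would define $\push_{\li c} : M_{\li A} \to M_{\li A'}$ as the copairing of the inclusion $\mathbb{N}\hookrightarrow \mathbb{N}\oplus M_{A'}$ with the composite $M_A \xrightarrow{\push_c} M_{A'}\hookrightarrow \mathbb{N}\oplus M_{A'}$; that is, $\push_{\li c}$ is the identity on the $\mathbb{N}$ summand and $\push_c$ on the $M_A$ summand. Dually, $\pull_{\li c}$ is the identity on $\mathbb{N}$ and $\pull_c$ on $M_{A'}$. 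Both are monoid homomorphisms by construction.

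For the squares, the key observation is that for the fixed endo-relation $\li c$, the set of perturbations $m \in M_{\li A}$ for which the push square $i_{\li A}(m)\ltsq{\li c}{\li c} i_{\li A'}(\push_{\li c}(m))$ holds is a \emph{submonoid} of $M_{\li A}$: it contains the identity (whose square is the vertical identity square) and is closed under multiplication by vertical composition of squares, using that $i_{\li A}$, $i_{\li A'}$, and $\push_{\li c}$ are all monoid homomorphisms. Since $M_{\li A}=\mathbb{N}\oplus M_A$ is generated by $\{1\}\cup M_A$, it therefore suffices to check the square on these generators. For a generator $m_A\in M_A$ we have $i_{\li A}(m_A)=\li(i_A(m_A))$ and $i_{\li A'}(\push_{\li c}(m_A))=\li(i_{A'}(\push_c(m_A)))$, so the push square for $c$, namely $i_A(m_A)\ltsq{c}{c} i_{A'}(\push_c(m_A))$, together with functoriality of $\li$ on relations (which preserves squares) yields the desired square. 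For the generator $1\in\mathbb{N}$, both interpretations are the delay perturbation $\delta^* = (\delta\circ\eta)^\dagger$, so the required square is $\delta^*\ltsq{\li c}{\li c}\delta^*$, which I would prove by \lob-induction, using that $\li c$ relates $\eta\,x$ and $\eta\,y$ exactly when $x\binrel{c} y$, is preserved under $\theta$, and sends $\mho$ to the bottom. The pull direction is entirely symmetric.

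The error-domain case proceeds identically after replacing $\li$ by $U$: using $M_{UB}=\mathbb{N}\oplus M_B$, I define $\push_{Ud}$ and $\pull_{Ud}$ to be the identity on the $\mathbb{N}$ summand and $\push_d$, $\pull_d$ on the $M_B$ summand, and again reduce to the generators via the same submonoid argument. A generator in $M_B$ is handled by functoriality of $U$ on squares together with the push/pull square for $d$. The generator $1\in\mathbb{N}$ now interprets to the delay morphism $\delta_B=\theta_B\circ\nxt$, and the square $\delta_B\ltsq{Ud}{Ud}\delta_{B'}$ is immediate from the $\theta$-congruence of the error domain relation $d$: from $x\binrel{d} y$ we obtain $\later_t(x\binrel{d} y)$ via $\nxt$, hence $\theta_B(\nxt\,x)\binrel{d}\theta_{B'}(\nxt\,y)$, i.e. $\delta_B(x)\binrel{d}\delta_{B'}(y)$.

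The only genuinely non-formal step is the base case for the generator $1$, i.e. verifying that the delay perturbation is compatible with the lifted relation. This is where the specific definition of the lock-step ordering (and, in the $U$ case, the $\theta$-congruence clause of an error domain relation) is actually used; in the $\li$ case it additionally requires a short \lob-induction, whereas in the $U$ case it is a single application of $\theta$-congruence. Everything else is bookkeeping that follows from the universal property of the free product of monoids, functoriality of $\li$ and $U$ on squares, and vertical composition of squares.
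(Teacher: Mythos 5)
Your proposal is correct and follows essentially the same route as the paper's proof: both define $\push_{\li c}$ and $\pull_{\li c}$ via the universal property of the coproduct $\mathbb{N}\oplus M_A$, reduce the square verification to the two families of generators, handle $m_A\in M_A$ by functoriality of $\li$ on squares together with the push-pull squares for $c$, and handle the generator $1\in\mathbb{N}$ by establishing the square for the delay perturbation $(\delta\circ\eta)^\dagger$. The only cosmetic difference is that the paper justifies the delay square by the square-preservation of $-^\dagger$ applied to $\delta\circ\eta$, where you do a direct L\"ob induction; these amount to the same argument.
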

\begin{proof}
  
  We define $\push_{\li c} : \mathbb{N} \oplus M_{A} \to \mathbb{N} \oplus
  M_{A'}$ by the universal property of the coproduct of monoids. In particular,
  it suffices to define a homomorphism of monoids $\mathbb{N} \to \mathbb{N}
  \oplus M_{A'}$ and $M_{A} \to \mathbb{N} \oplus M_{A'}$.

  The former is simply $\inl$ and the latter is $\inr \circ \push_c$.
  
  Then to establish the push property, it suffices by the universal property of
  the coproduct, and the fact that $\mathbb{N}$ is the free monoid on one
  generator, to show that the following two squares exist:

\[\begin{tikzcd}[ampersand replacement=\&]
	{\li A} \& {\li A'} \\
	{\li A} \& {\li A'}
	\arrow["{\li c}", "\shortmid"{marking}, no head, from=1-1, to=1-2]
	\arrow["{\delta^*_A}"', from=1-1, to=2-1]
	\arrow["{\delta^*_{A'}}", from=1-2, to=2-2]
	\arrow["{\li c}"', "\shortmid"{marking}, no head, from=2-1, to=2-2]
\end{tikzcd}\]

\[\begin{tikzcd}[ampersand replacement=\&]
	{\li A} \& {\li A'} \\
	{\li A} \& {\li A'}
	\arrow["{\li c}", "\shortmid"{marking}, no head, from=1-1, to=1-2]
	\arrow["{\li(i_A(m_A))}"', from=1-1, to=2-1]
	\arrow["{\li (i_{A'}(\push_c(m_A)))}", from=1-2, to=2-2]
	\arrow["{\li c}"', "\shortmid"{marking}, no head, from=2-1, to=2-2]
\end{tikzcd}\]

where $\delta* = (\delta \circ \eta)^\dagger$

The first square exists by the fact that the monadic extension operation
$-^\dagger$ is monotone in its argument, and the fact that there is a
square $\delta_A \circ \eta_A \ltsq{}{} \delta_{A'} \circ \eta_{A'}$.

The second square exists by the push property for $c$ and the fact that $\li$
acts on squares.

The proof for the pull property is dual, and the construction of a push-pull
structure for $Ud$ is analogous.
\end{proof}

\begin{lemma}\label{lem:push-pull-times} Let $A_1$, $A_1'$, $A_2$, and $A_2'$ be
  value objects. Let $c_1 : A_1 \rel A_1'$ and $c_2 : A_2 \rel A_2'$ be
  relations on the underlying predomains. Given push-pull structures $\Pi_{c_1}$
  for $c_1$ and $\Pi_{c_2}$ for $c_2$ we can define a push-pull structure
  $\Pi_{c_1 \times c_2}$ for $c_1 \times c_2$.
\end{lemma}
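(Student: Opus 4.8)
The plan is to build $\push_{c_1\times c_2}$ and $\pull_{c_1\times c_2}$ as coproducts of the component maps and then reduce the square conditions to generators. Recall that $M_{A_1\times A_2} = M_{A_1}\oplus M_{A_2}$ and $M_{A_1'\times A_2'} = M_{A_1'}\oplus M_{A_2'}$. By the universal property of the coproduct of monoids, to define $\push_{c_1\times c_2} : M_{A_1}\oplus M_{A_2}\to M_{A_1'}\oplus M_{A_2'}$ it suffices to give a homomorphism out of each summand; I take $\inl\circ\push_{c_1}$ on $M_{A_1}$ and $\inr\circ\push_{c_2}$ on $M_{A_2}$. Dually, I define $\pull_{c_1\times c_2}$ from $\inl\circ\pull_{c_1}$ and $\inr\circ\pull_{c_2}$. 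These are monoid homomorphisms by construction, so the only remaining work is to exhibit the required squares.

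The hard part (really the only nonroutine step) is to avoid reasoning about arbitrary words in the free product, which I handle by a closure argument. For the push direction I must produce, for every $m\in M_{A_1}\oplus M_{A_2}$, a square $i_{A_1\times A_2}(m) \ltsq{c_1\times c_2}{c_1\times c_2} i_{A_1'\times A_2'}(\push_{c_1\times c_2}\,m)$. Let $P(m)$ abbreviate the existence of this square. Since $i_{A_1\times A_2}$, $\push_{c_1\times c_2}$, and $i_{A_1'\times A_2'}$ are all monoid homomorphisms, we have $i_{A_1\times A_2}(mm') = i_{A_1\times A_2}(m)\circ i_{A_1\times A_2}(m')$ and likewise on the target. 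Because endo-squares over the fixed relation $c_1\times c_2$ compose vertically and contain the identity square $\id\ltsq{c_1\times c_2}{c_1\times c_2}\id$, the set $\{m \mid P(m)\}$ is a submonoid of $M_{A_1}\oplus M_{A_2}$. Hence it suffices to verify $P$ on the generating set consisting of the images $\inl(M_{A_1})$ and $\inr(M_{A_2})$.

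On a generator $\inl(m_1)$ I compute $i_{A_1\times A_2}(\inl\,m_1) = (i_{A_1}(m_1),\id)$ and $\push_{c_1\times c_2}(\inl\,m_1) = \inl(\push_{c_1}\,m_1)$, so that $i_{A_1'\times A_2'}(\push_{c_1\times c_2}(\inl\,m_1)) = (i_{A_1'}(\push_{c_1}\,m_1),\id)$. The required square is then the product of the push-square $i_{A_1}(m_1)\ltsq{c_1}{c_1} i_{A_1'}(\push_{c_1}\,m_1)$ supplied by $\Pi_{c_1}$ with the identity square $\id\ltsq{c_2}{c_2}\id$, using the functorial action of $\times$ on squares (if $f_1\ltsq{c_1}{c_1}g_1$ and $f_2\ltsq{c_2}{c_2}g_2$ then $f_1\times f_2\ltsq{c_1\times c_2}{c_1\times c_2}g_1\times g_2$, which is immediate from the componentwise definition of the product relation). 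The generator $\inr(m_2)$ is symmetric, now using $\Pi_{c_2}$ and an identity square in the first component.

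The pull direction is entirely dual: $P'(m)$ denoting the existence of $i_{A_1\times A_2}(\pull_{c_1\times c_2}\,m)\ltsq{c_1\times c_2}{c_1\times c_2} i_{A_1'\times A_2'}(m)$ again cuts out a submonoid by the same vertical-composition closure, and on generators it follows from the pull-squares of $\Pi_{c_1}$ and $\Pi_{c_2}$ paired with identity squares. This yields the desired push-pull structure $\Pi_{c_1\times c_2}$ for $c_1\times c_2$. I expect no genuine obstacle here; the construction is parallel to Lemma~\ref{lem:push-pull-U-F}, with the coproduct of monoids and the functoriality of $\times$ on squares playing the roles that the $\mathbb{N}\oplus(-)$ decomposition and functoriality of $\li$ played there.
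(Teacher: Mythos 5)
Your proposal is correct and follows essentially the same route as the paper: define $\push$ and $\pull$ via the coproduct universal property using $\inl\circ\push_{c_1}$ and $\inr\circ\push_{c_2}$, then verify the squares on generators by pairing the component push/pull squares with identity squares under the functorial action of $\times$. Your explicit submonoid-closure argument justifying the reduction to generators is a small extra rigor the paper elides (it just says the universal property lets one consider the two cases), but the substance is identical.
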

\begin{proof}
Recall that the monoid of syntactic perturbations for $A_1 \times A_2$ is
$M_{A_1} \oplus M_{A_2}$ and the monoid for $A_1' \times A_2'$ is $M_{A_1'}
\oplus M_{A_2'}$. Thus to define the push homomorphism for $c_1 \times c_2$, it
suffices by the universal property of the coproduct of monoids to define a
homomorphism $M_{A_1} \to M_{A_1'} \oplus M_{A_2'}$ and $M_{A_2} \to M_{A_1'}
\oplus M_{A_2'}$. The former is $\inl \circ \push_{c_1}$, and the latter is
$\inr \circ \push_{c_2}$. The pull homomorphism is defined similarly.

For the push property, we need to show that there is a square 
$i_{A_1 \times A_2}(m_1) \ltsq{}{} i_{A_1' \times A_2'}(\push_{c_1 \times c_2}(m_1))$.
By the universal property of the coproduct, it suffices to consider two cases: a
perturbation $m_1 \in M_{A_1}$ and a perturbation $m_2 \in M_{A_2}$. In the
first case, using the definition of the interpretation homomorphism for the
product, the square becomes $(i_{A_1}(m_1) \times \id_{A_2}) \ltsq{}{}
(i_{A_1'}(\push_{c_1}(m_1)) \times \id_{A_2'})$.
Then by the action of $\times$ on squares, it suffices to show that there are
squares $i_{A_1}(m_1) \ltsq{}{} i_{A_1'}(\push_{c_1}(m_1))$ and $\id_{A_2}
\ltsq{}{} \id_{A_2'}$. The former is the push property for $c_1$, and the latter
is an identity square.

The second case, i.e., $m_2 \in M_{A_2}$, is symmetric, and the proof of the
pull property is dual.
\end{proof}

\begin{lemma}\label{lem:push-pull-arrow} Let $A$ and $A'$ be value objects and
  $B$ and $B'$ be computation objects. Let $c : A \rel A'$ be a relation on the
  underlying predomains and $d : B \rel B'$ a relation on the underlying error
  domains. Given push-pull structures $\Pi_c$ for $c$ and $\Pi_d$ for $d$, we
  can define a push-pull structure $\Pi_{c \arr d}$ for $c \arr d$.
\end{lemma}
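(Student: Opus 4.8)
The plan is to mirror the construction used for $\times$ in Lemma~\ref{lem:push-pull-times}, adapted to the contravariance of $\arr$ in its domain. Recall that the perturbation monoids are $M_{A\arr B} = \op{M_A}\oplus M_B$ and $M_{A'\arr B'} = \op{M_{A'}}\oplus M_{B'}$, that $i_{A\arr B}$ sends a domain perturbation $m_A$ to $f \mapsto f\circ i_A(m_A)$ and a codomain perturbation $m_B$ to $f \mapsto i_B(m_B)\circ f$, and that $c\arr d$ relates $f$ and $g$ exactly when $f(x)\binrel{d}g(y)$ for all $x\binrel{c}y$. To define $\push_{c\arr d}\colon \op{M_A}\oplus M_B \to \op{M_{A'}}\oplus M_{B'}$ I would invoke the universal property of the coproduct of monoids, taking $\inl\circ\push_c$ on the summand $\op{M_A}$ and $\inr\circ\push_d$ on the summand $M_B$; dually $\pull_{c\arr d}$ is $\inl\circ\pull_c$ on $\op{M_{A'}}$ and $\inr\circ\pull_d$ on $M_{B'}$. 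Here I use that $\push_c$, being a monoid homomorphism $M_A\to M_{A'}$, is automatically also a homomorphism $\op{M_A}\to\op{M_{A'}}$.

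To verify the required squares I would first observe that, since $i_{A\arr B}$ and $\push_{c\arr d}$ are homomorphisms and squares of the shape $i(-)\ltsq{c\arr d}{c\arr d}i(-)$ compose vertically, the set of perturbations for which the push square holds is a submonoid; hence it suffices to check the two generating families. For a domain generator $m_A$ I must show $i_{A\arr B}(m_A)\ltsq{c\arr d}{c\arr d} i_{A'\arr B'}(\push_c(m_A))$: assuming $f\binrel{c\arr d}g$ and $x\binrel{c}y$, the goal unfolds to $f(i_A(m_A)(x))\binrel{d}g(i_{A'}(\push_c(m_A))(y))$, which follows by applying $f\binrel{c\arr d}g$ to the pair $i_A(m_A)(x)\binrel{c} i_{A'}(\push_c(m_A))(y)$ supplied by the push property of $c$. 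For a codomain generator $m_B$ the goal reduces to $i_B(m_B)(f(x))\binrel{d}i_{B'}(\push_d(m_B))(g(y))$, which follows from $f(x)\binrel{d}g(y)$ and the (error-domain) push property of $d$, with $i_B(m_B)$ read as the underlying predomain endomorphism of $UB$. The pull squares are entirely dual, using the pull properties of $c$ and $d$.

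The step I expect to be the main obstacle --- or at least the main point of possible confusion --- is getting the variance right. Because the domain summand is $\op{M_A}$ rather than $M_A$, it is tempting to think that push on the arrow should invoke $\pull_c$. The verification above shows this is not so: the logical relation for $\arr$ feeds related inputs forward through $c$, so $c$ occurs covariantly in the relation and push genuinely transports along $\push_c$; the opposite-monoid structure is needed only to make the interpretation $i_{A\arr B}$ and the transport maps into monoid homomorphisms (precomposition reverses the order of composition), not to swap push with pull. Once this is pinned down, the remaining work is routine and parallels Lemma~\ref{lem:push-pull-times}.
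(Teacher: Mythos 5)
Your proposal is correct and takes essentially the same route as the paper, whose entire proof of this lemma is ``Similar to the product case above'': you define push and pull componentwise on the coproduct $\op{M_A}\oplus M_B$ via $\push_c,\push_d$ (resp.\ $\pull_c,\pull_d$) and check the squares on generators, exactly as in Lemma~\ref{lem:push-pull-times}. Your observation that the domain summand still transports along $\push_c$ rather than $\pull_c$ --- forced already by the types of the homomorphisms and confirmed by the unfolding of the $c\arr d$ square --- is a correct and worthwhile elaboration of the variance point the paper leaves implicit.
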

\begin{proof}
  Similar to the product case above.
\end{proof}

We next define the actions of the Kleisli arrow and product functors on
syntactic perturbations.


\begin{definition}\label{def:kleisli-arrow-perturbations} Let $A$ be a value
  object and $B$ be a computation object. We define a homomorphism of monoids
  $id \tok - \colon M_{UB} \to M_{U(A \arr B)}$ via the universal property of the
  coproduct of monoids by sending $\mathbb{N}$ to the first injection in $M_{U(A
  \arr B)} = \mathbb{N} \oplus M_A^{op} \oplus M_B$, and sending $M_B$ to the
  third injection.

  Likewise, we define $- \tok \id \colon M_{\li A}^{op} \to M_{U(A \arr B)}$ as
  follows. First note that $M_{\li A}^{op} = \mathbb{N} \oplus M_A^{op}$. Then
  we construct the homomorphism via the universal property, sending $\mathbb{N}$
  to the first injection $M_A^{op}$ to the second injection.
\end{definition}

We can similarly construct a Kleisli product operation on perturbations:

\begin{definition}\label{def:kleisli-product-perturbations}
  Let $A_1$ and $A_2$ be value objects. We define a homomorphism of monoids 
  $- \timesk \id \colon M_{\li A_1} \to M_{\li (A_1 \times A_2)}$ using the universal property,
  sending $\mathbb{N}$ to the first injection and $M_{A_1}$ to the second injection.

  Likewise, we define $\id \timesk - \colon M_{\li A_2} \to M_{\li (A_1 \times A_2)}$
  by sending $\mathbb{N}$ to the first injection and $M_{A_2}$ to the third injection.

\end{definition}

It is then easy to verify that the Kleisli arrow action on perturbation is well-defined 
in that for any $m \in M_{UB}$, we have
\[ \ptb_{U(A \arr B)}(\id \tok m) = \id \tok i_{UB}(m), \]
where the LHS is the Kleisli action on perturbation and the RHS is the Kleisli
action on morphisms. 
The proof uses the fact that the morphism $\delta$ commutes with all error
domain morphisms, which is a consequence of the definition of error domain
morphism.
The other verification involving the Kleisli arrow action is similar, as are the
two Kleisli product actions.

\subsection{Lemmas involving Quasi-Representable Relations}

In this section we prove lemmas about quasi-representability needed for showing
that our notions of value and computation relation are compositional, and for
defining the action of the functors $U$, $\li$, $\arr$, and $\times$ on value
and computation relations.

Recall the notion of quasi-equivalence of relations as defined in Definition
\ref{def:quasi-equivalent}. The following lemma will be useful in showing
that two relations are quasi-equivalent.

\begin{lemma}\label{lem:left-rep-by-same-morphism} Let $A$ and $A'$ be value
  objects, and let $c, c' : A \rel A'$ be relations between the underlying
  predomains. If $c$ and $c'$ are both quasi-left-representable by the same
  predomain morphism $f : A \to A'$, then $c \qordeq c'$. If $c$ and $c'$ are
  both quasi-right-representable by the same predomain morphism $g : A' \to A$
  then $c \qordeq c'$.

  Dually, let $B$ and $B'$ be computation objects, and let $d, d' : B \rel B'$
  be relations between the underlying error domains. If $d$ and $d'$ are both
  quasi-right-representable by the same error domain morphism $\phi : B'
  \multimap B$, then $d \qordeq d'$. If $d$ and $d'$ are both
  quasi-left-representable by the same $\phi : B \multimap B'$
\end{lemma}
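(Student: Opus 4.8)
The plan is to unfold the definitions of quasi-representability (Definitions \ref{def:quasi-left-representable} and \ref{def:quasi-right-representable}) and of quasi-equivalence (Definition \ref{def:quasi-equivalent}), and then observe that the two squares witnessing $c \qordeq c'$ can be assembled directly from the representability data of $c$ and $c'$, using reflexivity of the error ordering to produce an ``identity-like'' relatedness and then chaining with the closure properties of the relations. I will carry out the quasi-left-representable predomain case in full; the remaining three cases are exact duals.

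First I would record what the hypotheses give. Quasi-left-representability of $c$ by $f$ supplies perturbations $\delle_c \in M_A$ and $\delre_c \in M_{A'}$ with the squares $\upl$: $f \ltsq{c}{r(A')} i_{A'}(\delre_c)$ and $\upr$: $i_A(\delle_c) \ltsq{r(A)}{c} f$, and likewise quasi-left-representability of $c'$ by the \emph{same} $f$ supplies $\delle_{c'}, \delre_{c'}$ with the analogous squares. Unfolding, $\upl$ for $c$ says that $x \binrel{c} y$ implies $f(x) \ltdyn_{A'} i_{A'}(\delre_c)(y)$, while $\upr$ for $c'$ says that $x \ltdyn_A x'$ implies $i_A(\delle_{c'})(x) \binrel{c'} f(x')$.

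Next I would build the first square $i_A(\delle_{c'}) \ltsq{c}{c'} i_{A'}(\delre_c)$. Given $x \binrel{c} y$, instantiating $\upr$ for $c'$ at the reflexive instance $x \ltdyn_A x$ yields $i_A(\delle_{c'})(x) \binrel{c'} f(x)$, while $\upl$ for $c$ yields $f(x) \ltdyn_{A'} i_{A'}(\delre_c)(y)$. Since $c'$ is upward-closed in $A'$, these two facts chain to give $i_A(\delle_{c'})(x) \binrel{c'} i_{A'}(\delre_c)(y)$, which is precisely the required square. Swapping the roles of $c$ and $c'$ produces the second square $i_A(\delle_c) \ltsq{c'}{c} i_{A'}(\delre_{c'})$ by the identical argument. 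Taking $\delta^l_1 = \delle_{c'}$, $\delta^r_1 = \delre_c$ and $\delta^l_2 = \delle_c$, $\delta^r_2 = \delre_{c'}$ then witnesses $c \qordeq c'$.

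For the quasi-right-representable predomain case the argument is dual: given $x \binrel{c} y$, apply $\dnr$ for $c$ to obtain $i_A(\dellp_c)(x) \ltdyn_A g(y)$ and $\dnl$ for $c'$ at the reflexive instance $y \ltdyn_{A'} y$ to obtain $g(y) \binrel{c'} i_{A'}(\delrp_{c'})(y)$, then invoke downward-closure of $c'$ in $A$ to chain them into the square $i_A(\dellp_c) \ltsq{c}{c'} i_{A'}(\delrp_{c'})$. The two error-domain cases proceed verbatim, now with the $\mho$- and $\theta$-congruent error domain relations and their quasi-representability data; because squares between error domain morphisms impose no condition beyond the underlying predomain squares, the same chaining goes through unchanged. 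I do not expect a genuine obstacle here — the only real care needed is bookkeeping the direction of closure (upward in $A'$ for the left-representable cases, downward in $A$ for the right-representable cases) and checking that the reflexive instance of $\upr$ (respectively $\dnl$) delivers exactly the bridge relating $f$ (respectively $g$) across the two relations.
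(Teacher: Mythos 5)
Your proof is correct and follows essentially the same route as the paper's: the paper horizontally composes the $\upr$ square for $c'$ with the $\upl$ square for $c$ and then applies the unit laws $r(A)\comp c = c$ and $c'\comp r(A') = c'$, which is exactly the pointwise chaining via reflexivity and upward-closure that you carry out, with the same choice of perturbations $\delle_{c'}$ and $\delre_c$. The remaining cases are likewise dispatched as "analogous" in the paper, matching your treatment.
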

\begin{proof}
  We show the result for $c$ and $c'$ that are quasi-left-representable by the same $f$;
  the other proofs are analogous. 

  By $\upr$ for $c'$, there exists a perturbation $\delle_{c'}$ and a square
  $\upr_{c'} : \delle_{c'} \ltdyn_{c'}^{r(A)} f$.

  By $\upl$ for $c$, there exists a perturbation $\delre_c$ and a square 
  $\upl_c : f \ltdyn_{r(A')}^{c} \delre_c$.
  
  Composing these horizontally we get the following square:

  \[\begin{tikzcd}[ampersand replacement=\&]
    A \& A \& {A'} \\
    A \& {A'} \& {A'}
    \arrow["{r(A)}", "\shortmid"{marking}, no head, from=1-1, to=1-2]
    \arrow["{i_A(\delle_{c'})}"', from=1-1, to=2-1]
    \arrow["c", "\shortmid"{marking}, no head, from=1-2, to=1-3]
    \arrow["f"', from=1-2, to=2-2]
    \arrow["{i_{A'}(\delre_c)}", from=1-3, to=2-3]
    \arrow["{c'}"', "\shortmid"{marking}, no head, from=2-1, to=2-2]
    \arrow["{r(A')}"', "\shortmid"{marking}, no head, from=2-2, to=2-3]
  \end{tikzcd}\]

  Then since $r(A) \comp c = c$ and $c' \comp r(A') = c'$, we obtain the desired square.
  The other square (i.e., with $c'$ on top) is constructed in an analogous
  manner.

\end{proof}

Next we show that the functors $\li$ and $U$ preserve quasi-representability.

\begin{lemma}\label{lem:representation-U-F}
  Let $A$ and $A'$ be value objects and let $c : A \rel A'$ be a predomain relation.
  If $c$ is quasi-left-representable, then $\li c$ is quasi-left-representable.
  Likewise, if $c$ is quasi-right-representable, then $\li c$ is quasi-right-representable.


  Similarly, let $B$ and $B'$ be computation objects and let $d : B \rel B'$ be an error domain relation.
  If $d$ is quasi-right-representable, then $Ud$ is quasi-right-representable.

\end{lemma}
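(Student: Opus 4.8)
The plan is to represent each lifted relation by the functorial image of the representing morphism for the original relation, transporting the witnessing perturbations along the canonical injection into the enlarged monoid of perturbations. For the first claim, suppose $c$ is quasi-left-representable by $e_c : A \to A'$ with perturbations $\delle_c \in M_A$ and $\delre_c \in M_{A'}$. I would take $\li e_c : \li A \multimap \li A'$ as the candidate morphism representing $\li c$. Recall that $M_{\li A} = \mathbb{N} \oplus M_A$ and that the interpretation homomorphism satisfies $i_{\li A}(\inr\, m) = \li(i_A(m))$ by construction; I would therefore set the required perturbations to be $\inr\,\delle_c \in M_{\li A}$ and $\inr\,\delre_c \in M_{\li A'}$.

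The key step is then to apply the functorial action of $\li$ on squares to the two squares witnessing quasi-left-representability of $c$. Applying $\li$ to the $\upl$ square $e_c \ltsq{c}{r(A')} i_{A'}(\delre_c)$ yields $\li e_c \ltsq{\li c}{\li(r(A'))} \li(i_{A'}(\delre_c))$, and likewise for $\upr$. To conclude I would invoke two identifications: first, that $\li$ carries the reflexive relation to the reflexive relation, i.e., $\li(r(A')) = r(\li A')$, which holds because both are the lock-step error ordering generated by $\ltls_{A'}$; and second, the compatibility $\li(i_{A'}(\delre_c)) = i_{\li A'}(\inr\,\delre_c)$ noted above. Together these rewrite the image squares into exactly the $\upl$ and $\upr$ squares required for quasi-left-representability of $\li c$ by $\li e_c$.

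The remaining two claims follow the same template. For quasi-right-representability of $\li c$ I would take $\li p_c : \li A' \multimap \li A$ as the representing morphism, transport the perturbations as $\inr\,\dellp_c$ and $\inr\,\delrp_c$, and apply $\li$ to the $\dnl$ and $\dnr$ squares for $c$. For the statement about $U$, I would take $U p_d : UB' \to UB$ as the morphism representing $Ud$ and transport perturbations via $\inr : M_B \to M_{UB} = \mathbb{N} \oplus M_B$, again using that $U$ acts on squares, preserves the reflexive relation ($U(r(B)) = r(UB)$), and satisfies $i_{UB}(\inr\, m) = U(i_B(m))$. In each case the verification reduces to a one-line diagram chase once the functor has been applied.

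The main obstacle, such as it is, lies not in transporting the data but in the two structural facts on which the whole argument rests: that $\li$ and $U$ send a square of (error-)domain morphisms over relations $c_i, c_o$ to a square over the lifted relations $\li c_i, \li c_o$ (resp.\ $U c_i, U c_o$), and that they send reflexive relations to reflexive relations. Both are routine consequences of the way $\li$ and $U$ act on heterogeneous relations (for $\li$, via the fact that $-^\dagger$ preserves squares), but they are precisely the lemmas that must be in place before the transport argument goes through.
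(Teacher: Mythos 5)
Your proposal is correct and follows essentially the same route as the paper: the representing morphism is the functorial image ($\li e_c$, $\li p_c$, $U p_d$), the perturbations are transported by $\inr$ into the coproduct monoid, and the required squares come from the functorial action of $\li$ (resp.\ $U$) on squares together with the defining equation $i_{\li A}(\inr\, m) = \li(i_A(m))$. The paper's proof is terser (it spells out only the quasi-left-representability of $\li c$ and declares the rest analogous), whereas you additionally make explicit the needed identification $\li(r(A')) = r(\li A')$; no gap.
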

\begin{proof}
  To show that $\li c$ is quasi-left-representable, we define
  \begin{itemize}
    \item $e_{\li c} = \li e_c$
    \item $\delre_{\li c} = \inr (\delre_c)$ 
    (recalling that the monoid $M_{\li A'} = \mathbb{N} \oplus M_{A'}$)
    \item $\delle_{\li c} = \inr (\delle_c)$
  \end{itemize}
  We obtain the two squares $\upl$ and $\upr$ using the definition of the
  interpretation of syntactic perturbations for $\li$ and the functorial action
  of $\li$ on the corresponding squares for $c$.

  The proof for $Ud$ is analogous.
\end{proof}


Next we show that the reflexive relation is quasi-representable.

\begin{lemma}\label{lem:reflexive-rel-quasi-rep}
  Let $A$ be a value object. Then $r(A)$ is quasi-left-representable and quasi-right-representable.
  Likewise, $r(B)$ is quasi-left- and quasi-right-representable for any computation object $B$.
\end{lemma}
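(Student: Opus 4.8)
The plan is to take the representing morphism in each case to be the identity and every perturbation to be the unit of the relevant monoid, so that all four required squares collapse to the vertical identity square $\id \ltsq{r(A)}{r(A)} \id$ whose existence was noted in Section~\ref{sec:towards-relational-model}.

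The one fact I would record first is that the monoid unit is interpreted as the genuine identity function: since $i_A : M_A \to \morbisimid{A}$ is a homomorphism of monoids and the unit of $\morbisimid{A}$ is $\id_A$, we have $i_A(1) = \id_A$, where $1$ denotes the unit of $M_A$ (and likewise $i_B(1) = \id_B$ for a computation object $B$).

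For quasi-left-representability of $r(A)$ I would set $e_{r(A)} = \id_A$ and $\delle_{r(A)} = \delre_{r(A)} = 1$. The square $\upl$ is then $\id_A \ltsq{r(A)}{r(A)} i_A(1)$ and the square $\upr$ is $i_A(1) \ltsq{r(A)}{r(A)} \id_A$; substituting $i_A(1) = \id_A$, both become the vertical identity square $\id_A \ltsq{r(A)}{r(A)} \id_A$, which holds by unfolding the definition, as $x \ltdyn_A y$ gives $\id_A(x) \ltdyn_A \id_A(y)$. For quasi-right-representability I would symmetrically take $p_{r(A)} = \id_A$ and $\dellp_{r(A)} = \delrp_{r(A)} = 1$, whereupon $\dnl$ and $\dnr$ again reduce to the same identity square.

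The computation-object case $r(B)$ is entirely analogous: the representing map is the identity error domain morphism $\id_B$ (which preserves $\mho_B$ and $\theta_B$ trivially, hence is a morphism of error domains), all perturbations are the unit of $M_B$ interpreted as $\id_B$ via $i_B$, and the $\upl/\upr$ and $\dnl/\dnr$ squares for error domain morphisms reduce to the identity square with no extra condition on the algebraic structure required. I expect no genuine obstacle here; the only point requiring care is the bookkeeping that the monoid unit is sent to the actual identity function, which is immediate from $i_A$ (resp.\ $i_B$) being a monoid homomorphism.
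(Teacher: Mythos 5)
Your proof is correct and matches the paper's argument exactly: both take the representing morphism to be the identity, take all perturbations to be the monoid unit, and use the fact that the monoid homomorphism $i_A$ (resp.\ $i_B$) sends the unit to the identity endomorphism, so that every required square collapses to the vertical identity square. No gaps.
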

\begin{proof}
  To show $r(A)$ is quasi-left-representable, we take the embedding to be the
  identity morphism and the perturbations $\delle = \delre = \id_{M_{A}}$, i.e.,
  the identtiy of the monoid. Then because $i_A$ is a homomorphism of monoids,
  it sends the identity element to the identity morphism, so the $\upl$ and
  $\upr$ squares are just identity squares.

  The same argument shows that $r(A)$ is quasi-right-representable and that
  $r(B)$ is quasi-left- and quasi-right-representable.
\end{proof}


We begin by showing that the composition of quasi-representable relations with
push-pull structures is quasi-representable.

\begin{lemma}\label{lem:representation-comp}
  Let $A$, $A'$, and $A''$ be value objects, and $B$, $B'$, and
  $B''$ be computation objects. Let $c : A \rel A'$ and $c' : A' \rel A''$ be
  predomain relations with push-pull structures, and let $d : B \rel B'$ and $d'
  : B' \rel B''$ be error domain relations with push-pull structures.
  \begin{enumerate}
    \item If $c$ and $c'$ are quasi-left- (resp. right)-representable, then $c
    \comp c'$ is quasi-left- (resp. right)-representable.

    \item If $d$ and $d'$ are quasi-left- (resp. right)-representable, then
    $d \comp d'$ is quasi-left (resp. right)-representable.
  \end{enumerate}
\end{lemma}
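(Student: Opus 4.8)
The plan is to take the representing morphism for the composite to be the composite of the two representing morphisms, and to build the two required perturbations by transporting the component perturbations across the relations using the push-pull structure. I will spell out the quasi-left-representable case for value relations; the remaining cases follow the same template.

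First I would set $e_{c \comp c'} := e_{c'} \circ e_c$. For the $\upl$ square I start from $\upl$ for $c$, namely $e_c \ltsq{c}{r(A')} i_{A'}(\delre_c)$, and horizontally compose it on the right with the push square for $c'$ instantiated at $\delre_c$, i.e.\ $i_{A'}(\delre_c) \ltsq{c'}{c'} i_{A''}(\push_{c'}\,\delre_c)$. Since $r(A') \comp c' = c'$ (as $r(A')$ is the identity for composition), this yields $e_c \ltsq{c \comp c'}{c'} i_{A''}(\push_{c'}\,\delre_c)$. Vertically composing with $\upl$ for $c'$, namely $e_{c'} \ltsq{c'}{r(A'')} i_{A''}(\delre_{c'})$, produces $e_{c'} \circ e_c \ltsq{c \comp c'}{r(A'')} i_{A''}(\delre_{c'}) \circ i_{A''}(\push_{c'}\,\delre_c)$. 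Because $i_{A''}$ is a monoid homomorphism, the right-hand composite equals $i_{A''}(\delre_{c'} \cdot \push_{c'}(\delre_c))$, so setting $\delre_{c \comp c'} := \delre_{c'} \cdot \push_{c'}(\delre_c)$ gives exactly the $\upl$ square for $c \comp c'$. The $\upr$ square is obtained dually: transport $\delle_{c'}$ leftward using the pull square for $c$, compose horizontally and then vertically with the two $\upr$ squares, and set $\delle_{c \comp c'} := \pull_c(\delle_{c'}) \cdot \delle_c$.

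The quasi-right-representable case for value relations is entirely analogous, taking $p_{c \comp c'} := p_c \circ p_{c'}$ and using the $\dnl$/$\dnr$ squares, now with push to transport $\delrp_c$ and pull to transport $\dellp_{c'}$. The error-domain statements follow the same pattern, with one point requiring care: composition of error domain relations is the inductively generated free composition $d \comp d'$ rather than ordinary relational composition, so the horizontal-composition steps must be routed through its structure: squares whose \emph{target} is $d \comp d'$ are built from the base rule embedding ordinary composition into the free one, while squares whose \emph{source} is $d \comp d'$ are justified by its universal property (a square out of $d \comp d'$ is determined by its behaviour on pairs $x \binrel{d} y$, $y \binrel{d'} z$).

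The crux of the argument — and the reason the push-pull structure is a hypothesis at all — is precisely the alignment step: the naive horizontal composition of the two $\upl$ (or $\dnl$) squares does not type-check, because the right morphism of the first square, a perturbation on $A'$, is not the left morphism of the second. The push and pull homomorphisms exist exactly to convert a perturbation on one side of a relation into a matching perturbation on the other so that the squares line up, and the monoid structure on perturbations is what lets us collapse the two stacked interpretations $i_{A''}(\cdot) \circ i_{A''}(\cdot)$ into a single $i_{A''}(\cdot)$. The only routine obligations that remain are checking that the transported perturbations are still genuine elements of the target monoids and that the simplifications $r \comp c = c = c \comp r$ are applied correctly.
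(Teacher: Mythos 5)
Your proposal is correct and follows essentially the same route as the paper's proof: the same representing morphisms ($e_{c'} \circ e_c$ and $p_c \circ p_{c'}$), the same perturbations ($\delre_{c'} \cdot \push_{c'}(\delre_c)$, $\pull_c(\delle_{c'}) \cdot \delle_c$, etc.), and the same pasting of the component squares with the push/pull squares, using downward/upward closure to collapse $r(A') \comp c'$ to $c'$. Your remark about routing the error-domain case through the free composition's generating rule and universal property is a detail the paper elides (it simply says the proof is the same with error domains substituted), but it does not constitute a different argument.
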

\begin{proof}
    \item To show $c \comp c'$ is quasi-left-representable we define
    \begin{itemize}
      \item $e_{c \comp c'} = e_{c'} \circ e_c$
      \item $\delre_{c \comp c'} = \delre_{c'} \cdot \push_{c'}(\delre_c)$ where
      $\cdot$ denotes multiplication in the monoid $M_A$
      \item $\delle_{c \comp c'} = \pull_c(\delle_{c'}) \cdot \delle_c$
      \item $\upl$ is the following square: 
\[\begin{tikzcd}[ampersand replacement=\&,column sep=3.15em]
	A \& {A'} \& {A''} \\
	{A'} \& {A'} \& {A''} \\
	{A'} \&\& {A''} \\
	{A''} \&\& {A''}
	\arrow[""{name=0, anchor=center, inner sep=0}, "{e_c}"', from=1-1, to=2-1]
	\arrow[""{name=1, anchor=center, inner sep=0}, "\id"', from=2-1, to=3-1]
	\arrow[""{name=2, anchor=center, inner sep=0}, "{e_{c'}}"', from=3-1, to=4-1]
	\arrow[""{name=3, anchor=center, inner sep=0}, "\id", from=2-3, to=3-3]
	\arrow["{r(A')}", "\shortmid"{marking}, no head, from=2-1, to=2-2]
	\arrow["{c'}", "\shortmid"{marking}, no head, from=2-2, to=2-3]
	\arrow[""{name=4, anchor=center, inner sep=0}, "{\push_{c'}(\delre_c)}", from=1-3, to=2-3]
	\arrow["c", "\shortmid"{marking}, no head, from=1-1, to=1-2]
	\arrow["{c'}", "\shortmid"{marking}, no head, from=1-2, to=1-3]
	\arrow[""{name=5, anchor=center, inner sep=0}, "{\delre_c}", from=1-2, to=2-2]
	\arrow[""{name=6, anchor=center, inner sep=0}, "{\delre_{c'} }", from=3-3, to=4-3]
	\arrow["{c'}", "\shortmid"{marking}, no head, from=3-1, to=3-3]
	\arrow["{r(A'')}"', "\shortmid"{marking}, no head, from=4-1, to=4-3]
	\arrow["{\upl_c}"{description}, draw=none, from=0, to=5]
	\arrow["{(**)}"{description}, draw=none, from=1, to=3]
	\arrow["{(*)}"{description}, draw=none, from=5, to=4]
	\arrow["{\upl_{c'}}"{description}, draw=none, from=2, to=6]
\end{tikzcd}\]

      The square $(*)$ exists by the push-pull property for $c'$, and the square
      $(**)$ exists by the downward closure of $c'$.

      \item $\upr$ is the following square: 
\[\begin{tikzcd}[ampersand replacement=\&,column sep=3.15em]
	A \&\& A \\
	A \&\& {A'} \\
	A \& {A'} \& {A'} \\
	A \& {A'} \& {A''}
	\arrow[""{name=0, anchor=center, inner sep=0}, "{\delle_c}"', from=1-1, to=2-1]
	\arrow[""{name=1, anchor=center, inner sep=0}, "\id"', from=2-1, to=3-1]
	\arrow[""{name=2, anchor=center, inner sep=0}, "{\pull_c(\delle_{c'})}"', from=3-1, to=4-1]
	\arrow[""{name=3, anchor=center, inner sep=0}, "\id", from=2-3, to=3-3]
	\arrow[""{name=4, anchor=center, inner sep=0}, "{e_c}", from=1-3, to=2-3]
	\arrow[""{name=5, anchor=center, inner sep=0}, "{e_{c'}}", from=3-3, to=4-3]
	\arrow["{r(A)}", "\shortmid"{marking}, no head, from=1-1, to=1-3]
	\arrow["c", "\shortmid"{marking}, no head, from=2-1, to=2-3]
	\arrow["c", "\shortmid"{marking}, no head, from=3-1, to=3-2]
	\arrow["{r(A')}", "\shortmid"{marking}, no head, from=3-2, to=3-3]
	\arrow["c"', "\shortmid"{marking}, no head, from=4-1, to=4-2]
	\arrow["{c'}"', "\shortmid"{marking}, no head, from=4-2, to=4-3]
	\arrow[""{name=6, anchor=center, inner sep=0}, "{\delle_{c'}}"', from=3-2, to=4-2]
	\arrow["{(*)}"{description}, draw=none, from=1, to=3]
	\arrow["{\upr_c}"{description}, draw=none, from=0, to=4]
	\arrow[draw=none, from=6, to=5]
	\arrow["{(**)}"{description}, draw=none, from=2, to=6]
	\arrow["{\upr_{c'}}"{description}, draw=none, from=6, to=5]
\end{tikzcd}\]
    \end{itemize}

    To show $c \comp c'$ is quasi-right-representable we define
    \begin{itemize}
      \item $p_{c \comp c'} = p_c \circ p_{c'}$
      \item $\dellp_{c \comp c'} = \dellp_c \cdot \pull_c(\dellp_{c'})$
      \item $\delrp_{c \comp c'} = \push_{c'}(\delrp_c) \cdot \delrp_{c'}$
      \item $\dnr$ is the following square: 
\[\begin{tikzcd}[ampersand replacement=\&]
	A \& {A'} \& {A''} \\
	A \& {A'} \& {A'} \\
	A \&\& {A'} \\
	A \&\& A
	\arrow["c", "\shortmid"{marking}, no head, from=1-1, to=1-2]
	\arrow[""{name=0, anchor=center, inner sep=0}, "{i_A(\pull_c(\dellp_{c'}))}"', from=1-1, to=2-1]
	\arrow["{c'}", "\shortmid"{marking}, no head, from=1-2, to=1-3]
	\arrow[""{name=1, anchor=center, inner sep=0}, "{\dellp_{c'}}"', from=1-2, to=2-2]
	\arrow[""{name=2, anchor=center, inner sep=0}, "{p_{c'}}", from=1-3, to=2-3]
	\arrow["c", "\shortmid"{marking}, no head, from=2-1, to=2-2]
	\arrow["\id"', from=2-1, to=3-1]
	\arrow["{r(A')}", "\shortmid"{marking}, no head, from=2-2, to=2-3]
	\arrow["\id", from=2-3, to=3-3]
	\arrow["c", "\shortmid"{marking}, no head, from=3-1, to=3-3]
	\arrow[""{name=3, anchor=center, inner sep=0}, "{i_A(\dellp_d)}"', from=3-1, to=4-1]
	\arrow[""{name=4, anchor=center, inner sep=0}, "{p_c}", from=3-3, to=4-3]
	\arrow["{r(A)}"', "\shortmid"{marking}, no head, from=4-1, to=4-3]
	\arrow["{(*)}"{description}, draw=none, from=0, to=1]
	\arrow["{\dnr_{c'}}"{description}, draw=none, from=1, to=2]
	\arrow["{\dnr_c}"{description}, draw=none, from=3, to=4]
\end{tikzcd}\]
      
      Here the square $(*)$ exists by the push-pull property for $c$.
      
      \item $\dnl$ is the following square: 
\[\begin{tikzcd}[ampersand replacement=\&]
	{A''} \&\& {A''} \\
	{A'} \&\& {A''} \\
	{A'} \& {A'} \& {A''} \\
	A \& {A'} \& {A''}
	\arrow["{r(A'')}", "\shortmid"{marking}, no head, from=1-1, to=1-3]
	\arrow[""{name=0, anchor=center, inner sep=0}, "{p_{c'}}"', from=1-1, to=2-1]
	\arrow[""{name=1, anchor=center, inner sep=0}, "{i_{A''}(\delrp_{c'})}", from=1-3, to=2-3]
	\arrow["{c'}", "\shortmid"{marking}, no head, from=2-1, to=2-3]
	\arrow["\id"', from=2-1, to=3-1]
	\arrow["\id", from=2-3, to=3-3]
	\arrow["{r(A')}", "\shortmid"{marking}, no head, from=3-1, to=3-2]
	\arrow[""{name=2, anchor=center, inner sep=0}, "{p_c}"', from=3-1, to=4-1]
	\arrow["{c'}", "\shortmid"{marking}, no head, from=3-2, to=3-3]
	\arrow[""{name=3, anchor=center, inner sep=0}, "{\delrp_c}", from=3-2, to=4-2]
	\arrow[""{name=4, anchor=center, inner sep=0}, "{i_{A''}(\push_{c'}(\delrp_c))}", from=3-3, to=4-3]
	\arrow["c"', "\shortmid"{marking}, no head, from=4-1, to=4-2]
	\arrow["{c'}"', "\shortmid"{marking}, no head, from=4-2, to=4-3]
	\arrow["{\dnl_{c'}}"{description}, draw=none, from=0, to=1]
	\arrow["{\dnl_c}"{description}, draw=none, from=2, to=3]
	\arrow["{(*)}"{description}, draw=none, from=3, to=4]
\end{tikzcd}\]
    \end{itemize}

    \item Same as (1), replacing predomains/morphisms/relations/squares with error domains.
\end{proof}

The next two lemmas concern quasi-equivalence and the functors $\li$ and
$U$.

\begin{lemma}\label{lem:Fcc-equiv-FcFc'}
  
  Let $A$, $A'$, and $A''$ be value objects and let $c : A \rel A'$ and $c' : A'
  \rel A''$ be predomain relations. Then we have $\li(c \comp c') \bisim \li c
  \comp \li c'$.
\end{lemma}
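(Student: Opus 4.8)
The plan is to avoid reasoning directly with the two lifted relations and instead reduce everything to the abstract machinery already assembled, via Lemma~\ref{lem:left-rep-by-same-morphism}: two relations that are quasi-left-representable by the \emph{same} morphism are automatically quasi-equivalent. Thus the goal becomes exhibiting a single error domain morphism $\li A \multimap \li A''$ that quasi-left-represents both $\li(c \comp c')$ and $\li c \comp \li c'$. This works in the setting where $c$ and $c'$ carry the structure of value relations, so that they are quasi-left-representable (by morphisms $e_c$ and $e_{c'}$) and equipped with push-pull structures; these are exactly the hypotheses consumed below.

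First I would handle the left-hand side $\li(c \comp c')$. Since $c$ and $c'$ are quasi-left-representable and carry push-pull structures, Lemma~\ref{lem:representation-comp} gives that $c \comp c'$ is quasi-left-representable, and inspecting that construction the representing morphism is the composite $e_{c'} \circ e_c$. Applying Lemma~\ref{lem:representation-U-F} (which shows $\li$ preserves quasi-left-representability, sending $e_{c\comp c'}$ to $\li e_{c\comp c'}$) together with functoriality of $\li$, I conclude that $\li(c \comp c')$ is quasi-left-represented by $\li(e_{c'} \circ e_c) = \li e_{c'} \circ \li e_c$.

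Next I would handle the right-hand side $\li c \comp \li c'$. By Lemma~\ref{lem:representation-U-F}, the error domain relations $\li c$ and $\li c'$ are quasi-left-representable by $\li e_c$ and $\li e_{c'}$, and by Lemma~\ref{lem:push-pull-U-F} they inherit push-pull structures from those on $c$ and $c'$. These are precisely the hypotheses of the error domain form of Lemma~\ref{lem:representation-comp}, which then yields that $\li c \comp \li c'$ is quasi-left-representable with representing morphism $\li e_{c'} \circ \li e_c$ --- the \emph{same} morphism as for $\li(c \comp c')$.

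With both relations quasi-left-represented by $\li e_{c'} \circ \li e_c : \li A \multimap \li A''$, the error domain case of Lemma~\ref{lem:left-rep-by-same-morphism} immediately delivers $\li(c \comp c') \bisim \li c \comp \li c'$. The only real bookkeeping obstacle is confirming that the representing morphisms produced by the two applications of Lemma~\ref{lem:representation-comp} genuinely coincide after applying $\li$; this rests on the functoriality identity $\li(e_{c'} \circ e_c) = \li e_{c'} \circ \li e_c$ and on the fact that Lemma~\ref{lem:representation-U-F} represents $\li c$ by precisely $\li e_c$. I expect no difficulty with the push-pull side conditions, since Lemmas~\ref{lem:push-pull-U-F} and~\ref{lem:representation-comp} already package all of those verifications.
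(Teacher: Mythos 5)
Your proposal is correct and follows essentially the same route as the paper's proof: both arguments show that $\li(c \comp c')$ and $\li c \comp \li c'$ are quasi-left-represented by the same morphism $\li e_{c'} \circ \li e_c$ (via Lemmas~\ref{lem:representation-comp} and~\ref{lem:representation-U-F}) and then conclude by Lemma~\ref{lem:left-rep-by-same-morphism}. Your explicit invocation of Lemma~\ref{lem:push-pull-U-F} for the push-pull hypotheses on $\li c$ and $\li c'$ is a detail the paper leaves implicit, but it does not change the argument.
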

\begin{proof}
  First, we claim that $\li(c \comp c')$ and $\li c \comp \li c'$ are both
  quasi-left-represented by $\li e_{c'} \circ \li e_c$. Indeed, we have by part
  (1) of Lemma \ref{lem:representation-comp} that $e_c' \circ e_c$
  quasi-left-represents $c \comp c'$, and then by Lemma
  \ref{lem:representation-U-F} we have $\li(e_{c'} \circ e_c) = \li e_{c'} \circ \li e_c$
  quasi-left-represents $\li (c \comp c')$.
  On the other hand, we also know that $\li e_c$ quasi-left-represents $\li c$ and
  $\li e_{c'}$ quasi-left-represents $\li c'$ again by Lemma
  \ref{lem:representation-U-F}. Then by part (2) of Lemma
  \ref{lem:representation-comp}, their composition quasi-left-represents $\li c
  \comp \li c'$.

  The result now follows by Lemma \ref{lem:left-rep-by-same-morphism}.
\end{proof}

\begin{lemma}\label{lem:Udd-equiv-UdUd'}
  Let $B$, $B'$, and $B''$ be value objects and let $d : B \rel B'$ and $d' : B'
  \rel B''$ be error domain relations. Then we have $U(d \comp d') \bisim Ud \comp Ud'$.
\end{lemma}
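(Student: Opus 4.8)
The plan is to dualize the proof of Lemma \ref{lem:Fcc-equiv-FcFc'}, replacing left-representability by right-representability and the functor $\li$ by $U$: I will exhibit a single morphism that quasi-right-represents both $U(dd')$ and $Ud \comp Ud'$ and then appeal to Lemma \ref{lem:left-rep-by-same-morphism}. Here the $dd'$ on the left is the free composition of error domain relations, while $Ud \comp Ud'$ on the right is the ordinary relational composition of the predomain relations $Ud$ and $Ud'$, so the two sides are built from genuinely different composition operations. Throughout I use that, as computation relations, $d$ and $d'$ are quasi-right-representable, say by error domain morphisms $p_d : B' \multimap B$ and $p_{d'} : B'' \multimap B'$, and that they carry push-pull structures.

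For the left-hand side, part (2) of Lemma \ref{lem:representation-comp} shows that the free composition $dd'$ is quasi-right-representable by $p_d \circ p_{d'} : B'' \multimap B$. Applying Lemma \ref{lem:representation-U-F} (the clause that $U$ preserves quasi-right-representability) then gives that $U(dd')$ is quasi-right-representable by $U(p_d \circ p_{d'})$, which equals $Up_d \circ Up_{d'}$ by functoriality of $U$.

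For the right-hand side, Lemma \ref{lem:representation-U-F} shows that $Ud$ is quasi-right-representable by $Up_d$ and $Ud'$ by $Up_{d'}$, and Lemma \ref{lem:push-pull-U-F} supplies the push-pull structures on $Ud$ and $Ud'$ needed to apply the composition lemma. Then part (1) of Lemma \ref{lem:representation-comp} (the predomain case, since $Ud$ and $Ud'$ are predomain relations on $UB, UB', UB''$) gives that $Ud \comp Ud'$ is quasi-right-representable, again by $Up_d \circ Up_{d'}$. Since $U(dd')$ and $Ud \comp Ud'$ are thus both predomain relations between $UB$ and $UB''$ that are quasi-right-represented by the same morphism $Up_d \circ Up_{d'}$, the predomain case of Lemma \ref{lem:left-rep-by-same-morphism} yields $U(dd') \bisim Ud \comp Ud'$, as desired.

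The argument is routine once set up this way, and the point worth flagging is exactly what makes it non-trivial: the two sides are \emph{a priori} distinct relations (free composition versus relational composition), and the whole content of the lemma is that they agree up to quasi-equivalence. The proof never compares them elementwise; it factors entirely through the shared representing morphism, which is what makes the dualization of Lemma \ref{lem:Fcc-equiv-FcFc'} go through unchanged. The only bookkeeping obstacle is ensuring the push-pull hypotheses required by Lemma \ref{lem:representation-comp} are present on both sides, and these are furnished by Lemma \ref{lem:push-pull-U-F} together with the push-pull structures carried by $d$ and $d'$.
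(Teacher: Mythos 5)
Your proof is correct and is exactly the dualization the paper intends: the paper's own proof reads ``Analogous to the proof of the previous lemma,'' and your argument spells out that analogy faithfully, using right-representability (which is what $U$ preserves per Lemma \ref{lem:representation-U-F}), the two parts of Lemma \ref{lem:representation-comp}, the push-pull structures from Lemma \ref{lem:push-pull-U-F}, and the common representing morphism $Up_d \circ Up_{d'}$ fed into Lemma \ref{lem:left-rep-by-same-morphism}. Your remark distinguishing the free composition $dd'$ from the relational composition $Ud \comp Ud'$ correctly identifies the actual content of the lemma.
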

\begin{proof}
  Analogous to the proof of the previous lemma.
\end{proof}

Now we combine the above lemmas to show a result about the functors $\li$ and
$U$ applied to a composition of relations:

\begin{lemma}\label{lem:representation-comp-F-U}
  Let $A$, $A'$, and $A''$ be value objects, and $B$, $B'$, and
  $B''$ be computation objects. Let $c : A \rel A'$ and $c' : A' \rel A''$ be
  predomain relations with push-pull structures, and let $d : B \rel B'$ and $d'
  : B' \rel B''$ be error domain relations with push-pull structures.
  \begin{enumerate}
    \item If $\li c$ and $\li c'$ are quasi-right-representable, then $\li (c
    \comp c')$ is quasi-right-representable.
    
    \item If $Ud$ and $Ud'$ are quasi-left-representable, then $U(d \comp d')$
    is quasi-left-representable.

  \end{enumerate}
  
\end{lemma}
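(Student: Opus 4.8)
The plan is to reduce both parts to the already-established behavior of quasi-representability under composition, transported across the quasi-equivalences supplied by Lemmas~\ref{lem:Fcc-equiv-FcFc'} and~\ref{lem:Udd-equiv-UdUd'}. For part (1), since $c$ and $c'$ carry push-pull structures, Lemma~\ref{lem:push-pull-U-F} equips the error domain relations $\li c$ and $\li c'$ with push-pull structures; as these are quasi-right-representable by hypothesis, Lemma~\ref{lem:representation-comp} shows that the free composition $\li c \comp \li c'$ is quasi-right-representable, by $p := p_{\li c} \circ p_{\li c'}$. Lemma~\ref{lem:Fcc-equiv-FcFc'} then gives $\li(c \comp c') \qordeq \li c \comp \li c'$, so everything comes down to transporting quasi-right-representability along a quasi-equivalence.

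This transport is the step I expect to be the main obstacle, since Lemma~\ref{lem:left-rep-by-same-morphism} only runs in the opposite direction and so must be complemented by a direct argument. Suppose $e \qordeq e'$ between value objects $A, A'$, witnessed (Definition~\ref{def:quasi-equivalent}) by squares $i_A(a_1) \ltsq{e}{e'} i_{A'}(b_1)$ and $i_A(a_2) \ltsq{e'}{e} i_{A'}(b_2)$, and suppose $e'$ is quasi-right-represented by $p$ with squares $\dnl$ and $\dnr$. Naively reusing $p$ fails: pasting $\dnl$ for $e'$ beneath the second quasi-equivalence square forces a perturbation onto the left leg, while pasting the first quasi-equivalence square above $\dnr$ for $e'$ forces one onto the right leg, and the two candidate representing morphisms disagree. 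The resolution is to represent $e$ by the sandwiched morphism $h := i_A(a_2) \circ p \circ i_{A'}(b_1)$, whose two perturbations come from the quasi-equivalence squares. With this choice, the $\dnl$ square for $e$ is obtained by vertically stacking the horizontal identity square for $i_{A'}(b_1)$, then $\dnl$ for $e'$, then the second quasi-equivalence square; dually the $\dnr$ square for $e$ is obtained from the first quasi-equivalence square, then $\dnr$ for $e'$, then the horizontal identity square for $i_A(a_2)$. Both pastings produce the same leg $h$, and the witnessing perturbations $\dellp$ and $\delrp$ are read off as the resulting monoid products. Notably this uses only vertical composition and horizontal identity squares, not push-pull; it is exactly the phenomenon anticipated in Section~\ref{sec:relational-model-construction}, namely that the projection of a composite relation genuinely ``involves a perturbation''.

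Applying the transport with $e = \li(c \comp c')$, $e' = \li c \comp \li c'$ and $p = p_{\li c} \circ p_{\li c'}$ finishes part (1). Part (2) is dual: Lemma~\ref{lem:push-pull-U-F} gives push-pull structures for the predomain relations $Ud$ and $Ud'$, Lemma~\ref{lem:representation-comp} gives quasi-left-representability of $Ud \comp Ud'$ by $e_{Ud'} \circ e_{Ud}$, Lemma~\ref{lem:Udd-equiv-UdUd'} gives $U(d \comp d') \qordeq Ud \comp Ud'$, and the same pasting argument --- now sandwiching the representing embedding between perturbations and using $\upl$, $\upr$ in place of $\dnl$, $\dnr$ --- transports quasi-left-representability to $U(d \comp d')$. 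In both parts the only remaining work is the routine monoid bookkeeping of the perturbation witnesses, which I would not write out in full.
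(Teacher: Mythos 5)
Your proposal is correct and follows the same high-level route as the paper: both reduce to quasi-right-representability of the free composite $\li c \comp \li c'$ via Lemma~\ref{lem:representation-comp} (after lifting the push-pull structures with Lemma~\ref{lem:push-pull-U-F}) and then transfer across the quasi-equivalence of Lemma~\ref{lem:Fcc-equiv-FcFc'}. Where you differ is in how the transfer is executed. You prove a general ``quasi-representability transports along quasi-equivalence'' lemma, and your diagnosis of the naive attempt is exactly right: reusing $p$ directly yields two different candidate legs, and the fix is to sandwich, taking $h = i(a_2)\circ p\circ i(b_1)$ so that both $\dnl$ and $\dnr$ close up on the same morphism. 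The paper instead exploits the asymmetry of this particular quasi-equivalence: the direction from $\li c \comp \li c'$ to $\li(c\comp c')$ is witnessed not by a generic perturbed square but by the laxity square $\id \ltsq{\li c \comp \li c'}{\li(c\comp c')} \id$ with identity legs (the free composition maps into $\li(c\comp c')$ on the nose). Consequently only the single perturbation $\delta^r$ from the square $\alpha$ gets composed into the projection, giving $p_{\li(c\comp c')} = p_{\li c\comp\li c'}\circ i_{A''}(\delta^r)$, with $\delta^l$ absorbed into $\dellp$; the $\dnl$ square is then completed by pasting the horizontal identity square for $i_{A''}(\delta^r)$ on top so both squares share this leg. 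Your version buys a reusable, symmetric transport principle that works for any quasi-equivalence (including ones with no identity-legged direction); the paper's buys a leaner projection with fewer perturbations, which matters mildly since this projection is the one referenced in Section~\ref{sec:relational-model-construction} as the semantics of composite downcasts. Both are valid proofs of the lemma as stated, and your remaining ``monoid bookkeeping'' is indeed routine.
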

\begin{proof}
  \begin{enumerate}
    \item We show that $\li (c \comp c')$ is quasi-right-representable.
    
    First, by Lemma \ref{lem:Fcc-equiv-FcFc'} there is a square $\alpha$ of the form

    \[\begin{tikzcd}[ampersand replacement=\&]
      {\li A} \&\& {\li A''} \\
      {\li A} \& {\li A'} \& {\li A''}
      \arrow["{\li (c \comp c')}", "\shortmid"{marking}, no head, from=1-1, to=1-3]
      \arrow[""{name=0, anchor=center, inner sep=0}, "{i_{\li A}(\delta^l)}"', from=1-1, to=2-1]
      \arrow[""{name=1, anchor=center, inner sep=0}, "{i_{\li A''}(\delta^r)}", from=1-3, to=2-3]
      \arrow["{\li c}", "\shortmid"{marking}, no head, from=2-1, to=2-2]
      \arrow["{\li c'}", "\shortmid"{marking}, no head, from=2-2, to=2-3]
      \arrow["\alpha"{description}, draw=none, from=0, to=1]
    \end{tikzcd}\]
    where $\delta^l$ and $\delta^r$ are syntactic perturbations.

    We define the projection $p_{\li (c \comp c')}$ to be $p_{\li c \comp \li c'} \circ i_{A''}(\delta^r)$.
    We define $\dellp_{\li (c \comp c')}$ to be $\dellp_{\li c \comp \li c'} \cdot \delta^l$.

    Then we can build the $\dnr$ square by pasting the square $\alpha$ on top of
    the $\dnr$ square for the composition $\li c \comp \li c'$, as shown below:

    \[\begin{tikzcd}[ampersand replacement=\&]
      {\li A} \&\& {\li A''} \\
      {\li A} \& {\li A'} \& {\li A''} \\
      {\li A} \&\& {\li A}
      \arrow["{\li (c \comp c')}", "\shortmid"{marking}, no head, from=1-1, to=1-3]
      \arrow[""{name=0, anchor=center, inner sep=0}, "{i_{\li A}(\delta^l)}"', from=1-1, to=2-1]
      \arrow[""{name=1, anchor=center, inner sep=0}, "{i_{\li A''}(\delta^r)}", from=1-3, to=2-3]
      \arrow["{\li c}", "\shortmid"{marking}, no head, from=2-1, to=2-2]
      \arrow[""{name=2, anchor=center, inner sep=0}, "{i_A(\dellp_{\li c \comp \li c'})}"', from=2-1, to=3-1]
      \arrow["{\li c'}", "\shortmid"{marking}, no head, from=2-2, to=2-3]
      \arrow[""{name=3, anchor=center, inner sep=0}, "{p_{\li c \comp \li c'}}", from=2-3, to=3-3]
      \arrow["{r(\li A)}"', "\shortmid"{marking}, no head, from=3-1, to=3-3]
      \arrow["\alpha"{marking, allow upside down}, draw=none, from=0, to=1]
      \arrow["{\dnr_{\li c \comp \li c'}}"{description}, draw=none, from=2, to=3]
    \end{tikzcd}\]

    We define $\delrp_{\li (c \comp c')}$ to be $\delrp_{\li c \comp \li c'} \cdot \delta^r$.
    For $\dnl$, we paste the identity square $\delta^r \ltdyn \delta^r$ on top of
    the $\dnl$ square for the composition $\li c \comp \li c'$, and below that we paste
    the square $\id \ltdyn_{\li (c \comp c')}^{\li c \comp \li c'} \id$ which we get from
    the fact that $\li$ is lax.

    \[\begin{tikzcd}[ampersand replacement=\&]
      {\li A''} \&\& {\li A''} \\
      {\li A''} \&\& {\li A''} \\
      {\li A} \& {\li A'} \& {\li A''} \\
      {\li A} \&\& {\li A''}
      \arrow["{r(\li A'')}", "\shortmid"{marking}, no head, from=1-1, to=1-3]
      \arrow[""{name=0, anchor=center, inner sep=0}, "{i_{A''}(\delta^r)}"', from=1-1, to=2-1]
      \arrow[""{name=1, anchor=center, inner sep=0}, "{i_{A''}(\delta^r)}", from=1-3, to=2-3]
      \arrow["{r(\li A'')}", "\shortmid"{marking}, no head, from=2-1, to=2-3]
      \arrow[""{name=2, anchor=center, inner sep=0}, "{p_{\li c \comp \li c'}}"', from=2-1, to=3-1]
      \arrow[""{name=3, anchor=center, inner sep=0}, "{i_{A''}(\delrp_{\li c \comp \li c'})}", from=2-3, to=3-3]
      \arrow["{\li c}", "\shortmid"{marking}, no head, from=3-1, to=3-2]
      \arrow["\id"', from=3-1, to=4-1]
      \arrow["{\li c'}", "\shortmid"{marking}, no head, from=3-2, to=3-3]
      \arrow["\id", from=3-3, to=4-3]
      \arrow["{\li (c \comp c')}"', "\shortmid"{marking}, no head, from=4-1, to=4-3]
      \arrow["\id"{marking, allow upside down}, draw=none, from=0, to=1]
      \arrow["{\dnl_{\li c \comp \li c'}}"{description}, draw=none, from=2, to=3]
    \end{tikzcd}\]

    \item The proof that $U(d \comp d')$ is quasi-left-representable is analogous.

  \end{enumerate}
\end{proof}


\begin{lemma}\label{lem:representation-product} 
  Let $c_1 : A_1 \rel A_1'$ and $c_2 : A_2 \rel A_2'$. 
  \begin{enumerate}
    \item If $c_1$ and $c_2$ are
    quasi-left-representable, then $c_1 \times c_2$ is quasi-left-representable.
    \item If $\li c_1$ and $\li c_2$ are quasi-right-representable, then so
    is $\li (c_1 \times c_2)$.
  \end{enumerate}


\end{lemma}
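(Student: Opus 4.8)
The plan is to prove the two parts separately, in each case constructing the representing morphism and the witnessing perturbations out of the data for $c_1$ and $c_2$, and then assembling the required squares from the corresponding squares for the factors. The essential structural observation is that part (1) is governed by the \emph{ordinary} functorial action of $\times$ (exactly as the product upcast $\sem{\upc{(c_1\times c_2)}}=\sem{\upc{c_1}}\times\sem{\upc{c_2}}$ uses the ordinary product), while part (2) is governed by the \emph{Kleisli} action of $\times$ (as the product downcast $\sem{\dnc{(c_1\times c_2)}}=(\sem{\dnc{c_1}}\timesk \sem{A_2})\circ(\sem{A_1'}\timesk\sem{\dnc{c_2}})$ does), and is considerably more delicate.

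For part (1), I would take the embedding to be $e_{c_1\times c_2}=e_{c_1}\times e_{c_2}$ and, recalling that $M_{A_1\times A_2}=M_{A_1}\oplus M_{A_2}$, take the perturbations to be the evident combinations $\delle_{c_1\times c_2}=\inl(\delle_{c_1})\cdot\inr(\delle_{c_2})$ and $\delre_{c_1\times c_2}=\inl(\delre_{c_1})\cdot\inr(\delre_{c_2})$. The $\upl$ and $\upr$ squares then follow by taking the product, under the action of $\times$ on squares, of the $\upl$ (resp.\ $\upr$) squares for $c_1$ and $c_2$, using two facts: that $r(A_1')\times r(A_2')=r(A_1'\times A_2')$, so the bottom edge is again reflexive, and that the interpretation of a combined product perturbation factors as a product of interpretations, $i_{A_1'\times A_2'}(\delre_{c_1\times c_2})=i_{A_1'}(\delre_{c_1})\times i_{A_2'}(\delre_{c_2})$, so the right edge is exactly the required semantic perturbation. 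This part is essentially routine.

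For part (2), I would define the projection by mirroring the product downcast, $p_{\li(c_1\times c_2)}=(p_{\li c_1}\timesk A_2)\circ(A_1'\timesk p_{\li c_2})$, and build the perturbations using the Kleisli product action on perturbations (Definition~\ref{def:kleisli-product-perturbations}), e.g.\ $\delrp_{\li(c_1\times c_2)}=(\delrp_{\li c_1}\timesk\id)\cdot(\id\timesk\delrp_{\li c_2})\in M_{\li(A_1'\times A_2')}$; the well-definedness of that action gives $i_{\li(A_1'\times A_2')}(\delrp_{\li(c_1\times c_2)})=(i_{\li A_1'}(\delrp_{\li c_1})\timesk A_2')\circ(A_1'\timesk i_{\li A_2'}(\delrp_{\li c_2}))$, which is exactly the right edge I want. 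The $\dnl$ square is then obtained by vertically pasting two squares produced by the action of $\times$ on squares: a top square applying $A_1'\timesk-$ to the $\dnl$ square for $\li c_2$, and a bottom square applying $-\timesk$ to the $\dnl$ square for $\li c_1$. The $\dnr$ square is dual, and the $\dellp$ perturbation is assembled symmetrically.

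The main obstacle is making the intermediate relation of this vertical pasting line up, and this is precisely where the non-commutativity of the two Kleisli actions matters. Since the two half-steps alter different factors of the product, one must invoke the square lemma with the \emph{correct} relation in each factor: applying $A_1'\timesk-$ to the $\li c_2$ square yields a square whose bottom edge is $\li(r(A_1')\times c_2)$, and applying $-\timesk$ to the $\li c_1$ square \emph{with second-factor relation $c_2$ rather than $r(A_2)$} yields a square whose top edge is also $\li(r(A_1')\times c_2)$ and whose bottom edge is $\li(c_1\times c_2)$. With this asymmetric choice the intermediate edges coincide, the pasting is legal, and the composite has top edge $r(\li(A_1'\times A_2'))$ and bottom edge $\li(c_1\times c_2)$ as required; had we instead used $r(A_2)$ throughout, the intermediate edges would be $\li(r(A_1')\times c_2)$ and $r(\li(A_1'\times A_2))$, which do not agree. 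Checking that the endpoints of these Kleisli-action squares compose to the stated projection and perturbation, together with the identities $r(\li X)=\li\, r(X)$ and $r(X\times Y)=r(X)\times r(Y)$ used implicitly throughout, then completes the construction.
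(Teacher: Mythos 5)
Your proposal is correct and follows essentially the same route as the paper: the same embedding $e_{c_1}\times e_{c_2}$ and coproduct perturbations for part (1), and the same Kleisli-style projection $(p_{\li c_1}\timesk A_2)\circ(A_1'\timesk p_{\li c_2})$ with Kleisli-action perturbations for part (2). Your careful account of the asymmetric choice of factor relations needed to make the intermediate edge $\li(r(A_1')\times c_2)$ line up in the vertical pasting is a correct elaboration of a step the paper leaves implicit under ``the squares are obtained via the functorial action of $\timesk$.''
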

\begin{proof}
  \begin{enumerate}
    \item To show $c_1 \times c_2$ is quasi-left-representable, we define
    \begin{itemize}
      \item $e_{c_1 \times c_2} = e_{c_1} \times e_{c_2}$
      \item $\delre_{c_1 \times c_2} = \inl (\delre_{c_1}) \cdot \inr
      (\delre_{c_2})$ (recall from Section \ref{sec:perturbation-constructions}
      that $M_{A_1' \times A_2'}$ is the coproduct $M_{A_1'} \oplus M_{A_2'}$)
      \item $\delle_{c_1 \times c_2}$ is defined similarly
      \item The $\upl$ square is constructed using the functorial action of
      $\times$ and the corresponding squares for $c_1$ and $c_2$:
        \[\begin{tikzcd}[ampersand replacement=\&]
          {A_1 \times A_2} \& {A_1' \times A_2'} \\
          {A_1' \times A_2} \& {A_1' \times A_2'} \\
          {A_1' \times A_2'} \& {A_1' \times A_2'}
          \arrow["{c_1 \times c_2}", "\shortmid"{marking}, no head, from=1-1, to=1-2]
          \arrow[""{name=0, anchor=center, inner sep=0}, "{e_{c_1} \times \id}"', from=1-1, to=2-1]
          \arrow[""{name=1, anchor=center, inner sep=0}, "{i_{A_1'}(\delre_{c_1}) \times \id}", from=1-2, to=2-2]
          \arrow["{r(A_1') \times c_2}", "\shortmid"{marking}, no head, from=2-1, to=2-2]
          \arrow[""{name=2, anchor=center, inner sep=0}, "{\id \times e_{c_2}}"', from=2-1, to=3-1]
          \arrow[""{name=3, anchor=center, inner sep=0}, "{\id \times i_{A_2'}(\delre_{c_2})}", from=2-2, to=3-2]
          \arrow["{r(A_1') \times r(A_2')}"', "\shortmid"{marking}, no head, from=3-1, to=3-2]
          \arrow["{\upl_{c_1} \times \id}"{description}, draw=none, from=0, to=1]
          \arrow["{\id \times \upl_{c_2}}"{description}, draw=none, from=2, to=3]
        \end{tikzcd}\]
      \item The $\upr$ square is defined similarly
    \end{itemize}

    \item To show that $\li(c_1 \times c_2)$ is quasi-right-representable, we define
    \begin{itemize}
      \item $p_{\li (c_1 \times c_2)} = (p_{\li c_1} \timesk A_2) \circ (A_1' \timesk p_{\li c_2})$
      \item $\dellp_{\li (c_1 \times c_2)} = (\dellp_{\li c_1} \timesk \id) \cdot (\id \timesk \dellp_{\li c_2})$ 
      using the Kleisli product actions on syntactic perturbations in Definition \ref{def:kleisli-product-perturbations}
      \item $\delrp_{\li (c_1 \times c_2)} = (\delrp_{\li c_1} \timesk \id) \cdot (\id \timesk \delrp_{\li c_2})$
      \item The squares are obtained via the functorial action of $\timesk$ on the squares for $\li c_1$ and $\li c_2$.
    \end{itemize}
  \end{enumerate}
\end{proof}

\begin{lemma}\label{lem:representation-arrow}
  Let $c : A \rel A'$ and $d : B \rel B'$.
  \begin{enumerate}
    \item If $c$ is quasi-left-representable and $d$ is quasi-right-representable, then
    $c \arr d$ is quasi-right-representable.
  
    \item If $\li c$ is quasi-right-representable and $Ud$ is
    quasi-left-representable, then $U(c \arr d)$ is quasi-left-representable.
  \end{enumerate}
  

\end{lemma}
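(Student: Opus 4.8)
The plan is to realize the function cast as the \emph{functorial action} of $\arr$: in part (1) the ordinary (pure) action on morphisms and squares, and in part (2) the Kleisli action, exactly mirroring the semantics of function casts in Figure~\ref{fig:term-semantics}. Since $\arr$ is contravariant in its value argument and covariant in its computation argument, the quasi-left-representing morphism for the domain relation will be paired with the quasi-right-representing morphism for the codomain relation.

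For part (1), I would take the projection $p_{c \arr d} = e_c \arr p_d$, where $e_c : A \to A'$ witnesses quasi-left-representability of $c$ and $p_d : B' \multimap B$ witnesses quasi-right-representability of $d$; this is an error domain morphism $(A' \arr B') \multimap (A \arr B)$. Recalling that $M_{A \arr B} = M_A^{op} \oplus M_B$ and that $i_{A \arr B}$ interprets the left summand by precomposition and the right by postcomposition, I set $\delrp_{c \arr d} = \inl(\delre_c) \cdot \inr(\delrp_d)$ and $\dellp_{c \arr d} = \inl(\delle_c) \cdot \inr(\dellp_d)$. The functorial action of $\arr$ on squares sends $f \ltsq{c_1}{c_2} f'$ and $\phi \ltsq{d_1}{d_2} \phi'$ to $(f \arr \phi) \ltsq{c_2 \arr d_1}{c_1 \arr d_2} (f' \arr \phi')$, so applying it to the $\dnl$ square from quasi-left-representability of $c$ (in the contravariant domain slot) and the $\dnl$ square from quasi-right-representability of $d$ (in the covariant codomain slot) yields $p_{c \arr d} \ltsq{r(A' \arr B')}{c \arr d} i_{A' \arr B'}(\delrp_{c \arr d})$. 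The $\dnr$ square follows symmetrically from the $\upr$ square of $c$ and the $\dnr$ square of $d$, giving $i_{A \arr B}(\dellp_{c \arr d}) \ltsq{c \arr d}{r(A \arr B)} p_{c \arr d}$.

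For part (2), the embedding must instead use the Kleisli actions, just as the function upcast does. I would define $e_{U(c \arr d)} = (A' \tok e_{Ud}) \circ (p_{\li c} \tok B)$, where $p_{\li c} : \li A' \multimap \li A$ is the projection witnessing quasi-right-representability of $\li c$ and $e_{Ud} : UB \to UB'$ witnesses quasi-left-representability of $Ud$. Using the Kleisli perturbation actions of Definition~\ref{def:kleisli-arrow-perturbations}, I set $\delre_{U(c \arr d)} = (\delrp_{\li c} \tok \id) \cdot (\id \tok \delre_{Ud})$ and $\delle_{U(c \arr d)} = (\dellp_{\li c} \tok \id) \cdot (\id \tok \delle_{Ud})$. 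The $\upl$ square is then assembled by pasting two squares vertically: applying the square-preservation property of $- \tok B$ to the $\dnl$ square of $\li c$ gives $p_{\li c} \tok B \ltsq{U(c \arr d)}{U(r(A') \arr d)} i_{\li A'}(\delrp_{\li c}) \tok B'$, while applying the square-preservation of $A' \tok -$ to the $\upl$ square of $Ud$ gives $A' \tok e_{Ud} \ltsq{U(r(A') \arr d)}{r(U(A' \arr B'))} A' \tok i_{UB'}(\delre_{Ud})$. The bottom relation of the first matches the top of the second, so their composite has left morphism $e_{U(c \arr d)}$ and right morphism $i_{U(A' \arr B')}(\delre_{U(c \arr d)})$, using that $i$ is a monoid homomorphism together with the compatibility $i_{\li A'}(\delrp_{\li c}) \tok B' = i_{U(A' \arr B')}(\delrp_{\li c} \tok \id)$ and $A' \tok i_{UB'}(\delre_{Ud}) = i_{U(A' \arr B')}(\id \tok \delre_{Ud})$ recorded after Definition~\ref{def:kleisli-product-perturbations}. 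The $\upr$ square is built dually from the $\dnr$ square of $\li c$ (via $- \tok B$, using the reflexive relation $r(B)$ on the codomain so that both sides retain $B$) and the $\upr$ square of $Ud$ (via $A \tok -$).

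I expect the main obstacle to be part (2): keeping the variance and the type bookkeeping straight through the Kleisli actions — in particular which copy of the computation object ($B$ versus $B'$) each Kleisli action is indexed by, and ensuring the pasted perturbation morphisms land exactly as $i$ applied to the declared syntactic perturbations. This rests entirely on the compatibility between the Kleisli actions on perturbations and on morphisms, which is the glue that reconciles the abstract monoid-level definitions of $\delre_{U(c \arr d)}$ and $\delle_{U(c \arr d)}$ with the concrete vertically pasted squares.
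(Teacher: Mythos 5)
Your proposal is correct and takes essentially the same approach as the paper: part (1) uses exactly the paper's witnesses ($p_{c \arr d} = e_c \arr p_d$ with the perturbations $\inl(\delle_c)\cdot\inr(\dellp_d)$ and $\inl(\delre_c)\cdot\inr(\delrp_d)$) together with the functorial action of $\arr$ on squares, and part (2) pastes the same two Kleisli-action squares (from $\dnl$/$\dnr$ of $\li c$ and $\upl$/$\upr$ of $Ud$) with the same perturbations. Two cosmetic points: in part (1) the square you feed into the contravariant slot for $c$ is its $\upl$ square (quasi-left-representability provides no ``$\dnl$ square''), which is what your own variance formula requires; and in part (2) you compose the Kleisli actions in the opposite order from the paper, $(A' \tok e_{Ud})\circ(p_{\li c}\tok B)$ rather than $(p_{\li c}\tok B')\circ(A\tok e_{Ud})$ --- since the two actions do not commute in general (e.g.\ $\delta\circ g^\dagger \neq (\delta\circ g)^\dagger$) this yields a genuinely different but equally valid embedding (the paper's order is the one matching the cast semantics of Figure~\ref{fig:term-semantics}), and your declared monoid elements $\delre_{U(c\arr d)}$ and $\delle_{U(c\arr d)}$ should then list their two factors in the order your pasted squares actually produce.
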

\begin{proof}
  \begin{enumerate}
    \item 
      To show $c \arr d$ is quasi-right-representable, we define:
      \begin{itemize}
        \item $p_{c \arr d} = e_c \arr p_d$
        \item $\dellp_{c \arr d} = \inl(\delle_c) \cdot \inr(\dellp_d)$
        \item $\delrp_{c \arr d} = \inl(\delre_c) \cdot \inr(\delrp_d)$
        \item The squares $\dnr$ and $\dnl$ are obtained via the functorial action of $\arr$ on squares.
        
      \end{itemize}

    \item
      We show $U(c \arr d)$ is quasi-left-representable as follows:
      \begin{itemize}
        \item $e_{U(c \arr d)} = (p_{\li c} \tok B') \circ (A \tok e_{Ud})$
        \item $\delre_{U(c \arr d)} = (\delrp_{\li c} \tok B') \cdot (A' \tok
              \delre_{Ud})$ using the Kleisli arrow actions on syntactic
              perturbations in Definition \ref{def:kleisli-arrow-perturbations}.
        \item $\delle_{U(c \arr d)} = (\dellp_{\li c} \tok B) \cdot (A \tok \delle_{Ud})$
        \item The squares $\upl$ and $\upr$ are obtained via the functorial action of $\tok$ on squares.
        For instance, $\upl$ is given by the following square:

        \[\begin{tikzcd}[ampersand replacement=\&,row sep=large]
          {U(A \to B)} \&\& {U(A' \to B')} \\
          {U(A \to B')} \&\& {U(A' \to B')} \\
          {U(A' \to B')} \&\& {U(A' \to B')}
          \arrow["{U(c \to d)}", from=1-1, to=1-3]
          \arrow["{U(c \to r(B'))}", from=2-1, to=2-3]
          \arrow["{U(r(A') \to r(B'))}", from=3-1, to=3-3]
          \arrow[""{name=0, anchor=center, inner sep=0}, "{p_{Fc} \tok B'}"', from=2-1, to=3-1]
          \arrow[""{name=1, anchor=center, inner sep=0}, "{A' \tok \delre_{Ud}}", from=1-3, to=2-3]
          \arrow[""{name=2, anchor=center, inner sep=0}, "{\delrp_{Fc} \tok B'}", from=2-3, to=3-3]
          \arrow[""{name=3, anchor=center, inner sep=0}, "{A \tok e_{Ud}}"', from=1-1, to=2-1]
          \arrow["{\id_{Fc} \tok \upl_{Ud}}"{description}, draw=none, from=3, to=1]
          \arrow["{\dnl_{Fc} \tok \id_{r(B')}}"{description}, draw=none, from=0, to=2]
        \end{tikzcd}\]
      \end{itemize}

    The construction of $\upr$ is similar.
\end{enumerate}
\end{proof}

\subsection{Model Construction: Composition and Functorial Actions on Relations}

We now show that our notions of value and computation relations compose.
\begin{definition}[composition of value (resp. computation) relations]\label{def:value-computation-rel-comp}
  Let $A$, $A'$ and $A''$ be value objects and let $c$ and $c'$ be \emph{value
  relations} between $A$ and $A'$ and $A'$ and $A''$ respectively. Then we
  define their composition $cc'$ as follows:

  \begin{itemize}
    \item The predomain relation is the composition of the underlying predomain
    relations $cc'$
    \item The push-pull structure is given by Lemma \ref{lem:push-pull-comp}.
    \item Quasi-left-representability of $cc'$ follows from Lemma
    \ref{lem:representation-comp} and the fact that $c$ and $c'$ are
    quasi-left-representable
    \item Quasi-right-representability of $\li(cc')$ holds by Lemma
    \ref{lem:representation-comp-F-U} and the quasi-right-representability of
    $\li c$ and $\li c'$
  \end{itemize}

  Likewise, let $B$, $B'$ and $B''$ be computation objects and let $d$ and $d'$ be
  \emph{computation relations} between $B$ and $B'$ and $B'$ and $B''$
  respectively. We define their composition $dd'$ as follows:
  
  \begin{itemize}
    \item The error domain relation is the composition of the underlying error
    domain relations $dd'$
    \item The push-pull structure is given by Lemma \ref{lem:push-pull-comp}.
    \item Quasi-right-representability of $dd'$ follows from Lemma
    \ref{lem:representation-comp} and the fact that $d$ and $d'$ are
    quasi-right-representable
    \item Quasi-left-representability of $U(dd')$ holds by Lemma
    \ref{lem:representation-comp-F-U} and the quasi-left-representability of
    $Ud$ and $Ud'$
  \end{itemize}
\end{definition}

We can define the functorial actions of $\li$, $U$, $\times$, and $\arr$ on
value and computation relations as follows:
\begin{definition}[functorial actions on relations]\label{def:functorial-actions-on-relations}
  
  \begin{enumerate}

    \item Let $A$ and $A'$ be value objects and $c : A \rel A'$ a value relation.
          Then we define the computation relation $\li c$ as follows:
      \begin{itemize}
        \item The error domain relation is given by the action of $\li$ on the
        underlying predomain relation of $c$
        \item The push-pull structure is given by Lemma \ref{lem:push-pull-U-F}
        \item Quasi-right-representability of $\li c$ holds because it is part
        of the definition of the value relation $c$
        \item Quasi-left-representability of $U \li c$ follows from Lemma
        \ref{lem:representation-U-F} and the quasi-left-representability of $c$
      \end{itemize}
          
    \item Let $B$ and $B'$ be computation objects and $d : B \rel B'$ a computation relation.
          We define $Ud$ as follows:
          \begin{itemize}
            \item The predomain relation is given by the action of $U$ on the
            error domain relation of $d$
            \item The push-pull structure is given by Lemma
            \ref{lem:push-pull-U-F}
            \item Quasi-left-representability of $Ud$ holds because it is part
            of the definition of the computation relation $d$
            \item Quasi-right-representability of $\li(Ud)$ follows from Lemma
            \ref{lem:representation-U-F} and the quasi-right-representability of
            $d$
          \end{itemize}

    \item Let $c_1 : A_1 \rel A_1'$ and $c_2 : A_2 \rel A_2'$ be value relations.
          We define $c_1 \times c_2$ as follows:
          \begin{itemize}
            \item The relation is given by the action of $\times$ on the
            predomain relations of $c_1$ and $c_2$
            \item The push-pull structure is given by Lemma
            \ref{lem:push-pull-times}
            \item Quasi-left-representability of $c_1 \times c_2$ and
            quasi-right-representabiity of $\li(c_1 \times c_2)$ are given by
            Lemma \ref{lem:representation-product}
          \end{itemize}

    \item Let $c : A \rel A'$ be a value relation and $d : B \rel B'$ be a computation relation.
          We define $c \arr d$ as follows:
          \begin{itemize}
            \item The error domain relation is given by the action of $\arr$ on
            the underlying relations of $c$ and $d$
            \item The push-pull structure is given by Lemma
            \ref{lem:push-pull-arrow}
            \item Quasi-right-representability of $c \arr d$, and
                  quasi-left-representability of $U(c \arr d)$ follow from Lemma
                  \ref{lem:representation-arrow}
          \end{itemize}

  \end{enumerate}
\end{definition}

Lastly, we prove quasi-equivalence of relations involving composition and the
functors $U$, $\li$, $\to$ and $\times$.

\begin{lemma}\label{lem:quasi-order-equiv-functors}
  The following hold:
  \begin{itemize}
    \item $U(d \comp d') \qordeq U(d) \comp U(d')$
    \item $\li(c \comp c') \qordeq \li (c) \comp \li (c')$
    \item $(c \comp c') \to (d \comp d') \qordeq (c \to d) \comp (c' \to d')$
    \item $(c_1 \comp c_1') \times (c_2 \comp c_2') \qordeq (c_1 \times c_2) \comp (c_1'\times c_2')$
  \end{itemize}
\end{lemma}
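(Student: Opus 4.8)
The plan is to reduce all four quasi-equivalences to a single principle established in Lemma~\ref{lem:left-rep-by-same-morphism}: two relations that are quasi-represented by the \emph{same} morphism are quasi-equivalent. The first two bullets require no new work, since they are exactly the content of Lemmas~\ref{lem:Udd-equiv-UdUd'} and~\ref{lem:Fcc-equiv-FcFc'}, so I would simply cite them. Indeed, the proof of Lemma~\ref{lem:Fcc-equiv-FcFc'} already follows the exact template I use for the remaining two cases, exhibiting a common representing morphism and appealing to Lemma~\ref{lem:left-rep-by-same-morphism}.

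For the arrow case, $(c \comp c') \to (d \comp d') \qordeq (c \to d) \comp (c' \to d')$, both sides are \emph{error domain} relations, so I would show they are quasi-\emph{right}-representable by the same error domain morphism. By Lemma~\ref{lem:representation-arrow}(1) the left side is quasi-right-represented by $e_{c \comp c'} \arr p_{d \comp d'}$, and the explicit formulas from Lemma~\ref{lem:representation-comp} give $e_{c \comp c'} = e_{c'} \circ e_c$ and $p_{d \comp d'} = p_d \circ p_{d'}$, so this morphism is $(e_{c'} \circ e_c) \arr (p_d \circ p_{d'})$. On the other side, composition of computation relations (Definition~\ref{def:value-computation-rel-comp}, via Lemma~\ref{lem:representation-comp}(2)) represents $(c \to d) \comp (c' \to d')$ by $p_{c \to d} \circ p_{c' \to d'} = (e_c \arr p_d) \circ (e_{c'} \arr p_{d'})$, again using Lemma~\ref{lem:representation-arrow}(1). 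The two representing morphisms coincide by (strict) bifunctoriality of $\arr$ --- contravariant in the value argument, covariant in the computation argument --- which gives $(e_c \arr p_d) \circ (e_{c'} \arr p_{d'}) = (e_{c'} \circ e_c) \arr (p_d \circ p_{d'})$. With a common projection in hand, Lemma~\ref{lem:left-rep-by-same-morphism} yields the quasi-equivalence.

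The product case is entirely analogous but simpler, since every relation involved is a \emph{value} relation and $\times$ is covariant in both arguments. Here I would show both sides are quasi-\emph{left}-represented by the same morphism: by Lemma~\ref{lem:representation-product}(1) together with Lemma~\ref{lem:representation-comp}, the left side $(c_1 \comp c_1') \times (c_2 \comp c_2')$ is represented by $(e_{c_1'} \circ e_{c_1}) \times (e_{c_2'} \circ e_{c_2})$, while the right side $(c_1 \times c_2) \comp (c_1' \times c_2')$ is represented by $(e_{c_1'} \times e_{c_2'}) \circ (e_{c_1} \times e_{c_2})$; these agree by functoriality of $\times$, and Lemma~\ref{lem:left-rep-by-same-morphism} finishes the case.

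The bookkeeping is routine, but the genuine care is in the arrow case: I must track the contravariance of $\arr$ in its value argument so that the composition order of the embeddings $e_c, e_{c'}$ (which compose as $e_{c'} \circ e_c$) lines up correctly against the projections $p_d, p_{d'}$ (which compose as $p_d \circ p_{d'}$), and I must ensure the representing morphisms are \emph{literally equal} rather than merely bisimilar, since Lemma~\ref{lem:left-rep-by-same-morphism} demands a common morphism. This in turn relies on $\times$ and $\arr$ being strict, pointwise-defined functors, so that the interchange of composition with the functorial action holds on the nose; I would record this strictness explicitly before invoking it.
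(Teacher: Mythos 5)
Your proposal is correct and follows essentially the same route as the paper: the paper's proof likewise cites Lemmas~\ref{lem:Fcc-equiv-FcFc'} and~\ref{lem:Udd-equiv-UdUd'} for the first two items, and disposes of the arrow and product cases by observing that both sides are quasi-right-represented by $e_{c'}e_c \arr p_d p_{d'}$ and quasi-left-represented by $e_{c_1'}e_{c_1}\times e_{c_2'}e_{c_2}$ respectively, then invoking Lemma~\ref{lem:left-rep-by-same-morphism}. Your additional care about strict (bi)functoriality of $\arr$ and $\times$ makes explicit a step the paper leaves implicit, but it is the same argument.
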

\begin{proof}
  (1) and (2) were already shown in Lemmas \ref{lem:Fcc-equiv-FcFc'} and
  \ref{lem:Udd-equiv-UdUd'}. (3) follows from observing that both relations are
  right-represented by the morphism $e_c'e_c \arr p_{d}p_{d'}$ and (4) from the
  fact that both are left-represented by the morphism $e_{c_1'} e_{c_1} \times
  e_{c_2'} e_{c_2}$
\end{proof}


\subsection{Definitions of Error Ordering and Weak Bisimilarity for the Delay Monad}
\label{sec:relations-on-delay-monad}

We define a notion of lock-step error ordering and weak bisimilarity relation
for the coinductive delay monad $\delay(\mathbb{N} + {\mho})$:

The lock-step error ordering is defined coinductively by the following rules:

\begin{mathpar}
  \inferrule*[]
  { }
  {\tnow (\inr\, 1) \ledelay d}

  \inferrule*[]
  {x_1 \le_X x_2}
  {\tnow (\inl\, x_1) \ledelay \tnow (\inl\, x_2)}

  \inferrule*[]
  {d_1 \ledelay d_2}
  {\tlater\, d_1 \ledelay \tlater\, d_2}
\end{mathpar}
And we similarly define by coinduction a ``weak bisimilarity'' relation on
$\delay(\mathbb{N} + {\mho})$. This uses a relation $d \Da x_?$ between
$\delay(\mathbb{N} + {\mho})$ and $\mathbb{N} + {\mho}$ that is defined as $d
\Da n_? := \Sigma_{i \in \mathbb{N}} d = \tlater^i(\tnow\, n_?)$. Then weak
bisimilarity for the delay monad is defined coinductively by the rules
\begin{mathpar}
  \inferrule*[]
  {n_? \bisim_{\mathbb{N} + {\mho}} m_?}
  {\tnow\, n_? \bisimdelay \tnow\, m_? }

  \inferrule*[leftskip=1.5em]
  {d_1 \Da n_? \and n_? \bisim_{\mathbb{N} + {\mho}} m_?}
  {\tlater\, d_1 \bisimdelay \tnow\, m_? }

  \inferrule*[leftskip=1.5em]
  {d_2 \Da m_? \and n_? \bisim_{\mathbb{N} + {\mho}} m_?}
  {\tnow\, n_? \bisimdelay \tlater\, d_2}

  \inferrule*[leftskip=1.5em]
  {d_1 \bisimdelay d_2}
  {\tlater\, d_1 \bisimdelay \tlater\, d_2 }
  %
  %
\end{mathpar}

\end{document}